\newtheorem{thm}{Theorem}
\theoremstyle{remark}
\newtheorem{prop}{Proposition}[subsection]
\newcommand{\q}[1]{\lvert #1 \rangle}
\newcommand{\qd}[1]{\langle #1 \rvert}
\newcommand{\opus}[1]{ #1 }
\DeclareMathSymbol{\shortminus}{\mathbin}{AMSa}{"39}
\newcommand{\NDelta}{\Delta}
\newcommand{\bestguess}[2]{\bar{#1}_{#2}}
\newcommand{\uncertain}[1]{\delta #1}
\begin{document}


\title{Quantum dynamical decoupling by shaking the close environment}

\author{Michiel Burgelman}
\author{Paolo Forni}
\altaffiliation{now at Cambridge Mechatronics Ltd}
\author{Alain Sarlette}
\altaffiliation[Also at ]{Dept.\ of Electronics and Information Systems, Ghent University, Belgium.}
\affiliation{QUANTIC Team, INRIA Paris, 2 Rue Simone Iff, 75012 Paris, France}%

\begin{abstract}
Quantum dynamical decoupling is a procedure to cancel the effective coupling between two systems by applying sequences of fast actuations, under which the coupling Hamiltonian averages out to leading order(s). One of its prominent uses is to drive a target system in such a way as to decouple it from a less protected one. The present manuscript investigates the dual strategy: acting on a noisy ``environment'' subsystem such as to decouple it from a target system. The potential advantages are that actions on the environment commute with system operations, and that imprecisions in the decoupling actuation are harmless to the target. We consider two versions of environment-side decoupling:
adding an imprecise Hamiltonian drive which stirs the environment components; and, increasing the decoherence rates on the environment. The latter can be viewed as driving the environment with pure noise and our conclusions establish how, maybe counterintuitively, isolating the environment from noise sources as much as possible is often not the best option. We explicitly analyze the induced decoherence on the target system and establish how it is influenced by the parameters in both cases. The analysis combines Lindbladian derivation, adiabatic elimination, and Floquet modeling in a way that may be of independent interest.
\end{abstract}

\maketitle

\section{INTRODUCTION}

Several experimental realizations for quantum hardware encounter the situation where a target system is directly coupled to a finite-dimensional ``environment'' system whose decoherence is identified as the main source of induced decoherence on the target. One example of such environment systems are so-called TLS (two-level system) defects in the oxide layer of superconducting Josephson junctions, which decohere typically through phonon channels and are a main mechanism inducing decoherence of superconducting qudits~\cite{Lisenfeld2016,PhysRevB.92.035442}. Another such identified environment would 
be spurious box modes that show some residual coupling to the target modes in microwave resonators. Similar spurious degrees of freedom are likely present in atomic systems.

The idea of Quantum Dynamical Decoupling (QDD, see~\cite{DDofOpenQsystems} and a large set of follow-up work) is to reduce the effective coupling between two quantum systems by using tailored control actions at a faster timescale than the Hamiltonian coupling. Starting from this idea, the present paper proposes to reduce induced decoherence on the target system by applying actions, in a very broad sense, on the \emph{environment side}. The potential advantages are that those actions need not be particularly precise, and that they commute with any system operations one may want to do. In fact, we compute how even adding as noisy dynamics as decoherence channels on the environment, can decrease the induced dissipation on the target system.

\enlargethispage{2.2\baselineskip}
Environment-actuated decoupling also opens the door to refined contributions on analyzing the decoherence induced on the target system. The timescale separation between the effective inter-system coupling and all the dominant dynamics acting on the environment, allows for treating the induced decoherence experienced by the target in a perturbative manner, through the method of adiabatic elimination\cite{Azouit2017b}. This mathematical approach remains fully compatible with control actions applied to the environment over all ranges of magnitudes. More so, since the goal is to \emph{reduce} the effective coupling between the target system and the environment, the validity of the adiabatic elimination approach actually increases.
Using an extension of the adiabatic elimination formalism (see app.~\ref{sec:ae_floquet}), plus Floquet-Markov-type~\cite{GRIFONI1998229} adjustment of the environment decoherence channels themselves when accounting for ultra-strong driving (see Appendix~\ref{app:lindbladian_derivation}), we calculate the induced decoherence rate on the target when applying coherent drives or further decoherence channels on the environment, paving the way for an optimization of the setting.

No control actions whatsoever can hope to decouple from purely Markovian decoherence; or more concretely, in mathematical terms: adding Hamiltonian
actions on a system does not enable to reduce the effect of a purely Lindbladian dissipation channel on the same system. Therefore, QDD has been
considered to cancel spurious effects in two cases. In the first proposal~\cite{DDofOpenQsystems}, the goal is directly formulated as reducing the
coupling to a spurious finite-dimensional  ``environment'' system. The target and spurious environment are both modeled as Hamiltonian systems.
Control sequences are designed to make the effective Hamiltonian coupling vanish up to a certain order, the successive orders typically being given by
a Magnus expansion~\cite{Blanes2009b,Magnus1954} or an equivalent Hamiltonian averaging technique~\cite{Eckardt_2015}. As a result of the QDD controls, the state of the target system undergoes a fast trajectory and its quantum information is preserved in a so-called toggling frame which must be safely followed. In a second type of approach, it is acknowledged that Lindbladian dissipation models are in fact often an approximation, stemming from a direct interaction with a large bath. Identifying the environment with this large bath in a Hamiltonian model and introducing the QDD drives before making the typical Lindblad approximations (Born-Markov, secular approximations), one obtains that decoupling actions are able to counter low-frequency noise, thus effectively modifying and reducing the Lindbladian decoherence channels on target when QDD controls act faster than the cut-off frequency of the noise-spectrum of the bath~\cite{Szczygielski2015,FonsecaRomero,Fanchini2007a}. The decoherence model in the present work is somehow intermediate to these two viewpoints, as it considers a target system coupled to a small effective environment, which itself undergoes Lindbladian decoherence. The small environment thus captures memory effects in the decoherence of the target, as motivated by physically relevant examples like those mentioned in the first paragraph.

More precisely, we analyze the reduction of induced dissipation with two approaches, taking a TLS as the simplest prototypical environment subsystem.

In section~\ref{sec:coherent_decoupling}, we consider the environment subject to periodic drives. While acting on the
environment comes with the security of not deteriorating the target state directly in the case of control imprecision,
we also cannot expect to control an environment system in a well-calibrated manner. Neither can we expect to have
accurate knowledge of the bare environment Hamiltonian. Using a simple model for both these uncertainties, we show that
for the case of a TLS environment, using sufficient time-scale separation in the applied drive enables efficient QDD
    despite control imprecisions. As a trade-off for requesting a strong time-scale separation, we consider a very simple control signal, consisting of only one harmonic tone. The analysis is performed with a generalization of adiabatic elimination adapted to periodically-driven systems, inspired by the basic Floquet property, and which we believe to be novel. This analysis method also differs from the more standard QDD analysis based on Magnus expansion in a purely Hamiltonian setting. We obtain an explicit Lindblad model for the leading-order induced decoherence on the target. The procedure, explained in Appendix~\ref{sec:ae_floquet}, would allow in principle to obtain further perturbative corrections in powers of the coupling strength.

In section~\ref{sec:noisy_decoupling}, we consider the limit of extremely disorganized QDD actions, by adding decoherence channels instead of Hamiltonians to the environment subsystem. Indeed, increasing the decoherence strength on the environment also decreases its effective coupling with the target system, and the scaling for induced dissipation on the target often turns out to be favorable at higher environment decoherence. Using second-order adiabatic elimination formulas, we analyze the resulting behavior in detail, providing some general results and characterizing the optimal choice for typical settings with a TLS environment.
\enlargethispage{1.0\baselineskip}

\section{MODEL DESCRIPTION}\label{sec:model}

As a main setting throughout this work, we consider a general target system T undergoing Hamiltonian dynamics,
and whose main source of decoherence is an undesired Hamiltonian interaction with an environment E which itself undergoes fast, Lindbladian decoherence.
In a rotating frame around the bare frequencies of both T and E, the general evolution is described by:
\begin{equation}\label{eq:genmodel}
\tfrac{d}{dt}\rho = -i[ H_T + H_E + H_{TE},\, \rho] + \sum_k \kappa_k \mathcal{D}_{L_k}(\rho) \;.
\end{equation}
Here we have introduced the general Lindbladian dissipator
$$\mathcal{D}_{X}(\rho) = X \rho X^\dag - \frac{1}{2} \qty(X^\dag X \rho + \rho X^\dag X).$$
The $L_k$ represent various decoherence channels of E, at respective rates $\kappa_k$.
The Hamiltonians $H_T, H_E$ and $H_{TE}$ respectively act on T, on E, and couple T with E.
The objective is to protect quantum information stored in the target system T.
Standard QDD works by applying well-designed sequences of control Hamiltonians $H_T$.
The present paper rather assumes $H_T=0$ and analyzes how one can decrease the induced decoherence on T, by acting on the environment through $H_E$ on the one hand,
or through addition or modification of the $\kappa_k$ on the other hand.

A prototypical example for E is a set of two-level-systems (TLS's), like defects in the oxide layer of superconducting Josephson junctions~\cite{Lisenfeld2016,PhysRevB.92.035442}.
At the dominating order, we can consider the contribution to the overall induced decoherence of each such TLS individually~\cite{Forni2019}.
In a rotating frame of both the target system and TLS, we consider a general stationary coupling
\begin{equation}\label{eq:coupling_model}
   H_{TE} = g \qty(T_x \otimes \sigma_x + T_y \otimes \sigma_y + T_z \otimes \sigma_z) \; .
\end{equation}
Here, $g$ is a small coupling rate with the dimension of a frequency (units where $\hbar = 1$), $T_x$, $T_y$ and $T_z$ are arbitrary Hermitian operators acting on the target system,
and $\sigma_x, \sigma_y, \sigma_z$ are the Pauli operators on the TLS.

The TLS's themselves are thus assumed poorly protected and quickly dissipate according to a Lindbladian model, as described in \eqref{eq:genmodel}.
When adding coherent drives in Section \ref{sec:coherent_decoupling}, we typically assume the dominating dissipation channels:
\begin{equation}\label{eq:pm_dissipation}
    L_k \in \{ \sigma_-, \; \sigma_+ \} \; ,
\end{equation}
corresponding to loss and excitation in the $\sigma_z$-basis of E.
When adding/tuning dissipation channels in section~\ref{sec:noisy_decoupling},
the environment side is treated purely on the basis of a given set of dissipation operators $L_k$ whose rates $\kappa_k$ may be adjustable in some range.

In this way we mainly consider the Lindbladian dissipation operators $L_k$ as fixed, independently of the mechanisms added to reduce the coupling between T and E.
Since our goal towards QDD is to drive strongly, we also compute corrections to the dissipation on E for the case where
ultra-strong driving has an effect on the dissipation model itself.
For this, in Section~\ref{sec:dissipation_model}, we rederive a modified Lindbladian starting from a model where E interacts with a large bath.

In the remainder of this work, we compute and analyze the decoherence that this setting induces on the target system T.

\section{Coherent decoupling with drives}\label{sec:coherent_decoupling}

In this section, we pursue the strategy of applying coherent QDD controls $H_E$ to a TLS-type and decohering environment, in order to decouple it from a general target system.
The section is organized as follows.
We start in Section \ref{sec:qdd_related} by recalling the concept of QDD more explicitly, including previous work
concerning continuous bounded-strength decoupling drives in particular, applied through $H_T$. Next, we translate the
application of QDD drives to the environment side $H_E$. In section~\ref{sec:qdd_proposal} we propose a continuous QDD
control signal accounting for inevitable control imprecision when acting on the environment. In
section~\ref{sec:reduced_model}, we then calculate an explicit Lindbladian model for the decoherence induced on the
target when applying the QDD controls. This involves an extension of the adiabatic elimination approach to time-periodic
couplings which we summarize in Appendix~\ref{sec:ae_floquet}. We analyze the obtained expressions, highlighting the efficiency of applying environment-side QDD drives.
For further consistency, in section~\ref{sec:dissipation_model}, we re-discuss the dissipation channels on the environment when the QDD drives $H_E$ become significant compared to bare system frequencies.

\subsection{QDD and related work}\label{sec:qdd_related}

Established QDD approaches consist in applying control pulses to the target system T that send its state quickly wandering around its Hilbert space. The explicit objective is that the average effect over one wandering cycle of all relevant coupling operators goes to zero.
The simplest example is the case of a target qubit T with only one coupling term involving $\sigma_z$. In this case,
one can periodically apply $\pi$-pulses around the $\sigma_x$-axis of T, such that it effectively accumulates phase around $\pm \sigma_z$ half of the time each, thus canceling the coupling effect on average if there is no other motion in the meantime. The shorter the period between subsequent pulses, the better T is being decoupled from E. This is the well-known spin echo sequence \cite{Hahn1950}.
The generalization of this idea to general systems with arbitrary stationary couplings was introduced in~\cite{DDofOpenQsystems}, and versions replacing the instantaneous pulses with bounded drives in group-based decoupling schemes were established in~\cite{Chen2006,Viola2003,Khodjasteh2009a,Wocjan2006}.

For the case of a target qubit T, a different type of bounded-drive QDD scheme has been devised, using the
combination of a static field and a simple monochromatic drive~\cite{Fanchini2007a,Fanchini2007,Chaudhry2012}.
Explicitly, their control Hamiltonian to decouple a single qubit takes the form
\begin{equation}\label{eq:mfd_simple}
    H_T(t) = \frac{\omega}{2} \sigma_z + \frac{\omega}{4} \qty(\cos(\omega t) \sigma_x + \sin(\omega t) \sigma_y) .
\end{equation}
Under this drive, the qubit state is made to rotate around the $\sigma_x$-axis in a frame which itself rotates around the $\sigma_z$-axis at double the frequency. Indeed, $H_T(t)$ has been designed to generate the unitary evolution
$$ U_T(t) = e^{- i \frac{\omega}{2} \sigma_z t} e^{- i \frac{\omega}{4} \sigma_x t} $$
of the target qubit in absence of any further dynamics. We can clearly see the composition of two rotations around orthogonal axes in the Bloch sphere.
The effectiveness of this QDD scheme can be analyzed in a frame that eliminates the QDD controls, called the toggling frame.
Indeed, it is easy to verify that the first-order decoupling condition is satisfied~\cite{DDofOpenQsystems},
namely that any coupling operator averages out to $0$ under this unitary evolution:
\begin{equation}
    \frac{\omega}{2 \pi} \int_0^{\frac{2 \pi}{\omega}} U_T^\dag(t) \sigma_a U_T(t) \; dt = 0, \;\;\text{ for }  a \in \{ x,y,z\} \, .
\end{equation}
When this first-order decoupling condition is satisfied, the effect of any coupling between T and E can be made arbitrarily small 
by ramping up $\omega$.
This is proven by identifying the average coupling as the first and leading order of a Magnus expansion of the effective dynamics in powers of $\frac{g}{\omega}$.

Such results are hence typically established by focusing on the Hamiltonian part of the model, i.e.~discarding the $L_k$ in \eqref{eq:genmodel} and showing that the effective coupling between T and E is canceled up to some order(s). In such setting, the QDD treats T and E in a symmetric way, and one could in principle consider applying the QDD drives to either system. The advantages of acting on E rather than on T would be that (i) we minimize the danger of perturbing quantum information with actuation imprecisions and (ii) we can keep applying QDD drives irrespective of the system operations on T.
Indeed, standard QDD acting on T requires specific adaptations when T is also subject to actions operating the quantum information system,
like logical gates \cite{Khodjasteh2009a}. On the downside, of course we can hope to act on E only if it is well identified and of reasonably small dimension, like for instance spurious TLS's \cite{Lisenfeld2016,PhysRevB.92.035442}.
In addition, the situation is not as symmetric between T and E when one explicitly introduces that E is a strongly decohering environment,
i.e.~when introducing the $L_k$ in \eqref{eq:genmodel}. We therefore provide an analysis that explicitly considers the decoupling Hamiltonian and the decoherence operators together.

In the remainder of this section we thus address three main points in which the QDD methodology needs to be extended,
to show how it still works with environment-side driving.
Firstly, we need to include a significant amount of control imprecision into the QDD drives, since a TLS environment cannot be assumed as precisely addressable as the target system.
Secondly, since the fastest timescale is embodied on E, we propose an analysis of the model~\eqref{eq:genmodel} including the decoherence channels $L_k$.
With adiabatic elimination techniques we eliminate the fast subsystem E and directly compute the induced decoherence on $T$, rather than going through the computation of effective couplings with Hamiltonian averaging techniques like the Magnus expansion.
Lastly, the model with dissipation channels $L_k$ acting on E has to be rediscussed under ultra-strong QDD driving,
as this model ultimately stems from interaction of E with further external degrees of freedom in a way that can also be affected by the driving.

\subsection{Double-timescale QDD proposal}\label{sec:qdd_proposal}

The E subsystem, i.e.~the spurious TLS, is not an accurately addressable subsystem. First, we will not assume to know the eigenfrequency $\Omega_E$ of E exactly. To account for this, we split up $\Omega_E$ into its best-guess value $\bestguess{\Omega}{E}$ and an uncertain constant
    deviation $\uncertain{\Omega_E}$:
    $$\Omega_E = \bar{\Omega}_E + \delta \Omega_E.$$
    With this decomposition, the model~\eqref{eq:coupling_model} is defined in a rotating frame
    w.r.t. $\bar{\Omega}_E$, and $H_E$ features a residual unknown detuning:
    $$ H_E(t) = \frac{\delta \Omega_E}{2} \sigma_z + H_c(t).$$
Here, $H_c(t)$ stands for the applied control Hamiltonian.

As a second point of control imperfection, we will not assume that a calibration is carried out for the actual amplitude reaching E upon applying a signal in the lab. Hence for the definition of $H_c$, we introduce the same decomposition for the control parameters into best-guess quantities and unknown deviations thereof. We propose to use a simple continuous signal similar to \eqref{eq:mfd_simple}, meant to cancel the general coupling~\eqref{eq:coupling_model}:
    $$H_c(t) := \frac{\omega_1}{2} \sigma_z + \frac{\omega_2}{2} \qty(\cos(\bestguess{\omega}{1} t) \sigma_x + \sin(\bestguess{\omega}{1} t) \sigma_y),$$
    with
    \begin{align}
        \omega_1 &= \bar{\omega}_1 + \delta \omega_1,\\
        \omega_2 &= \bar{\omega}_2 + \delta \omega_2.
    \end{align}
    Note that the drive frequency $\bar{\omega}_1$ is well-known, whereas the amplitudes of the static field and of the $\sigma_{x,y}$-drive are only roughly known, involving uncertainties $\delta \omega_1$ and $\delta \omega_2$ respectively.
    Defining 
    $$\Delta = \delta \omega_1 + \delta \Omega_E,$$
    the total Hamiltonian can be written as
    \begin{equation}\label{eq:EAD_control}
        H_E(t) := \frac{\NDelta + \bar{\omega}_1}{2} \sigma_z + \frac{\omega_2}{2} \qty(\cos(\bestguess{\omega}{1} t) \sigma_x + \sin(\bestguess{\omega}{1} t) \sigma_y).
    \end{equation}

Although our actual analysis will consider the full model with decoherence channels, we can already take a look at the implications of such control in a purely Hamiltonian setting. 

$\bullet$ The evolution of E under $H_E(t)$ alone can be understood by first moving to a rotating frame w.r.t. $\frac{\bestguess{\omega}{1}}{2} \sigma_z$, yielding a remaining constant Hamiltonian $ \frac{\NDelta}{2} \sigma_z + \frac{\omega_2}{2} \sigma_x$.
In this frame the state will rotate at a speed
$$\Lambda :=\sqrt{\NDelta^{2} + \omega_{2}^{2}},$$
around the axis  $$\sigma_{\alpha x} = \cos(\alpha)  \sigma_x + \sin(\alpha)  \sigma_z,$$
where we have defined
$$ \cos(\alpha) = \frac{\omega_2}{\Lambda}, \quad   \sin(\alpha) = \frac{\NDelta}{\Lambda}.$$
Back in the original frame, the associated propagator thus reads
\begin{equation}\label{eq:control_propagator}
    U_E(t) := e^{-i \bestguess{\omega}{1} \sigma_z t / 2} e^{- i \Lambda \sigma_{\alpha x} t / 2} \; .
\end{equation}
The E subsystem thus undergoes two composite rotations around axes in the Bloch sphere which would be orthogonal in absence of the detuning $\NDelta$.
We see that the presence of $\NDelta$  prevents us from applying exact $\sigma_x$ rotations, as would be required in a continuous-time analog of the spin echo strategy. As the angle is determined by $\NDelta/ \omega_2$, we should favor a large value of $\omega_2$.
Considering $\NDelta$ of possibly the same order as $\bestguess{\omega}{1}$, this would suggest to take $\omega_2 \gg \bestguess{\omega}{1} \gg g$,
 where the latter is the strength of the coupling Hamiltonian \eqref{eq:coupling_model}.

$\bullet$ Next, applying the propagator associated to $H_E(t)$ on the coupling Hamiltonian \eqref{eq:coupling_model}, it is easy to verify that
$U_E^\dagger(t) \sigma_{x,y} U_E(t)$ only involve terms oscillating at frequencies $\pm \bestguess{\omega}{1}$ and $\Lambda \pm \bestguess{\omega}{1}$,
while $$U_E^\dagger(t) \sigma_z U_E(t) =
\sin(\alpha) \sigma_{\alpha x} - \cos(\alpha) (e^{i \Lambda t} \sigma_{\alpha +} + e^{- i \Lambda t} \sigma_{\alpha -}), $$
where $\sigma_{\alpha \pm}$ are lowering and raising operators with respect to the eigenstates of $\sigma_{\alpha x}$.
Having $ \Lambda \gg \bestguess{\omega}{1} \gg g $, we can perform a rotating-wave approximation (RWA) and obtain the non-zero average coupling
\begin{equation}\label{eq:avHam}
g \sin(\alpha) \sigma_{\alpha x} = g \frac{\NDelta}{\omega_2} \sigma_{\alpha x} + g \mathcal{O}\qty(\frac{\NDelta^3}{\omega_2^3}) \; .
\end{equation}
Thus taking $\omega_2 \gg \NDelta \sim \bestguess{\omega}{1}$ in this formula, and $\bestguess{\omega}{1} \gg g$ to justify the RWA, indeed appears to reduce the effective coupling between T and E.

\subsection{Analysis of decoherence on target}\label{sec:reduced_model}

In a rotating frame w.r.t. $\frac{\bestguess{\omega}{1}}{2} \sigma_z$, and defining $T_\pm = T_x \pm i T_y$, the joint evolution of the target and TLS is described by the master equation
\begin{align}
    \tfrac{d}{dt}\rho& = \kappa_- \mathcal{D}_{\mathbb{1}_T \otimes \sigma_-}(\rho) + \kappa_+ \mathcal{D}_{\mathbb{1}_T \otimes \sigma_+}(\rho)\label{eq:model_strong}\\
    &- i \frac{\Lambda}{2} \comm{\mathbb{1}_T \otimes \sigma_{\alpha x}}{\rho}\nonumber\\
    &- ig \comm{T_z \otimes \sigma_z + e^{i \bestguess{\omega}{1} t} T_- \otimes \sigma_+ + e^{- i \bestguess{\omega}{1} t} T_+ \otimes \sigma_-}{\rho},\nonumber
\end{align}
when assuming drive-independent decoherence channels $L_k \in \{\sigma_-, \sigma_+\}$ on E.
In the present section, we analyze the induced decoherence on T, by obtaining explicit formulas for its reduced dynamics thanks to adiabatic elimination of the environment E. For this we rely on a timescale separation as E dissipates with rates $\kappa_k \gg g$ dominating the coupling Hamiltonian. The work of \cite{Azouit2017b} explains how to obtain the reduced dynamics of T as a power expansion in $g / \kappa_k$, considering a stationary coupling Hamiltonian as a perturbation. In appendix~\ref{sec:ae_floquet}, we have derived a general extension of this adiabatic elimination approach for the case where the coupling Hamiltonian is time-periodic. The related formulas could be of independent interest to treat other cases where
first performing a lowest-order RWA, then adiabatically eliminating the fastly decohering
degrees of freedom, does not yield the correct leading-order induced dissipation.

Before moving to the QDD results, we start by summarizing the time-periodic adiabatic elimination extension in the context of our bipartite T-E system.

In the absence of any coupling ($g=0$), (\ref{eq:model_strong}) features an invariant subspace with zero dynamics,
consisting of all the states of the form $\rho_T \otimes \bar{\rho}_E$, for an arbitrary state $\rho_T$ of the target system and where $\bar{\rho}_E$ is the unique steady state of the Lindbladian acting on E.
The remaining degrees of freedom in this subspace can thus be trivially identified with the state space of the target system. Moreover, any state quickly relaxes towards this invariant subspace.
For a non-zero but weak constant coupling $g$, this invariant subspace is slightly perturbed~\cite{Azouit2017b}: there remains an invariant subspace of the same dimension,
in which the dynamics is slow (perturbed eigenvalues of the superoperator),
and where the target subsystem is slightly hybridized with E (perturbed eigenspaces of the superoperator). 

In Appendix~\ref{sec:ae_floquet} we show how for a time-periodic coupling of period $\frac{2 \pi}{\bestguess{\omega}{1}}$ like in (\ref{eq:model_strong}),
we can still identify an invariant subspace --- i.e.~a subspace $\mathcal{M}$ such that $\rho(0) \in \mathcal{M}$ implies $\rho(t) \in \mathcal{M}$ for all $t$ ---
but this subspace moves periodically in time with period $\frac{2 \pi}{\bestguess{\omega}{1}}$.
Moreover, much like in the classical Floquet theorem for periodic linear systems,
the total dynamics on $\mathcal{M}$ can be decomposed into slow (i.e.~order $g$), stationary Markovian dynamics on the one hand,
and a fast periodic motion of the invariant subspace as a whole on the other hand.
The periodic motion of the subspace can be described by a global change of variables, completely agnostic of the actual state or its dynamics.
The slow Markovian dynamics can in turn be parametrized by a state $\rho_s$ living in a space of the same dimension as
T, and thus essentially describes the effective decoherence of T.

This picture leads to the following Ansatz for the solution of~\eqref{eq:model_strong} (and later \eqref{eq:model_ultrastrong}) within the invariant subspace:
\begin{equation}\label{eq:anszats}
    \rho(t) = \mathcal{K}_g(\rho_s(t), t),
\end{equation}
with
\begin{equation}\label{eq:reduced_model}
    \tfrac{d}{dt} \rho_s(t) = \mathcal{L}_{s,g}(\rho_s(t)).
\end{equation}
Here, $\rho_s$ is a state of the same dimension as T to represent its slightly hybridized version; $\mathcal{K}_g(\cdot,t)$ is a $\frac{2 \pi}{\bestguess{\omega}{1}}$-periodic superoperator close to $\rho_s \mapsto \rho_s \otimes \bar{\rho}_E$ defining the embedding of the invariant subspace in the total system space; and $\mathcal{L}_{s,g}$ is a stationary Lindbladian representing the slow Markovian dynamics occurring within the invariant subspace. In order to identify $\mathcal{L}_{s,g}$ and $\mathcal{K}_{g}$,
like in \cite{Azouit2017b}, we write both as a power expansion.
The small expansion parameter is $\varepsilon = \frac{g}{\bestguess{\omega}{1}} \ll 1$, with $\bestguess{\omega}{1}$ the
frequency of the driving as in~\eqref{eq:EAD_control}, and we write
\begin{align}
    \mathcal{K}_{g}(\cdot,t) &:= \sum_{k = 0}^\infty \varepsilon^k \mathcal{K}_k(\cdot, t),\\
    \mathcal{L}_{s,g} &:= \sum_{k = 1}^\infty \varepsilon^k \mathcal{L}_{s,k}.
\end{align}
Substituting this Ansatz into~\eqref{eq:model_strong} and identifying equal powers of $\varepsilon$ then
allows to solve for the unknowns $\mathcal{L}_{s,k}$ and $\mathcal{K}_k$ order by order, as shown in Appendix~\ref{sec:ae_floquet}.
    In line with standard adiabatic elimination, the convergence of the series is ensured provided $\frac{g}{\kappa} \ll 1$,
    with $\kappa$ the typical dissipation rate of E. The validity of the expansion thus depends on the timescale
    separation $\bestguess{\omega}{1} \gg g$ and $\kappa \gg g$. However, we do not have to assume either
    $\bestguess{\omega}{1}$ or $\kappa$ to be larger than the other; in other words, we do not have to perform standard adiabatic elimination with $\kappa$ before averaging over $\bestguess{\omega}{1}$ or conversely.

Equation~\eqref{eq:reduced_model} can rightfully be called a reduced model for the induced decoherence on T,
since we have eliminated both the coupling to the TLS from the description, as well as a fast periodic micromotion given by $\mathcal{K}_{g}(\cdot,t)$.
We observe (see appendix~\ref{reduced_formulas}) that the first-order slow dynamics $\mathcal{L}_{s,1}$ is purely Hamiltonian.
Since Hamiltonian contributions
can by definition be calibrated and do not represent the decoherence we want to study, we will not discuss them here.
The leading-order decoherence process is of second order, represented by $\mathcal{L}_{s,2}$.
The remainder of this section will thus focus on the effectiveness of drives on E in reducing induced dissipation on T,
by examining the dependence of the decoherence operators in $\mathcal{L}_{s,2}$ on the QDD parameters of our proposal \eqref{eq:EAD_control}. A discussion of the Hamiltonian terms in $\mathcal{L}_{s,1}$ and $\mathcal{L}_{s,2}$ including Hamiltonian terms can be found in Appendix~\ref{reduced_formulas}.

\subsubsection{Strong driving}\label{sec:main_strong_driving}

A full derivation of the second-order reduced model corresponding to~\eqref{eq:model_strong},\eqref{eq:reduced_model}
can be found in Appendix~\ref{sec:reduc_strong}. It takes the following form, with some Hamiltonian $H_s$ which we do
not discuss here, and dissipation in operators inherited from the coupling $H_{TE}$:
\begin{align}
    \hspace{-0.1cm} \tfrac{d}{dt} \rho_s & \simeq -i g [H_s,\rho_s]\nonumber\\
    \hspace{-0.1cm} &+ \kappa_{s,z} \mathcal{D}_{T_z}(\rho_s) + \kappa_{s,-} \mathcal{D}_{T_-}(\rho_s) + \kappa_{s,+} \mathcal{D}_{T_+}(\rho_s).\label{eq:id1a}
\end{align}
The decoherence rates $\kappa_{s,z}, \kappa_{s,\pm}$ are given by
\begin{subequations}
    \begin{align}
        \kappa_{s,z} &= - 2 g^2 \mathrm{Re} \qty(\Tr(\sigma_z X_z)), \label{eq:ASn1} \\
        \kappa_{s,{\pm}} &= - 2 g^2 \mathrm{Re} \qty(\Tr(\sigma_{\pm} X_{\mp})),
    \end{align}
\end{subequations}
where $X_z$ and $X_{\mp}$ respectively satisfy the following matrix equations:
\begin{subequations}
    \label{eq:def_kpm}
    \begin{align}
        \qty(\sigma_z - \Tr(\sigma_z \bar{\rho}_E)) \bar{\rho}_E &= - \frac{i}{2} \comm{\omega_2 \sigma_x + \NDelta \sigma_z}{X_z}\label{eq:def_kpm_top}\\
                                                                    &+ \kappa_- \mathcal{D}_{\sigma_-}(X_z) + \kappa_+ \mathcal{D}_{\sigma_+}(X_z), \nonumber\\
        \qty(\sigma_{\mp} - \Tr(\sigma_{\mp} \; \bar{\rho}_E)) \bar{\rho}_E &=- \frac{i}{2} \comm{\omega_2 \sigma_x + \NDelta \sigma_{z}}{X_{\mp}}\label{eq:def_kpm_bottom}\\
                                                                   \pm i \bestguess{\omega}{1} X_{\mp} &+ \kappa_- \mathcal{D}_{\sigma_-}(X_{\mp}) + \kappa_+ \mathcal{D}_{\sigma_+}(X_{\mp})\nonumber.
    \end{align}
\end{subequations}
Here, $\bar{\rho}_E$ is the unique steady state of the Lindbladian acting on E, namely:
\[- \frac{i}{2} \comm{\omega_2 \sigma_x + \NDelta \sigma_z}{\bar{\rho}_E}
+ \kappa_- \mathcal{D}_{\sigma_-}(\bar{\rho}_E) + \kappa_+ \mathcal{D}_{\sigma_+}(\bar{\rho}_E) = 0.\]
A unique solution for $X_{z,\pm}$ is guaranteed by the formalism in Appendix~\ref{sec:ae_floquet}. Given the number of variables in play, expressions
for the dissipation rates are algebraically complicated and computed with the help of a computer algebra system (SymPy~\cite{10.7717/peerj-cs.103}).
As a concrete result of this section, and in line with the double-timescale QDD proposal detailed in section~\ref{sec:qdd_proposal},
we can focus on the limiting case of strong driving, where $\omega_2$ dominates the other parameters.
\begin{thm}\label{thm:first}
    Define $\frac{1}{\Omega_2^k}$ to signify any dimensionless term consisting of the product of $\frac{1}{\omega_2^k}$ with positive powers of the other rates $\bestguess{\omega}{1}, \kappa_{\pm}$ or $\NDelta$ excluding $\omega_2$.
    The decoherence rates defined by~\eqref{eq:def_kpm} display the following asymptotic behavior for large $\omega_2$:
    \begin{subequations}
    \label{eq:id1b}
    \begin{eqnarray}
        \kappa_{s,z} &=& \frac{(\kappa_- + \kappa_+) g^2}{\omega_2^2} + 4 \frac{\NDelta^2}{\omega_2^2} \frac{{g}^2}{\kappa_- + \kappa_+}\\
        &+& \frac{g^2}{\omega_2} \mathcal{O}\qty(\frac{1}{\Omega_2^3}) \nonumber \\
        \kappa_{s,\pm} &=&  \frac{\qty(\kappa_- + \kappa_+) g^{2}}{{(\kappa_- + \kappa_+)}^{2} + 4 \bestguess{\omega}{1}^{2}}\\
        &+& \frac{g^2}{\omega_2} \mathcal{O}\qty(\frac{1}{\Omega_2}) \; . \nonumber
    \end{eqnarray}
    \end{subequations}
\end{thm}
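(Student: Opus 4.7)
The proof reduces to extracting the large-$\omega_2$ asymptotic behaviour of the traces $\mathrm{Tr}(\sigma_z X_z)$ and $\mathrm{Tr}(\sigma_\pm X_\mp)$, where $X_z, X_\mp$ are defined implicitly by the linear matrix equations (\ref{eq:def_kpm}). My plan is to vectorize those equations in the Pauli basis and solve them by a singular perturbation expansion in $\varepsilon := 1/\omega_2$.

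Each equation in (\ref{eq:def_kpm}) takes the form $(\omega_2 L_{-1} + L_0)\,X = R$ on the four-dimensional operator space $\End(\mathbb{C}^2)$, where $L_{-1} := -\frac{i}{2}[\sigma_x,\cdot]$ is the dominant superoperator and $L_0$ gathers the $\NDelta\sigma_z$-commutator, the dissipator $\kappa_-\mathcal{D}_{\sigma_-}+\kappa_+\mathcal{D}_{\sigma_+}$, and (for $X_\mp$) the shift $\pm i\bestguess{\omega}{1}$. The crucial structural fact is that $L_{-1}$ has a two-dimensional kernel $\mathcal{K}=\mathrm{span}\{\mathbb{1},\sigma_x\}$ while on its complement $\mathcal{K}^\perp=\mathrm{span}\{\sigma_y,\sigma_z\}$ it acts as a rotation with eigenvalues $\pm i$. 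Consequently the resolvent of $\omega_2 L_{-1}+L_0$ is of size $1/\omega_2$ on $\mathcal{K}^\perp$ and of size $1$ on $\mathcal{K}$.

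The steps are then, first, to solve the steady-state equation for $\bar{\rho}_E$ to the required orders in $\varepsilon$, yielding $\bar{\rho}_E = \mathbb{1}/2 + O(\varepsilon)$ with subleading Bloch components driven by the interplay of the dissipator and $L_{-1}$. Second, to plug this into the right-hand sides of (\ref{eq:def_kpm}) and expand $X_z$ and $X_\mp$ in powers of $\varepsilon$, matching the $P_\mathcal{K}$ and $P_{\mathcal{K}^\perp}$ projections separately at each order. For $X_\mp$, the right-hand side $\sim(\sigma_x\mp i\sigma_y)/4$ has a non-vanishing $\mathcal{K}$-component and therefore drives an $O(1)$ response in $\mathcal{K}$, fixed by the scalar equation $(\pm i\bestguess{\omega}{1}-(\kappa_-+\kappa_+)/2)\beta = 1/4$ for the $\sigma_x$-coefficient $\beta$; taking $\mathrm{Tr}(\sigma_\pm X_\mp)=\beta$ at this order already reproduces the stated expression for $\kappa_{s,\pm}$. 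For $X_z$, by contrast, the source $\sim\sigma_z/2$ lies purely in $\mathcal{K}^\perp$, so $X_z$ starts at order $\varepsilon$ in the $\sigma_y$-direction and $\mathrm{Tr}(\sigma_z X_z)$ vanishes at that order; one must push the expansion to order $\varepsilon^2$, at which the $\sigma_z$-coefficient is populated via chains involving the $\NDelta\sigma_z$-commutator shuffling $\sigma_y\leftrightarrow\sigma_x$, and the dissipator acting on the $\sigma_x$-component produced at order $\varepsilon$.

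The main obstacle is bookkeeping the $\mathcal{K}\leftrightarrow\mathcal{K}^\perp$ mixing that appears at every order: neither the dissipator (for instance $\mathcal{D}_{\sigma_\pm}(\mathbb{1})=\pm\sigma_z$ and $\mathcal{D}_{\sigma_\pm}(\sigma_x)=-\sigma_x/2$) nor the $\NDelta\sigma_z$-commutator leaves $\mathcal{K}$ invariant, so the two sectors couple repeatedly. Recovering both contributions to $\kappa_{s,z}$, namely $(\kappa_-+\kappa_+)g^2/\omega_2^2$ and $4\NDelta^2 g^2/(\omega_2^2(\kappa_-+\kappa_+))$, requires carefully tracking every surviving path from the source $\sigma_z/2$ through intermediate $\mathcal{K}$-components back onto the $\sigma_z$-axis at second order; the authors' SymPy calculation provides a natural cross-check. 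Once this is done, the stated asymptotics follow by taking real parts of the traces, with the remainders $\frac{g^2}{\omega_2}\mathcal{O}(1/\Omega_2^3)$ and $\frac{g^2}{\omega_2}\mathcal{O}(1/\Omega_2)$ absorbing the truncation error.
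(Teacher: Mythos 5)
Your plan is correct and, at heart, it is the same computation the paper performs: both solve the linear matrix equations \eqref{eq:def_kpm} in the Pauli basis and then expand in $1/\omega_2$. The difference is organizational. The paper (Appendix~\ref{sec:reduc_strong}) inverts $\mathcal{L}_0$ and $\mathcal{L}_0 \mp i\bestguess{\omega}{1}$ \emph{exactly} as $3\times 3$ matrices in the Pauli basis (with SymPy), applies them to $\bar{\sigma}_z\bar{\rho}_E$ and $\bar{\sigma}_\pm\bar{\rho}_E$, and only afterwards extracts the leading behaviour for large $\omega_2$; you instead expand from the start, exploiting that the dominant superoperator $-\tfrac{i}{2}\omega_2[\sigma_x,\cdot]$ annihilates $\mathrm{span}\{\mathbb{1},\sigma_x\}$ and is of size $\omega_2$ on $\mathrm{span}\{\sigma_y,\sigma_z\}$, and solve the projected equations order by order. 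Your leading-order bookkeeping checks out: the $\sigma_x$-projection for $X_\mp$ indeed gives $(\pm i\bestguess{\omega}{1}-\tfrac{\kappa_-+\kappa_+}{2})\beta=\tfrac14$ and hence the stated $\kappa_{s,\pm}$, and for $X_z$ the first-order term $\tfrac{1}{2\omega_2}\sigma_y$, fed through the $\NDelta$-commutator and the dissipators (together with the $O(1/\omega_2)$ coherence $\xi_\infty$ of $\bar{\rho}_E$ entering the right-hand side, which you must keep even though it only shifts the imaginary, i.e.\ Hamiltonian, part at this order), produces a $\sigma_z$-component at order $1/\omega_2^2$ whose real part reproduces $(\kappa_-+\kappa_+)g^2/\omega_2^2+4\NDelta^2 g^2/\big((\kappa_-+\kappa_+)\omega_2^2\big)$, in agreement with \eqref{eq:exact_A_z_bar}. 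What your route buys is a hand-checkable derivation with clear structure; what the paper's route buys is exact intermediate formulas, which it reuses later (e.g.\ \eqref{eq:k_sz_exact} for Theorem~\ref{thm:cold_optimization}).

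The one point you have not covered is the sharpness of the remainders. Truncating your expansion right after the computed orders only yields relative errors of order $1/\omega_2$, whereas the theorem asserts $\frac{g^2}{\omega_2}\mathcal{O}\qty(\frac{1}{\Omega_2})$ for $\kappa_{s,\pm}$ and $\frac{g^2}{\omega_2}\mathcal{O}\qty(\frac{1}{\Omega_2^3})$ for $\kappa_{s,z}$, i.e.\ the odd powers of $1/\omega_2$ are absent. In the paper this is read off from the exact solutions, which depend on $\omega_2$ only through $\omega_2^2$. In your scheme you should either push the expansion one further order and verify that the real parts of those corrections vanish, or invoke the parity argument: conjugating \eqref{eq:def_kpm} by $\sigma_z$ sends $\sigma_{x,y}\mapsto-\sigma_{x,y}$, leaves the dissipators invariant, and maps the problem at $\omega_2$ to the problem at $-\omega_2$ with $X_z\mapsto X_z$ and $X_\mp\mapsto -X_\mp$, so $\Tr(\sigma_zX_z)$ and $\Tr(\sigma_\pm X_\mp)$ are even rational functions of $\omega_2$ and their large-$\omega_2$ expansions contain only even powers. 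With that addition (and with the second-order $X_z$ bookkeeping actually carried out rather than delegated to a SymPy cross-check), your proof is complete.
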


All these rates vanish in the limit $\bestguess{\omega}{1} \ll \omega_2 \rightarrow \infty$, quantitatively confirming the QDD benefits.
The general form of these expressions can be understood intuitively as follows. The expressions involve the sum
$(\kappa_- +\kappa_+)$ because the strongest drive $\omega_2 \sigma_x$ constantly exchanges the roles of ground and
excited states in E. This also explains why $\kappa_{s,+} \simeq \kappa_{s,-}$. The rates $\kappa_{s,\pm}$ then take the
standard Purcell-type expression resulting from Jaynes-Cummings type coupling under detuning
$\frac{\bestguess{\omega}{1}}{2}$. The main QDD effect here is just the $\bestguess{\omega}{1}$-detuning reducing the
effective coupling between T and E. The first term of $\kappa_{s,z}$ in fact has a similar form, where $\kappa_-,
\kappa_+$ terms don't appear in the denominator because they are dominated by $\omega_2^2$. This is no coincidence,
since the Hamiltonian part is like the usual Jaynes-Cummings coupling; up to exchanging the roles of $\sigma_x$ and
$\sigma_z$.  Indeed, neglecting the detuning $\NDelta$, we are applying a constant drive along the $\sigma_x$ direction
(in the $\bestguess{\omega}{1}$ rotating frame), orthogonal to the coupling in the $\sigma_z$ direction. Those two
contributions would not be present if we were only considering the average coupling as derived in \eqref{eq:avHam}. They thus express the limitations, in presence of $\kappa_{\pm}$, of the RWA performed in Section~\ref{sec:qdd_proposal}. The effect of the average coupling remaining in \eqref{eq:avHam} is captured by the second term of $\kappa_{s,z}$. One can recognize the standard induced dissipation formula of type ``$\tilde{g}^2 / \tilde{\kappa}$'' where $\tilde{g}$ is replaced by the average coupling $g \frac{\NDelta}{\omega_2}$ as derived in Section~\ref{sec:qdd_proposal}.

We recall that, behind these interpretations, purely mathematical derivations of the formulas \eqref{eq:id1a}, \eqref{eq:id1b} are detailed in Appendix~\ref{sec:reduc_strong}.

\subsubsection{Case of ultra-strong driving}\label{sec:dissipation_model}

A Lindbladian dissipation model like Eq.\eqref{eq:genmodel} is an idealization meant to summarize interactions of the
TLS with further external degrees of freedom, e.g.~a large bath involving phonon modes. Therefore, when significantly
modifying the system Hamiltonian, in other words when we choose to add ``ultra-strong'' QDD drives on the TLS, the
dissipation model may have to be revised, depending on the type of bath and noise spectrum behind its derivation. One
might be tempted to design QDD drives to purposefully modify the Lindbladian
itself~\cite{Szczygielski2015,FonsecaRomero,Fanchini2007a}. However, in the context of the present work this is
typically a secondary effect. The present section provides explicit formulas for such bath reconsideration, in order to
check to which point our conclusions of Thm.~\ref{thm:first} remain consistent.

We thus leave aside system T for a while and go back to the lab frame for the TLS system E in order to reconsider its decoherence channels.
We can safely neglect the coupling of E and T at this stage, as it involves a weaker Hamiltonian,
even weakened by the QDD drives, and it would thus only appear at higher orders in any possible modification of the Lindbladian dissipator of E.
We model the TLS relaxation as stemming from an interaction of E with a large bath B that can be assumed memoryless.
For the sake of concreteness, the interaction Hamiltonian is taken to be $\gamma \sigma_x \otimes R$, thus
\begin{equation}\label{eq:asLindbath}
H_{EB} = \frac{\Omega_E}{2} \sigma_z  + \tilde{H}_c(t) + \gamma \sigma_x \otimes R + H_B \; .
\end{equation}
Here R is a Hermitian operator acting on the bath Hilbert space, $\gamma$ is some small positive coupling rate,
$H_B$ is the bare bath Hamiltonian and $\tilde{H}_c(t)$ is the QDD drive, expressed back in the lab frame. For this reason, \eqref{eq:asLindbath} also includes the TLS bare frequency $\Omega_E$.
As is common practice, we can consider a bath of harmonic oscillators, for which the coupling along $\sigma_x$ leads to
a Jaynes-Cummings-type interaction with the different modes; similar conclusions hold for more general couplings and
baths~~\cite{cohenT_book2F,Breuer2007,Qnoise}.
We next summarize the results, while details of their derivation can be found in Appendix~\ref{app:lindbladian_derivation}.

\enlargethispage{\baselineskip}
As a first step in obtaining a Lindbladian model, we perform
the Born-Markov approximation in the interaction frame of the TLS and the bath. This interaction frame must include all the dominant Hamiltonians,
it thus involves a rotating frame w.r.t\ the bath Hamiltonian, but also the toggling frame defined by \eqref{eq:control_propagator}, to include the drives on the TLS part.
Next, we perform a standard secular approximation (RWA), averaging over the frequencies $\pm \bestguess{\Omega}{E}$.
The RWA introduces an error of order $\frac{\kappa^2}{\bestguess{\Omega}{E}}$, where $\kappa$ is the typical dissipation rate obtained in the end.
Since we assume the bare frequency of the TLS to be much larger than the dissipation rate, we can neglect this term.
A final approximation is needed to obtain a Lindbladian model.
There are two possibilities for this, and for any value of $\omega_2$, at least one of them is valid in the context of our QDD protocol.

As a first possible condition,  when the drive amplitude $\omega_2$ is dominated by the bare qubit frequency $\Omega_E$,
the noise spectral density $G$ of the bath (defined in~\eqref{eq:defn_spectral_density}) can typically be considered flat in the ranges $\pm [\Omega_E-\omega_2, \; \Omega_E+\omega_2]$.
The Jaynes-Cummings type coupling assumed in \eqref{eq:asLindbath} then yields stationary dissipators in $\sigma_-$ and
$\sigma_+$, as assumed in Section~\ref{sec:model}.:
\begin{equation*}
    L_k \in \qty{ \sigma_{-},\quad \sigma_{+}\; },
\end{equation*}
with respective rates $\kappa_{\mp} \simeq 2 \gamma^2\, G(\pm \Omega_E)$.

The second possible approximation for obtaining a Lindbladian model is a second RWA, now over frequencies $\pm \Lambda$.
This approximation remains valid as long as $\Lambda$ is much larger than the obtained dissipation rate, to be checked a posteriori.
For our TLS system coupled to the bath, this yields (see appendix) decoherence through the three dissipation operators 
\begin{equation}
    L_k \in \qty{ \sigma_{\alpha x}, \quad \sigma_{\alpha -},\quad \sigma_{\alpha +}\; } ,
\end{equation}
as defined in Section \ref{sec:qdd_proposal}, with respective decoherence rates:
\begin{subequations}
\label{eq:ultrastrong_dissipators}
\begin{align}
    \kappa_{\alpha x} &= \frac{\gamma^2}{2} (G(\bestguess{\Omega}{E} + \bestguess{\omega}{1}) + G(- \bestguess{\Omega}{E} - \bestguess{\omega}{1})) \cos^2(\alpha), \\ \nonumber
    \kappa_{\alpha-} &= \frac{\gamma^2}{2} G(\qty(\bestguess{\Omega}{E} + \bestguess{\omega}{1}) + \Lambda) {(1 + \sin(\alpha))}^2\nonumber\\
                    &+\frac{\gamma^2}{2} G(- \qty(\bestguess{\Omega}{E} + \bestguess{\omega}{1}) + \Lambda) {(1 - \sin(\alpha))}^2, \\ \nonumber
    \kappa_{\alpha+} &= \frac{\gamma^2}{2} G(- \qty(\bestguess{\Omega}{E} + \bestguess{\omega}{1}) - \Lambda) {(1 + \sin(\alpha))}^2 \nonumber \\
                    &+\frac{\gamma^2}{2} G(\qty(\bestguess{\Omega}{E} + \bestguess{\omega}{1}) - \Lambda) {(1 - \sin(\alpha))}^2\label{eq:laatste}  \; .
\end{align}
\end{subequations}

The choice between a model with fixed decoherence operators $L_k \in \{\sigma_- , \, \sigma_+ \}$, or with
drive-corrected ones $L_k \in \{ \sigma_{\alpha x}, \; \sigma_{\alpha -},\; \sigma_{\alpha +} \}$, depends on
whether it is a better approximation to consider $G$ flat on the scale of $\omega_2/\Omega_E$,
or to consider an RWA based on $\Lambda \gg \kappa_{\alpha x}, \kappa_{\alpha-}, \kappa_{\alpha+}$. The former approach leads to an error of order $\kappa \frac{\Lambda}{\Omega_E}$,
whereas the latter leads to an error of order $\frac{\kappa^2}{\Lambda}$.

The two approximations are compatible and commute with one another when both are justified, i.e.~when $\kappa_{\pm, \alpha \pm, \alpha x} \ll \omega_2 \ll \Omega_E$. Indeed, first assuming a locally flat bath spectrum, next transforming the $\sigma_-$ and $\sigma_+$ dissipators to the rotating frame w.r.t.~$\frac{\Lambda}{2} \sigma_{\alpha x}$, and finally performing RWA over frequencies $\pm\Lambda$,
yields exactly the dissipators associated to~\eqref{eq:ultrastrong_dissipators} with $\omega_1$ and $\Lambda \simeq \omega_2$ put to zero in the bath spectrum $G$. In contrast, we can also see that the two approaches do give different results in some situations. For instance, for $\alpha=0$ and $G$ depending on frequencies on the scale of $\omega_2$, the dissipation rates $\kappa_{\alpha \pm}$ along the $\pm 1$ eigenvectors of $\sigma_{\alpha x}=\sigma_x$, thus obtained using the second approximation, would differ (slightly). Such asymmetry cannot be retrieved as an average effect of $\omega_2 \sigma_x$ driving on given $\sigma_{\pm}$ dissipators, as would result from the first type of approximation. Thus in this case, applying the correct (second) type of approximation results in genuine corrections to the Lindbladian. 

In summary, when the first type of approximation is justified, we retrieve the original model and the induced
dissipation of Thm.~\ref{thm:first}. When $\omega_2$ becomes too large (ultra-strong driving) and only the second type of approximation is justified, we must revise the dissipation model. In the rest of this section, we derive formulas for the induced dissipation on T under this revised dissipation model and just considering general, non-vanishing rates $\kappa_{\alpha x,\alpha -,\alpha +}$.

Again in a rotating frame w.r.t. $\frac{\bestguess{\omega}{1}}{2} \sigma_z$, the joint evolution of the target and TLS is thus described by the master equation
\begin{align}
    \tfrac{d}{dt}\rho &=\kappa_{\alpha-} \mathcal{D}_{\mathbb{1}_T \otimes \sigma_{\alpha-}}(\rho)
         + \kappa_{\alpha+} \mathcal{D}_{\mathbb{1}_T \otimes \sigma_{\alpha+}}(\rho)\label{eq:model_ultrastrong}\\
    & - i \frac{\Lambda}{2} \comm{\mathbb{1}_T \otimes \sigma_{\alpha x}}{\rho}
     + \kappa_{\alpha x} \mathcal{D}_{\mathbb{1}_T \otimes \sigma_{\alpha x}}(\rho)\nonumber\\
    &- i g \comm{T_z \otimes \sigma_z + e^{i \bestguess{\omega}{1} t} T_- \otimes \sigma_+ + e^{- i \bestguess{\omega}{1} t} T_+ \otimes \sigma_-}{\rho}. \nonumber
\end{align}
A full derivation of the reduced model corresponding to~\eqref{eq:model_ultrastrong}
can be found in Appendix~\ref{sec:reduc_strong}, including all terms in
$\mathcal{K}_1, \mathcal{L}_{s,1}$ and $\mathcal{L}_{s,2}$.
Again the expressions are algebraically complicated and computed with the help of a computer algebra system (SymPy).
We here report simplified formulas in the limit where $\omega_2$ is the fastest timescale in the joint system.
The leading-order decoherence process contains the same dissipators as in~\eqref{eq:id1a}, thus
\begin{align}\label{eq:ASultrastrongDinddiss}
    \hspace{-0.1cm} \tfrac{d}{dt} \rho_s &\simeq  -i g [H_s,\rho_s]\nonumber\\
    \hspace{-0.1cm} &+  \kappa_{s,z} \mathcal{D}_{T_z}(\rho_s) + \kappa_{s,-} \mathcal{D}_{T_-}(\rho_s)+ \kappa_{s,+} \mathcal{D}_{T_+}(\rho_s).
\end{align}
Our main result consists of the formulas for the dominating order of the decoherence rates.
\begin{thm}
Consider the same notation $\frac{1}{\Omega_2^k}$ as in Thm.~\ref{thm:first}.
The induced decoherence rates associated to \eqref{eq:ASultrastrongDinddiss} for the model \eqref{eq:model_ultrastrong} display the following asymptotic behavior for large $\omega_2$:
\begin{subequations}
\label{eq:id2a}
\begin{eqnarray}
    \kappa_{s,z} &=&  \frac{(\kappa_{\alpha_\Sigma} + 4 \kappa_{\alpha x}) g^2}{\omega_2^2} \nonumber\\
    &+& 2 \frac{\NDelta^2}{\omega_2^2} \frac{g^2 \left(1 - \frac{\kappa_{\alpha_\Delta}^{2}}{\kappa_{\alpha_\Sigma}^{2}} \right)}{\kappa_{\alpha_\Sigma}}
    + \frac{g^2}{\omega_2} \mathcal{O}\qty(\frac{1}{\Omega_2^2}) ,\\ 
    \kappa_{s,\pm} &=& \frac{\kappa_{\alpha_\Sigma} g^{2} \left(1 - \frac{\NDelta^{2}}{\omega_{2}^{2}} \right)
            \left(1 - \frac{\kappa_{\alpha_\Delta}^{2}}{\kappa_{\alpha_\Sigma}^{2}} \right)}{2 \left(\kappa_{\alpha_\Sigma}^{2} + \bestguess{\omega}{1}^{2}\right)}\nonumber\\
    &&+ \frac{g^{2} \left(4 \kappa_{\alpha x} + \kappa_{\alpha_\Sigma}\right)}{4 \omega_{2}^{2}} + \frac{g^2}{\omega_2}\mathcal{O}\qty(\frac{1}{\Omega_2^2}),
\end{eqnarray}
\end{subequations}
with $\kappa_{\alpha \Sigma} = \kappa_{\alpha -} + \kappa_{\alpha +}$ and $\kappa_{\alpha \Delta} = \kappa_{\alpha -}  - \kappa_{\alpha +} $.
\end{thm}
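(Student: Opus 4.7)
The plan is to reuse the time-periodic adiabatic elimination of Appendix~\ref{sec:ae_floquet} that already underlies Thm.~\ref{thm:first}, substituting only the modified E-dissipator. Since the coupling Hamiltonian in \eqref{eq:model_ultrastrong} is identical to the one in \eqref{eq:model_strong}, the same Ansatz \eqref{eq:anszats}--\eqref{eq:reduced_model} applies, the reduced dissipation has the operator content displayed in \eqref{eq:ASultrastrongDinddiss}, and the rates are still given by the formulas \eqref{eq:ASn1}. Only the matrix equations \eqref{eq:def_kpm} defining $X_z, X_\mp$ change: throughout \eqref{eq:def_kpm_top}--\eqref{eq:def_kpm_bottom} the dissipator $\kappa_-\mathcal{D}_{\sigma_-} + \kappa_+\mathcal{D}_{\sigma_+}$ is replaced by $\kappa_{\alpha x}\mathcal{D}_{\sigma_{\alpha x}} + \kappa_{\alpha-}\mathcal{D}_{\sigma_{\alpha-}} + \kappa_{\alpha+}\mathcal{D}_{\sigma_{\alpha+}}$, and $\bar\rho_E$ becomes the steady state of this new E-Lindbladian.

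My first step is to write $\bar\rho_E$ explicitly. In the eigenbasis of $\sigma_{\alpha x}$ the new dissipative part is a standard two-level Bloch model with dephasing rate $\kappa_{\alpha x}$ and competing relaxations $\kappa_{\alpha\pm}$, giving $\bar\rho_E = \tfrac12(\mathbb{1} - (\kappa_{\alpha\Delta}/\kappa_{\alpha\Sigma})\sigma_{\alpha x})$; this commutes with $\Lambda\sigma_{\alpha x}$, so the Hamiltonian part drops out. Next, I expand the unknowns $X_z, X_\mp$ in the Liouville-space basis $\{\mathbb{1}, \sigma_{\alpha x}, \sigma_{\alpha+}, \sigma_{\alpha-}\}$. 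In this basis the commutator with $\omega_2\sigma_x + \NDelta\sigma_z = \Lambda\sigma_{\alpha x}$ is diagonal, the three dissipators act sparsely, and the detuning $\pm i\bar\omega_1$ of \eqref{eq:def_kpm_bottom} merely shifts the spectrum of the $X_\mp$ systems. The right-hand sides are rewritten using the decomposition of $\sigma_z$ and $\sigma_\mp$ in the rotated basis: the $\sigma_{\alpha x}$-component of $\sigma_z$ is $\sin\alpha = \NDelta/\Lambda$ while its $\sigma_{\alpha\pm}$-components carry $\cos\alpha = \omega_2/\Lambda$, and analogously for $\sigma_\mp$. This produces three decoupled $4\times 4$ linear systems, one per right-hand side.

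The last step is the asymptotic expansion. Inverting these systems symbolically with Sympy, using $\Lambda = \sqrt{\omega_2^2+\NDelta^2}$, one extracts $\mathrm{Re}\,\mathrm{Tr}(\sigma_z X_z)$ and $\mathrm{Re}\,\mathrm{Tr}(\sigma_\pm X_\mp)$ and expands in $1/\omega_2$. The two contributions to $\kappa_{s,z}$ then acquire transparent physical meanings: the term $(\kappa_{\alpha\Sigma}+4\kappa_{\alpha x})g^2/\omega_2^2$ is the Purcell-like susceptibility of the drive-dressed E at the frequency of the $T_z\otimes\sigma_z$ coupling, the $\sigma_{\alpha x}$ dephasing contributing to the broadening of the $\Lambda$-resonance (hence $4\kappa_{\alpha x}$), while the $\NDelta^2/\omega_2^2$ term is the standard ``$\tilde g^2/\tilde\kappa$'' formula applied to the residual average coupling $g\NDelta/\omega_2$ of \eqref{eq:avHam}, with denominator the biased-TLS longitudinal susceptibility $\kappa_{\alpha\Sigma}(1 - \kappa_{\alpha\Delta}^2/\kappa_{\alpha\Sigma}^2)$. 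The same analysis for $X_\mp$ yields the leading $\kappa_{s,\pm}$ as a Lorentzian of width $\kappa_{\alpha\Sigma}$ at detuning $\bar\omega_1$, weighted by the same susceptibility factor together with the $\cos^2\alpha = 1-\NDelta^2/\omega_2^2$ factor inherited from the projection of $\sigma_\mp$ on $\sigma_{\alpha\pm}$.

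The main obstacle is the purely algebraic bookkeeping: three dissipators together with the $\bar\omega_1$ shift generate unwieldy rational expressions in which the leading $1/\omega_2^2$ behaviour must be isolated while retaining exact dependence on $\kappa_{\alpha x}, \kappa_{\alpha\pm}, \bar\omega_1, \NDelta$. To avoid combinatorial blow-up I would expand directly inside the linear system, writing $X = X^{(0)} + \omega_2^{-1}X^{(1)} + \cdots$ and solving order by order; uniqueness at each order is ensured by the Fredholm-type property of the E-Lindbladian noted in Appendix~\ref{sec:ae_floquet}, and the traceless-projection structure of the right-hand sides in \eqref{eq:def_kpm}. Internal consistency checks are that the computed $\kappa_{s,z}, \kappa_{s,\pm}$ remain real and non-negative for all admissible parameters, and that the susceptibility factor $1-\kappa_{\alpha\Delta}^2/\kappa_{\alpha\Sigma}^2$ vanishes in the fully-polarized limit $\kappa_{\alpha+}\to 0$ (or $\kappa_{\alpha-}\to 0$), as expected since a frozen TLS cannot mediate decoherence.
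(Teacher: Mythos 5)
Your proposal is correct and follows essentially the same route as the paper's own derivation in Appendix~\ref{sec:calculations_ultra_strong}: the time-periodic adiabatic elimination of Appendix~\ref{sec:ae_floquet} applied to \eqref{eq:model_ultrastrong} with the drive-corrected dissipators, the new steady state $\bar{\rho}_E=\tfrac{1}{2}\qty(\mathbb{1}-\tfrac{\kappa_{\alpha_\Delta}}{\kappa_{\alpha_\Sigma}}\sigma_{\alpha x})$, symbolic inversion of the E-resolvents $\mathcal{L}_0^{-1}$ and ${(\mathcal{L}_0\mp i\bestguess{\omega}{1})}^{-1}$ in a rotated Pauli/Liouville basis (SymPy), and an expansion in $1/\omega_2$ carried to the order needed to capture the $\kappa_{\alpha x}$ contribution in $\kappa_{s,\pm}$. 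Your packaging of the second-order rates as \eqref{eq:ASn1} with the dissipator in \eqref{eq:def_kpm} replaced is exactly what the appendix computation reduces to (the cross term $\overline{\tilde{\mathcal{L}}_1\mathcal{R}\tilde{\mathcal{K}}_1}$ again being purely Hamiltonian), so no genuinely different ingredient is involved.
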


These rates can be understood intuitively in a similar way as for \eqref{eq:id1b}.
The extra factor  $(1 - \frac{\kappa_{\alpha_\Delta}^{2}}{\kappa_{\alpha_\Sigma}^{2}} = 1 - x_{\alpha, \infty}^2)$
accounts for the generally nonzero average value $x_{\alpha, \infty}$ of $\sigma_{\alpha x}$ in the TLS steady state.
A larger $x_{\alpha, \infty}$ reduces the dissipative part at the expense of a deterministic, Hamiltonian term (see
Appendix ~\ref{sec:calculations_ultra_strong}).
In $\kappa_{s,\pm}$, we have now kept a term of order $1/\omega_2^2$ because the dominating contribution of $\kappa_{\alpha x}$ only appears at this order.

Taking into account the modified dissipation model for E thus does affect induced decoherence for T,
with significant changes if $\kappa_{\alpha -}  \gg \kappa_{\alpha +}$ such that $\kappa_{\alpha \Delta} \simeq \kappa_{\alpha \Sigma}$.
However, with a bath model at the origin of \eqref{eq:ultrastrong_dissipators}, this would only happen under very peculiar conditions.
The standard conclusions with a reasonably flat bath noise spectrum, and $\alpha \ll 1$, are not too different from \eqref{eq:id1b}. They quantitatively confirm the QDD benefits under this model too.

\subsubsection{Optimization: cold TLS and reducing $\omega_2$} \label{sec:optimization_lower_omega_2}

The general formulas \eqref{eq:id1b} and \eqref{eq:id2a} quantify how QDD controls containing two drives with amplitudes $\omega_2 \gg \omega_1$ reduce the decoherence induced on T under general conditions. They can guide parameter choices in particular situations, as long as we assume large $\omega_2$. Having large $\bestguess{\omega}{1}$ and $\omega_2$ is always beneficial.

However, this does not mean that driving strongly in both $\bestguess{\omega}{1}$ and $\omega_{2}$ is always the best choice. Indeed, in very particular settings, it may be even better to take some of the drives at their minimal value; in other words, intermediate values of the drives would be the worst case.
Assume for instance the extreme situation of dispersive coupling to a zero-temperature bath, i.e.~$T_x= T_y = \kappa_+ = 0$. Then, in absence of controls
(in fact as long as $\omega_2=0$), the TLS is attracted towards its ground state, and the resulting effect on T would be purely Hamiltonian.  This raises the question of how to choose $\omega_2$ to minimize the $T_z$-decoherence. We next answer this question, as an illustration of how to use our framework for design choices.

We therefore reconsider the exact rate of the $T_z$-decoherence channel at second order adiabatic elimination, valid as
long as $\kappa \gg g$ and $\bestguess{\omega}{1} \gg g$. This is the solution of \eqref{eq:ASn1},\eqref{eq:def_kpm_top}, thus assuming the model \eqref{eq:model_strong}, without considering the limit of large $\omega_2$:
\begin{widetext}
\begin{equation}\label{eq:k_sz_exact}
    \kappa_{s,z} = \frac{2 g^{2} \left(4 \NDelta^{2} + {\qty(\kappa_- + \kappa_+)}^{2}\right) \left( 4 \kappa_+ \kappa_- \left(16 \NDelta^{2} \omega_{2}^{2} + \left(4 \NDelta^{2} + {\qty(\kappa_- + \kappa_+)}^{2}\right)^{2}\right) + 4 {\qty(\kappa_- + \kappa_+)}^{2} \omega_{2}^{2} \left(2 \kappa_-^{2} + 2 \kappa_+^{2} + \omega_{2}^{2}\right)\right)}{{\qty(\kappa_- + \kappa_+)}^{3} \left(4 \NDelta^{2} + {\qty(\kappa_- + \kappa_+)}^{2} + 2 \omega_{2}^{2}\right)^{3}} \; .
\end{equation}
\end{widetext}
The bath temperature is characterized by $n_\textrm{th}$, the mean number of thermal photons, such that $\kappa_- = \kappa_1 (1 + n_\textrm{th})$ and $\kappa_+ = \kappa_1 n_\textrm{th}$.  Straightforward algebraic manipulations of~\eqref{eq:k_sz_exact} allow for an optimization study, which we summarize in the following result.
\begin{thm}\label{thm:cold_optimization}
The induced dissipation rate $\kappa_{s,z}$ as defined in~\eqref{eq:k_sz_exact} shows the following dependence on $\omega_2$:
    \begin{itemize}
    \item If  $\; n_\textrm{th} < \frac{\sqrt{3}}{3} - \frac{1}{2} \simeq 0.077 \; $,  then $\kappa_{s,z}$ displays a single local maximum as a function of $\omega_2$, for any values of $\NDelta$ and $\kappa_1$. The optimal value of $\omega_2$ is either zero or the maximal achievable one, depending on the experimentally achievable bound on the latter. 
    \item If $n_\textrm{th} > \frac{\sqrt{3}}{3} - \frac{1}{2} \simeq 0.077$ and  
    \begin{equation}\label{eq:detuning_threshold}
            \frac{\NDelta^2}{\kappa_1^2} < \tfrac{\left(2 n_\textrm{th} + 1\right)^{2} \left(2 \sqrt{3} \left(2 n_\textrm{th} + 1\right) + \sqrt{12 n_\textrm{th}^{2} + 12 n_\textrm{th} - 1}\right)}{4 \sqrt{12 n_\textrm{th}^{2} + 12 n_\textrm{th} - 1}}\; ,
       \end{equation}
    then $\kappa_{s,z}$ also displays a single local maximum as a function of $\omega_2$, with the same conclusions for its optimization.
    \item If $n_\textrm{th} > \frac{\sqrt{3}}{3} - \frac{1}{2} \simeq 0.077$ and \eqref{eq:detuning_threshold} is not satisfied, then $\kappa_{s,z}$ is monotonically decreasing in $\omega_2$.
    \end{itemize}
\end{thm}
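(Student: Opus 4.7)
The strategy is to analyse $\kappa_{s,z}$ as a rational function of the single variable $u:=\omega_2^2$. Introducing $s:=\kappa_-+\kappa_+$, $p:=\kappa_-\kappa_+$ and $E:=4\Delta^2+s^2$, the expression \eqref{eq:k_sz_exact} rearranges into the compact form $\kappa_{s,z}(u) = c\,N(u)/(E+2u)^3$, with $c>0$ independent of $u$ and $N(u)=\alpha u^2+\beta u+\gamma$ an explicit quadratic in $u$ with $\alpha=4s^2>0$, $\beta=8[s^2(s^2-2p)+8p\Delta^2]$ and $\gamma=4pE^2\ge 0$. The boundary values $\kappa_{s,z}(0)=c\gamma/E^3\ge 0$ (vanishing iff $n_{\mathrm{th}}=0$) and $\kappa_{s,z}(u)\to 0$ as $u\to\infty$ make the critical-point analysis of $\kappa_{s,z}$ decisive.

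A direct differentiation yields $d\kappa_{s,z}/du = c\,Q(u)/(E+2u)^4$ with $Q(u)=(2\alpha u+\beta)(E+2u)-6N(u)=-2\alpha u^2+(2\alpha E-4\beta)u+(\beta E-6\gamma)$, a downward parabola. Hence $Q$ has at most two real roots and $\kappa_{s,z}$ admits at most one local maximum. A routine case analysis in terms of $Q(0)=\beta E-6\gamma$, the vertex $u_\star=(\alpha E-2\beta)/(2\alpha)$ and the discriminant $\mathrm{disc}(Q)$ produces three regimes on $u>0$: (i) if $Q(0)>0$, then $Q$ has one positive and one non-positive root, giving a single local maximum of $\kappa_{s,z}$ with no interior local minimum; (ii) if $Q(0)\le 0$, $u_\star>0$ and $\mathrm{disc}(Q)>0$, then $Q$ has two positive roots, giving an interior local minimum followed by a single local maximum; (iii) otherwise, $Q\le 0$ on $(0,\infty)$ and $\kappa_{s,z}$ is monotonically decreasing. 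Regimes (i) and (ii) correspond to the ``single local maximum'' assertion of the first two cases of the theorem; regime (iii) corresponds to the third.

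It remains to translate the algebraic conditions on $Q$ into the stated thresholds. Substituting $\kappa_-=\kappa_1(1+n_{\mathrm{th}})$, $\kappa_+=\kappa_1 n_{\mathrm{th}}$ gives $s^2=\kappa_1^2(1+2n_{\mathrm{th}})^2$ and $p=\kappa_1^2 n_{\mathrm{th}}(1+n_{\mathrm{th}})$. A short computation writes $\mathrm{disc}(Q)$ as a quadratic polynomial in $\Delta^2$, whose leading coefficient reduces to $1024\,\kappa_1^4\bigl(4-3(1+2n_{\mathrm{th}})^2\bigr)$. This coefficient is positive precisely when $n_{\mathrm{th}}<\sqrt{3}/3-1/2$: in that regime $\mathrm{disc}(Q)$ stays positive for all $\Delta$, and checking $Q(0)$ and $u_\star$ places the system in regime (i) for small $\Delta$ and regime (ii) for large $\Delta$, giving the first case of the theorem. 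For $n_{\mathrm{th}}>\sqrt{3}/3-1/2$ the leading coefficient is negative, so the equation $\mathrm{disc}(Q)=0$, viewed as a quadratic in $\Delta^2$, has a unique positive root; solving and simplifying it --- the lone calculation-heavy step, most cleanly handled with the SymPy routine already used for \eqref{eq:k_sz_exact} --- produces the explicit bound \eqref{eq:detuning_threshold}, in which $\sqrt{12n_{\mathrm{th}}^2+12n_{\mathrm{th}}-1}=\sqrt{3(1+2n_{\mathrm{th}})^2-4}$ is precisely the discriminant-related quantity that becomes real at the first threshold. Below this bound one sits in regime (i) or (ii) (second case of the theorem), above it in regime (iii) (third case). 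The main obstacle is algebraic tractability, which is managed by systematically working in the symmetric combinations $s$, $p$, $E$ and automating the final simplification.
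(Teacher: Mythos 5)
Your route is the same one the paper takes: the paper gives no written proof beyond ``straightforward algebraic manipulations'' of \eqref{eq:k_sz_exact} aided by SymPy, and your reduction is exactly that analysis. The bulk of it checks out: with $u=\omega_2^2$, $S=(\kappa_-+\kappa_+)^2$, the form $\kappa_{s,z}\propto N(u)/(E+2u)^3$ with your $\alpha,\beta,\gamma$ is correct, $Q(u)=-2\alpha u^2+(2\alpha E-4\beta)u+(\beta E-6\gamma)$ is the right numerator of the derivative, the trichotomy (i)--(iii) is sound, and indeed (setting $\kappa_1=1$) $\mathrm{disc}(Q)=64\,[\,16(4-3S)\Delta^4+8S(3S-4)\Delta^2+S^2(9S+4)\,]$, whose leading coefficient has the sign of $4-3(1+2n_\textrm{th})^2$ and whose unique positive root in $\Delta^2$ (for $S>4/3$) is exactly the right-hand side of \eqref{eq:detuning_threshold}. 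Also note that ``disc stays positive for all $\Delta$'' when $S<4/3$ needs one more line you only assert: the discriminant of $\mathrm{disc}(Q)$ in $\Delta^2$ equals $64S^2(4-3S)(-12S)<0$ there.

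The genuine gap is the final case matching. Positivity of $\mathrm{disc}(Q)$ does not by itself put you in regime (i) or (ii): the larger root of the downward parabola $Q$ is positive only if $Q(0)>0$ or the vertex $u_\star>0$. For $n_\textrm{th}<\sqrt{3}/3-1/2$ this can be repaired, since $Q(0)\le 0$ forces $\Delta^2\ge \tfrac{S(5-S)}{4(S-1)}>\tfrac{S(S+2)}{4(2-S)}$ (the latter inequality holds whenever $S<3/2$, hence for $S<4/3$), so $u_\star>0$ and you land in regime (ii). But for larger $n_\textrm{th}$ your sentence ``below this bound one sits in regime (i) or (ii)'' fails on a nonempty parameter set: e.g.\ $S=1.8$ ($n_\textrm{th}\simeq 0.17$) and $\Delta^2=2\kappa_1^2$, which is below the bound $\simeq 2.22\,\kappa_1^2$ of \eqref{eq:detuning_threshold}, gives $Q(0)<0$ and $u_\star<0$, so both roots of $Q$ are negative and $\kappa_{s,z}$ is monotonically decreasing (regime (iii)) despite \eqref{eq:detuning_threshold} being satisfied; once $S>3/2$ the binding transition is the root of $Q(0)=0$, not of $\mathrm{disc}(Q)=0$. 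So the theorem, read as asserting an interior maximum, is not delivered by your argument in that range, and a faithful proof must either add the $Q(0)$/$u_\star$ check or state the convention that the ``local maximum'' may sit at $\omega_2=0$. A second, smaller point: in regime (ii) there is an interior local \emph{minimum} (it really occurs, e.g.\ $n_\textrm{th}=0.05$, $\Delta=3\kappa_1$), so the claim that the optimum lies at $\omega_2=0$ or at the maximal achievable value does not follow from your analysis there and deserves at least an explicit caveat.
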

In the last case, in other words when $\Delta$ is large, ramping up $\omega_2$ is always advantageous. In the first two cases, the value of $\omega_2$ minimizing $\kappa_{s,z}$ will thus depend on how its value at the maximal achievable $\omega_2$ compares to its value at $\omega_2=0$, which reads:
    \begin{equation}\label{eq:k_sz_0}
        \kappa_{s,z}(\omega_2 = 0) = \frac{8 g^{2} n_\textrm{th} \left(n_\textrm{th} + 1\right)}
        {\kappa_{1} \left(8 n_\textrm{th}^{3} + 12 n_\textrm{th}^{2} + 6 n_\textrm{th} + 1\right)} \; .
    \end{equation}
A numerical illustration of the dependence of $\kappa_{s,z}$ on $\omega_2$ and $n_{\textrm{th}}$ is provided on Figure \ref{fig:cold_bath}. 

\begin{figure}
    \centering
        \includegraphics[width=1.0\linewidth]{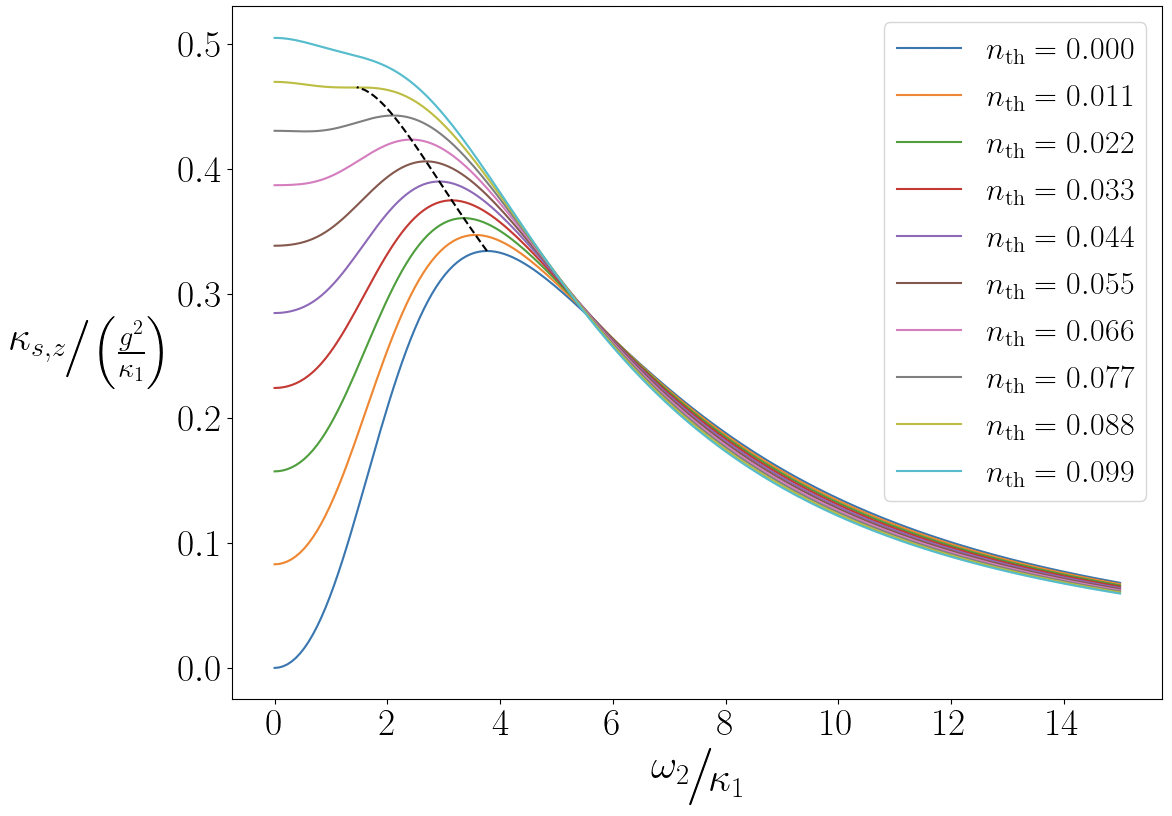}
        \caption{Dependence of $\kappa_{s,z}$ as defined in~\eqref{eq:k_sz_exact} on $\frac{\omega_2}{\kappa_1}$, for various values of  $n_\textrm{th}$ between $0$ and $0.1$ and at fixed $\frac{\NDelta}{\kappa_1} = {2}$. The dashed black line indicates the local maximum in $\omega_2$ when it is present. In the latter cases, induced dissipation $\kappa_{s,z}$ is minimized either at $\omega_2=0$ or at the largest achievable $\omega_2$, depending on its value.}\label{fig:cold_bath}
    \end{figure}

Note that Theorem~\ref{thm:cold_optimization} and Figure \ref{fig:cold_bath} have been established by analyzing a single algebraic formula.
Indeed, the adiabatic elimination method on one hand yields explicit formulas for the induced dissipation rates, preventing the need for solving differential equations for each parameter setting.
On the other hand, our extended formalism as explained  in appendix~\ref{sec:ae_floquet} does not require to select between either the dissipation being the largest time-scale
(standard adiabatic elimination), or drive frequencies $\bestguess{\omega}{1}, \omega_2$ being the largest timescale (domain of averaging techniques like RWA). A single formula thus allows to consistently cover the full range of parameter values.

We can also look at the values of induced decoherence rates $\kappa_{s,\pm}$ for $\omega_2=0$, yielding:
\begin{subequations}
\label{eq:om20rate}
\begin{align}
        \kappa_{s,-} &= \frac{4 \kappa_{-} g^{2}}{{(\kappa_- + \kappa_+)}^2 + 4 {(\bestguess{\omega}{1} + \NDelta)}^2},\\
        \kappa_{s,+} &= \frac{4 \kappa_{+} g^{2}}{{(\kappa_- + \kappa_+)}^2 + 4 {(\bestguess{\omega}{1} + \NDelta)}^2}.
\end{align}
\end{subequations}
As expected when taking $\omega_2=0$, the initial $\kappa_-$ and $\kappa_+$ remain separated, such that $\kappa_{s,+}$ remains small for a cold bath. We can also note that $\omega_2$ was decoupling the effect of $\NDelta\; \sigma_z$, and its absence reintroduces this detuning in addition to $\bestguess{\omega}{1}$ in \eqref{eq:om20rate}. Regarding the QDD effect, in both cases we rely on large $\bestguess{\omega}{1}$ to reduce the induced decoherence. In absence of $\omega_2$ however,
there may be a danger of being counterproductive by hitting $\bestguess{\omega}{1} \approx -\NDelta$.\\

Such calculations illustrate how our formulas could be used to optimize the parameter choices. In Section \ref{sec:noisy_decoupling}, we will instead optimize the relative strengths of \emph{dissipation} parameters in various situations, showing that the optimum can similarly jump from strongest possible to weakest possible ``shaking'' on some channels. For instance, expressions like \eqref{eq:k_sz_0} seem to indicate that even in absence of any drives, the lowest temperature (i.e. value of $n_{\text{th}}$) is not necessarily inducing the lowest $\kappa_{s,z}$; this specific example will be treated in Section~\ref{sssec:ex2:disp}. Before this, we investigate more generally how induced dissipation on T can be decreased by acting on E not only with coherent drives, but also with additional decoherence.

\section{Purely dissipative decoupling}\label{sec:noisy_decoupling}

In Section~\ref{sec:coherent_decoupling} we are arguing that adding not too precise drives on E can decrease the decoherence that it induces on the target system T,  as environment and system can be seen to dynamically decouple. It is tempting to push this idea one step further and ask: what happens if we drive E with a Hamiltonian $\opus{H}$ whose amplitude is pure noise? In fact, this brings us to asking whether we can achieve dynamical decoupling (or whether we can lower the induced decoherence, if you prefer to be more cautious with the naming QDD) by \emph{adding decoherence}, in the form of Lindblad operators, to the environment. Indeed, taking $dW_t$ the increment at time $t$ of a Brownian motion process, we have:
\begin{align*}
    \rho(t + dt) =& \mathbb{E}\left( e^{-i \opus{H} dW_t} \rho(t)  e^{i \opus{H} dW_t} \right)\\
     =& \rho(t)   -i [\opus{H},\rho(t)] \, \mathbb{E}(dW_t)\nonumber \\
     +&     \qty(\, \opus{H} \rho(t) \opus{H}  - \frac{\opus{H}^2}{2} \rho(t) - \rho(t) \frac{\opus{H}^2}{2} \,)  \; \mathbb{E}(dW_t^2) \\
     =& \rho(t) +  \mathcal{D}_H(\rho(t)) \, dt   \; ,
\end{align*}
which is a Lindblad equation with Hermitian decoherence channel $\opus{H}$. Adding such decoherence could be considered a ``legitimate hack'' in the sense that it increases entropy production on the environment. One can also consider adding non-Hermitian decoherence channels, like the qubit relaxation channel $\mathcal{D}_{\opus{\sigma_-}}$. While requesting to add a very strong such cooling on the environment is most probably not experiment-friendly, a more reasonable setting of this type could be: let E be subject to
\begin{equation}\label{Aseq:thermalba}
\mathcal{D}_{\text{total}} =  \kappa_1 (1+n_{th}) \mathcal{D}_{\opus{\sigma_-}} +  \kappa_1 n_{th} \mathcal{D}_{\opus{\sigma_+}}\; ,
\end{equation}
with lower and upper bounds on $\kappa_1$ and on $n_{th}$; which parameter choice minimizes the induced decoherence on T? More generally, we can consider settings where the environment is subject to decoherence
\begin{equation}\label{ASeq:generalkappas}
\mathcal{D}_{\text{total}} = \sum_k \, \kappa_k  \mathcal{D}_{\opus{L}_k} \; ,
\end{equation}
with the rates $\kappa_k$ of the decoherence channels jointly tunable within a given set. The way in which these $\kappa_k$ are tuned in practice can depend on the particular experiment. They might result from (noisy) drives and secular approximations, or for models like   \eqref{Aseq:thermalba} they might guide target values of $\kappa_1$ and $n_{th}$ at the experiment design stage.

We will thus focus on a setting where the joint state of the system and environment are described by a master equation like \eqref{eq:genmodel} with $H_T=0$, i.e.
\begin{equation}\label{eq:Psec:setting}
\tfrac{d}{dt} \rho = -i[\opus{H_{TE}}+\opus{H_E},\rho] + \mathcal{D}_{\text{total}}  \; ,
\end{equation}
where now $\opus{H_E}$ is fixed and dissipation takes the form \eqref{ASeq:generalkappas} with tunable $\kappa_k$ associated to $L_k$ operators acting on E only. Future work may want to add a tunable $\opus{H_E}$ as in Section~\ref{sec:coherent_decoupling}, but for a more efficient presentation we here study both effects separately (see~\cite{Forni2019} for examples on tuning both the $\kappa_k$ and a time-independent drive $H_E$). The question is again: what are the values of the $\kappa_k$ which minimize the induced decoherence on T? In absence of time-dependent drives, this induced decoherence can be computed directly with standard adiabatic elimination formulas~\cite{Azouit2017b}. We will show that the QDD principle carries through at this purely dissipative level. Namely, selecting large values of $\kappa_k$, which strongly shake the environment, can lead to much less induced decoherence than selecting the $\kappa_k$ which make the environment as pure as possible.

This section is organized as follows. After quickly recalling the required formulas, we establish some properties on general systems, then we address some typical settings for a TLS as environment. A preliminary and different presentation of these ideas can be found in the conference proceedings~\cite{Forni2019a}.

\subsection{Adiabatic elimination formulas}\label{ssec:formulas}

We here recall the standard adiabatic elimination setting in absence of drives, to make this self-contained for readers skipping Section \ref{sec:coherent_decoupling}.

The general purpose of adiabatic elimination is to eliminate all fast dissipative dynamics and only retain the degrees of freedom which evolve slowly, and which are thus best protected from decoherence. A standard setting is when a fast dissipating system (here E) is weakly coupled to another system (here T). Under appropriate conditions, the coupling induces a weak hybridization of the two subsystems, in which a subsystem close to T can be viewed as an autonomous system of state $\rho_s$ undergoing slow decoherence and slow Hamiltonian dynamics. Approximation formulas have been developed to compute this hybridization and slow dynamics at various orders \cite{Azouit2017b}.

We focus on the formulas expressing the \emph{dissipation} on $\rho_s$, thus induced on T by its coupling to E, taking the viewpoint that a constant hybridization and Hamiltonian can, by definition, be calibrated. The following procedure gives the dominating terms of the dissipation, provided the environment E alone has a unique steady state and the latter is strongly attractive compared to the coupling rate with T.
\begin{itemize}
\item Denote by $\bar{\rho}_E$ the unique steady state of the environment dynamics, thus taking $\opus{H_{TE}}=0$ in \eqref{eq:Psec:setting}.
\item Writing $\opus{H_{TE} = \sum_k  T_k \otimes E_k}$, for each $E_k$ compute $\tilde{E}_k \bar{\rho}_E = E_k \bar{\rho}_E - \Tr(E_k \bar{\rho}_E) \bar{\rho}_E$.
\item For each $k$, solve for a traceless operator $Q_k$ in  $\;\; -i[H_E, Q_k ] + \mathcal{D}_{\text{total}}(Q_k) = - \tilde{E}_k \bar{\rho}_E \;$.
\item Construct the matrix $X$ with components $X_{k,j} = \Tr(Q_j E_k^\dagger + E_j Q_k^\dagger)$. This matrix is positive semidefinite and the induced dissipation on T, at second order adiabatic elimination, is given by
$$\mathcal{D}_{\text{induced}} = \sum_k \mathcal{D}_{L_k}$$
where $L_k = \sum_j \; \Lambda_{j,k} T_j$ for any decomposition $X = \Lambda \Lambda^\dagger$.

In particular, the induced dissipation at this order of approximation involves just linear combinations of the coupling operators $T_j$ in $H_{TE}$.

In particular, when $H_{TE} = T \otimes E$ and thus $X$ is a scalar, this $X$ just gives the induced dissipation rate on T associated to dissipation operator $T$.
\end{itemize}

We will use the result of this procedure to analyze how the dissipation induced on a target system can be reduced by varying the $\kappa_k$ in \eqref{ASeq:generalkappas}. One should bear in mind that this is only the dominating term, in an approximate formula which is valid when dissipation on the environment is \emph{fast}. Thus, conclusions encouraging us to take minimal dissipation on the environment should be taken with caution. However, we will often encounter the conclusion that more dissipation in the environment is better for the target, and this regime is precisely the one well covered by the adiabatic elimination conditions.

Moreover, when treating the example of a TLS environment in more detail, we will illustrate how to adapt the adiabatic elimination procedure when the dynamics on E alone do not strongly attract it to a unique steady state.

\subsection{Some general properties}\label{ssec:properties}

Before moving to our running example of a TLS environment, we can give some general results on induced dissipation as computed with the above procedure. They are very much in line with the QDD viewpoint that shaking the environment \emph{more} should lead to \emph{less} effect on the target system.

\begin{prop}\label{prop:D_1}
When $\opus{H_E}$ is fixed and of the same order as $\opus{H_{TE}}$ or smaller, multiplying all the $\kappa_k$ by $\alpha>1$ decreases the dissipation induced on the target system by a factor $\alpha$. The same conclusion holds for any $\opus{H_E}$ if it can be multiplied by $\alpha>1$ together with the $\kappa_k$.
\end{prop}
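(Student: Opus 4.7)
The plan is to track how each object produced by the adiabatic elimination recipe of Section~\ref{ssec:formulas} transforms under the rescaling $\kappa_k \to \alpha \kappa_k$ (and, for the second part of the statement, under the simultaneous rescaling $H_E \to \alpha H_E$), and to check that the scaling propagates linearly through the four steps of the recipe all the way to $\mathcal{D}_{\text{induced}}$.

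I would start with the cleaner situation where $H_E$ is rescaled together with the $\kappa_k$. Write $\mathcal{L}_E := -i[H_E,\cdot] + \sum_k \kappa_k \mathcal{D}_{L_k}$ for the full environment generator; under the joint rescaling, $\mathcal{L}_E \to \alpha\, \mathcal{L}_E$. Since $\mathcal{L}_E(\bar{\rho}_E)=0$ is homogeneous, $\bar{\rho}_E$ is unchanged, and so is each $\tilde{E}_k \bar{\rho}_E$. The equation defining $Q_k$ becomes $\alpha\, \mathcal{L}_E(Q_k^{(\alpha)}) = -\tilde{E}_k \bar{\rho}_E$, whose unique traceless solution is $Q_k^{(\alpha)} = Q_k/\alpha$. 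The matrix $X_{k,j} = \Tr(Q_j E_k^\dagger + E_j Q_k^\dagger)$ is linear in the $Q_k$, hence $X^{(\alpha)} = X/\alpha$. Expanding $L_k = \sum_j \Lambda_{j,k} T_j$ with $X = \Lambda\Lambda^\dagger$, a short computation yields the algebraic identity
\begin{equation*}
\mathcal{D}_{\text{induced}}(\rho) \;=\; \sum_{i,j} X_{i,j} \Bigl( T_i \rho T_j^\dagger - \tfrac{1}{2} T_j^\dagger T_i \rho - \tfrac{1}{2} \rho T_j^\dagger T_i \Bigr),
\end{equation*}
which is manifestly linear in $X$. Hence $\mathcal{D}_{\text{induced}}^{(\alpha)} = \mathcal{D}_{\text{induced}}/\alpha$, establishing the second statement.

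For the first statement, $H_E$ is held fixed but of size $O(\|H_{TE}\|)$, whereas the scaled dissipator $\alpha \sum_k \kappa_k \mathcal{D}_{L_k}$ dominates by a factor of order $\alpha\kappa/\|H_{TE}\| \gg 1$ in the regime of validity of the adiabatic elimination. I would split $\mathcal{L}_E^{(\alpha)} = -i[H_E,\cdot] + \alpha\sum_k \kappa_k \mathcal{D}_{L_k}$ into a dominant dissipative part and the small perturbation $-i[H_E,\cdot]$, and expand $\bar{\rho}_E$ and $Q_k$ in the associated small parameter. The leading terms are precisely those obtained by suppressing $H_E$ in the recipe, and therefore satisfy the same $1$ and $1/\alpha$ scalings derived above; the $H_E$-dependent corrections enter $\mathcal{D}_{\text{induced}}$ only at subleading order in $\|H_{TE}\|/\kappa$, beyond the accuracy already retained by the second-order adiabatic elimination and thus consistently discarded.

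The only real subtlety is ensuring that $\mathcal{D}_{\text{induced}}$ depends linearly on $X$; this is handled by the explicit identity displayed above, with no actual need to perform any Kraus-type decomposition $X = \Lambda\Lambda^\dagger$ in the proof. Beyond this, the argument reduces to careful algebraic bookkeeping of how the $1/\alpha$ scaling of $Q_k$ propagates through $X$ and into $\mathcal{D}_{\text{induced}}$, and I do not anticipate any further obstacle.
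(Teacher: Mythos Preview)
Your treatment of the second case (joint rescaling of $H_E$ and the $\kappa_k$) is essentially identical to the paper's: $\mathcal{L}_E\to\alpha\mathcal{L}_E$ leaves $\bar\rho_E$ and the $\tilde E_k$ fixed, sends $Q_k\to Q_k/\alpha$, hence $X\to X/\alpha$. Your explicit identity expressing $\mathcal{D}_{\text{induced}}$ as a linear functional of $X$ is a welcome clarification that the paper leaves implicit.

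For the first case you take a genuinely different route. The paper does \emph{not} keep the small $H_E$ in the fast generator and expand perturbatively; instead it absorbs $H_E$ into the perturbation as an additional coupling term $\mathbb{1}_T\otimes H_E$. The fast generator is then purely $\alpha\,\mathcal{D}_{\text{total}}$, which scales homogeneously, so $\bar\rho_E$, the $\tilde E_k$, and the exact $1/\alpha$ scaling of $Q_k$ follow with no remainder. The extra ``coupling'' channel has $T$-side operator $\mathbb{1}_T$, and in your own identity the corresponding rows and columns of $X$ produce only commutator (Lamb-shift) contributions, never genuine dissipation on T. Thus the paper obtains an \emph{exact} factor $1/\alpha$ at second order, whereas your argument yields $1/\alpha$ up to relative corrections of order $\|H_E\|/(\alpha\kappa)$, which you then have to argue are beyond the retained accuracy. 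Both are valid; the paper's absorption trick is sharper and shorter, while your approach has the merit of keeping the environment generator intact. The paper also offers, as an alternative viewpoint, a time-rescaling argument (rescale time by $\alpha$, observe the coupling shrinks by $\alpha$, and use that second-order induced dissipation is quadratic in the coupling), which you may find a useful sanity check.
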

\begin{proof}
There are two ways to consider $\opus{H_E}$. Under the first condition, we consider it as part of the perturbation, with a coupling
$\tilde{\opus{H}}_{TE} = \opus{H_{TE}} + \mathbb{1}_T \otimes  \opus{H_E}$. Under the second condition, we consider it as part of the fast dynamics, and just scale it together with all other Lindbladian contributions on the environment. 
In the adiabatic elimination procedure outlined above, both cases will not change $\bar{\rho}_E$ nor the $\tilde{E}_k$, yet the $Q_k$ will be $\alpha$ times smaller. Thus the induced dissipation matrix $X$ is $\alpha$ times smaller and so will be the rates deduced from it. Another viewpoint is to make a change of timescale re-establishing the initial dissipation rate on the environment. In this new timescale, the coupling to the target system is decreased by $\alpha$. According to the 2nd order adiabatic elimination formulas, the induced dissipation is \emph{quadratic} in the coupling strength, therefore the lower coupling more than compensates for the change of timescale.\\
\end{proof}

\begin{prop}\label{prop:D_2}
    Consider that $H_E=0$ and that all the decoherence channels $L_k$ on the environment are Hermitian and can be tuned individually. Then, as soon as adiabatic elimination conditions are satisfied, the diagonal elements of the induced decoherence matrix $X$ are all minimized by taking the \emph{maximal value of $\kappa_k$ for each $L_k$ on the environment.}
\end{prop}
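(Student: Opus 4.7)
The plan is to reformulate the diagonal entries of $X$ as a Hilbert--Schmidt quadratic form in the inverse of the dissipator, after which the statement reduces to a one-line application of operator monotonicity. First I would note that with $H_E=0$ and each $L_\ell$ Hermitian, the maximally mixed state $\mathbb{1}/d$ annihilates every $\mathcal{D}_{L_\ell}$, so under the adiabatic-elimination hypothesis it is the unique steady state: $\bar{\rho}_E = \mathbb{1}/d$. This immediately gives $\tilde{E}_k\bar{\rho}_E = \tilde{E}_k/d$ with $\tilde{E}_k = E_k - \Tr(E_k)\mathbb{1}/d$ the traceless component of $E_k$, and the defining equation for $Q_k$ reduces to $\mathcal{M}(Q_k) = \tilde{E}_k/d$ where $\mathcal{M}:=-\mathcal{D}_{\mathrm{total}} = \sum_\ell \kappa_\ell (-\mathcal{D}_{L_\ell})$.

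Next I would check by a direct trace-cyclicity computation that for every Hermitian $L_\ell$ the superoperator $-\mathcal{D}_{L_\ell}$ is self-adjoint and positive semidefinite with respect to the Hilbert--Schmidt inner product $\langle A,B\rangle_{HS}=\Tr(A^\dagger B)$; concretely $\langle A,-\mathcal{D}_{L_\ell}(A)\rangle_{HS}=\tfrac{1}{2}\|[L_\ell,A]\|_{HS}^2$. The sum $\mathcal{M}$ inherits these properties, and by uniqueness of $\bar{\rho}_E$ it is strictly positive, hence invertible, on the traceless subspace. Substituting $Q_k = \mathcal{M}^{-1}(\tilde{E}_k)/d$ into $X_{k,k}=\Tr(Q_k E_k^\dagger)+\Tr(E_k Q_k^\dagger)$ and using tracelessness of $Q_k$ to replace $E_k$ by $\tilde{E}_k$ yields
\begin{equation*}
X_{k,k}=\frac{2}{d}\,\langle \tilde{E}_k,\,\mathcal{M}^{-1}(\tilde{E}_k)\rangle_{HS},
\end{equation*}
which is automatically real and non-negative by self-adjointness of $\mathcal{M}^{-1}$.

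The conclusion then follows from the standard Loewner-order monotonicity of matrix inversion: raising any single $\kappa_\ell$ enlarges $\mathcal{M}$ in the positive-operator order (since $-\mathcal{D}_{L_\ell}\succeq 0$), which shrinks $\mathcal{M}^{-1}$ in the same order and therefore decreases the quadratic form $X_{k,k}$ for every $k$ simultaneously. Taking the largest allowed value of every $\kappa_\ell$ independently thus minimizes all diagonal entries at once. The one step deserving care, rather than being routine algebra, is verifying that $\mathcal{M}$ stays invertible as the parameters are varied --- precisely the adiabatic-elimination hypothesis --- since pathological choices such as all $L_\ell$ commuting with a nontrivial traceless Hermitian operator would collapse the uniqueness of $\bar{\rho}_E$ and invalidate the reduced description altogether.
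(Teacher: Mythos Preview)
Your argument is correct and gives a cleaner, more global statement than the paper's own proof. The paper proceeds by writing an optimization problem for each diagonal element of $X$, treating the adiabatic-elimination relations as constraints with Lagrange multipliers, showing that the necessary optimality conditions force any minimizer to lie at an extreme value of each $\kappa_\ell$, and then checking by a local analysis that the maximal value is the correct extreme. Your route instead exploits that for Hermitian $L_\ell$ and $H_E=0$ the superoperator $\mathcal{M}=-\mathcal{D}_{\mathrm{total}}$ is Hilbert--Schmidt self-adjoint and positive on the traceless subspace, which lets you write $X_{k,k}=\tfrac{2}{d}\langle \tilde{E}_k,\mathcal{M}^{-1}\tilde{E}_k\rangle_{HS}$ explicitly and then apply operator-monotonicity of inversion.

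The practical difference is that your argument yields outright monotonicity of every $X_{k,k}$ in every $\kappa_\ell$, not merely that the optimum is achieved at the boundary; this is strictly stronger and dispenses with any first-order or local analysis. The paper's Lagrange-multiplier approach is more general in spirit (it would adapt to more complicated constraint sets on the $\kappa_\ell$), but in the present rectangular-box setting your linear-algebra argument is both shorter and more informative. Your closing caveat about the invertibility of $\mathcal{M}$ being exactly the unique-steady-state hypothesis is well placed; it is the only point where the hypotheses of the proposition are genuinely used beyond the identification $\bar{\rho}_E=\mathbb{1}/d$.
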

\begin{proof}
    We say ``as soon as'' because if the conditions are satisfied for some set of parameters, then they still hold when we increase the dissipation rates. We summarize the main ideas of the proof, whose full version is available in~\cite{Forni2019a}. First note that the unique steady state $\bar{\rho}_E$ of the environment under Hermitian $L_k$ must be $\propto \text{Identity}$, irrespective of the tuning choice. One can then write an optimization problem for each diagonal element of $X$, expressing the computations of the adiabatic elimination procedure recalled above as constraints with Lagrange multipliers. The necessary optimality conditions then ensure that $X$ can be minimal only at the extreme values of the $\kappa_k$. A local analysis shows that if its value matters, then $\kappa_k$ must be \emph{maximal} to minimize the $X$-element.
\end{proof}

Since we know that $X$ is positive semidefinite, the implications of Proposition~\ref{prop:D_2} on its diagonal have similar consequences for the induced dissipation rates. 

Finally, we can try to give criteria under which the induced dissipation on T can vanish at the limit where some $\kappa_k$ become very large. Note that this is the limit where the adiabatic elimination becomes more and more valid.\\

\begin{prop}\label{prop:D_3}
Consider \eqref{eq:Psec:setting} 
 and let
$$\mathcal{D}_{\text{total}} =  \mathcal{D}_a + \tfrac{1}{\delta} \mathcal{D}_b  \; ,$$
such that, for fixed $\mathcal{D}_a$ and $\mathcal{D}_b$, the $\kappa_k$ remain within their authorized domain as $\delta$ tends to $0$. In other words, $\mathcal{D}_b$ is the part of the dissipation on E whose rates can possibly tend to infinity. 

If $\mathcal{D}_b$ has a unique steady state, then the decoherence induced on the target system vanishes as $\delta$ tends to $0$. Otherwise, the dynamics on E can first be reduced by \emph{first-order} adiabatic elimination of $\mathcal{D}_b$.  The structure for this can be more general than the formulas of Section~\ref{ssec:formulas}, the procedure is recalled in Appendix \ref{app:ae_formulas}.1. The resulting system can in turn be analyzed, either exactly or with another round of adiabatic elimination, to show if there remains induced dissipation or not.
\end{prop}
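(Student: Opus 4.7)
The plan is to treat the two cases separately. For $\mathcal{D}_b$ with a unique steady state, I would prove the quantitative claim by inspecting how the second-order adiabatic elimination formulas of Section~\ref{ssec:formulas} scale with $\delta$; this is essentially the same scaling idea as in Proposition~\ref{prop:D_1}. For $\mathcal{D}_b$ with a degenerate steady-state manifold the second-order formula is singular, and I would instead reduce to a lower-dimensional problem by applying first-order adiabatic elimination of $\mathcal{D}_b$ alone, then iterate.

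\textbf{Case 1: unique steady state.} When $\mathcal{D}_b$ has a unique (and hence attracting) steady state, so does the full environment Lindbladian $-i[H_E,\cdot]+\mathcal{D}_{\text{total}}$ for all sufficiently small $\delta$, and its steady state $\bar{\rho}_E$ converges to the fixed point of $\mathcal{D}_b$ as $\delta\to 0$. Writing $H_{TE} = \sum_k T_k \otimes E_k$, the procedure of Section~\ref{ssec:formulas} requires solving
\begin{equation*}
    -i[H_E,Q_k] + \mathcal{D}_a(Q_k) + \tfrac{1}{\delta}\mathcal{D}_b(Q_k) = -\tilde{E}_k\bar{\rho}_E
\end{equation*}
for each $k$. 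Multiplying through by $\delta$ and letting $\delta\to 0$ shows that $Q_k$ must scale as $O(\delta)$ (since $\mathcal{D}_b$ is invertible on the traceless subspace), so every entry of the induced-dissipation matrix $X$ is $O(\delta)$ and the induced rates vanish linearly in $\delta$. Equivalently, rescaling $\tau = t/\delta$ turns $\mathcal{D}_b$ into the dominant O(1) contraction while $H_{TE}, H_E, \mathcal{D}_a$ drop to O($\delta$), so standard second-order adiabatic elimination in $\tau$-time yields a rate of order $\delta^2$, which converts back to rate $\delta$ in $t$-time. This in particular recovers Proposition~\ref{prop:D_1} as a special case.

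\textbf{Case 2: degenerate kernel.} Now the equation defining $Q_k$ becomes singular, because $\mathcal{D}_b$ has a nontrivial kernel on E and the right-hand side $\tilde{E}_k\bar{\rho}_E$ is only defined once one selects a representative in that kernel. The plan is to invoke the first-order adiabatic elimination procedure of Appendix~\ref{app:ae_formulas}.1, which only assumes that $\mathcal{D}_b$ is contracting onto an invariant subspace (not necessarily a single point). This step projects the total dynamics onto the product of T with the slow manifold of $\mathcal{D}_b$, inducing an effective Lindbladian with Hamiltonian and dissipation obtained by projecting $H_{TE}, H_E, \mathcal{D}_a$ accordingly. The resulting reduced model is of the same form as \eqref{eq:Psec:setting} on a smaller environment, and one can iterate: either the projected environment dynamics admits a unique steady state (in which case Case~1 applies and induced dissipation vanishes), or a further exact analysis or adiabatic elimination step is required to decide.

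\textbf{Main obstacle.} The technical subtlety of Case~1 is uniform validity of the perturbation series as $\delta\to 0$, i.e.\ invertibility of the perturbed fast generator on the complement of $\bar{\rho}_E$; this is guaranteed by the uniqueness of the steady state of $\mathcal{D}_b$. The genuine difficulty is Case~2, where the structural conclusion depends on the geometry of the kernel of $\mathcal{D}_b$ and on how the projected $\mathcal{D}_a$ and $H_E$ act inside it; the Appendix provides the tools to perform the first reduction, but whether iteration terminates with vanishing induced dissipation is not automatic and must be verified for the specific environment at hand.
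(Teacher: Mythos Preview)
Your Case~1 argument is correct and matches the paper's reasoning: it is the same scaling observation, phrased either as $Q_k=O(\delta)$ via invertibility of $\mathcal{D}_b$ on traceless operators, or (as the paper puts it) by taking $\mathcal{D}_b$ alone as the fast dynamics so that all higher-order contributions carry an explicit factor $\delta$.

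Your Case~2, however, contains a genuine error. After first-order elimination of $\mathcal{D}_b$, the remaining dynamics on the reduced environment $\tilde{E}$ comes from the projection of $H_E$ and $\mathcal{D}_a$, whose rates are \emph{fixed} and do not scale with $1/\delta$. Therefore the reduced model is \emph{not} an instance of Case~1: there is no remaining large parameter to send to infinity, and having a unique steady state on $\tilde{E}$ does \emph{not} imply that induced dissipation on T vanishes. In fact the paper reaches the opposite conclusion in this situation: the reduced environment always has a unique steady state $\bar{\rho}_{\tilde{E}}$ (by assumption on the original problem), and one must check whether the projected coupling Hamiltonian survives; if it does and $\bar{\rho}_{\tilde{E}}$ has full rank, induced dissipation is provably \emph{nonzero} in the limit $\delta\to 0$. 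The only ways to get vanishing induced dissipation in Case~2 are that the coupling is annihilated by the projection onto $\ker\mathcal{D}_b$, or that $\bar{\rho}_{\tilde{E}}$ is rank-deficient in a way compatible with the coupling.

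So the iteration you describe is structurally right, but the termination criterion is wrong: once you land on the reduced system you must apply the ordinary second-order formulas of Section~\ref{ssec:formulas} with finite rates, not re-invoke Case~1.
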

\begin{proof}
When $\mathcal{D}_b$ has a unique steady state, we can consider just $\mathcal{D}_b$ as the fast dynamics in order to perform adiabatic elimination of E. The first-order formula gives only Hamiltonian dynamics on T, while the contributions of higher-order adiabatic elimination vanish as $1/\delta$ gets infinite.

When $\mathcal{D}_b$ induces no unique steady state for E, it cannot be used to adiabatically eliminate the whole Hilbert space of E. Instead, we must keep as reduced model a subspace of linear operators on E supporting all the steady states of $\mathcal{D}_b$.

To conclude on the behavior for $\delta \rightarrow 0$, we only need to keep the first-order contributions resulting from adiabatically eliminating $\mathcal{D}_b$. The following possibilities for the reduced system illustrate some possible rapid conclusions.
\begin{itemize}
\item In some cases, the coupling between T and what remains of E after adiabatic elimination of $\mathcal{D}_b$ can vanish; then, there would be no induced dissipation when $1/\delta$ gets infinite.
\item By definition of the original problem, the reduced dynamics on (what remains of) E has a unique steady state $\bar{\rho}_{\tilde{E}}$. In particular, the dissipation there cannot vanish, precluding the possibility of ending up with a purely Hamiltonian joint system on T and $\tilde{\text{E}}$.
\item If the remaining dynamics takes the form of dissipation on what remains of E, with weak Hamiltonian coupling to T, then we can readily apply the second-order adiabatic elimination formulas of Section~\ref{ssec:formulas} to the remaining system. This enables to directly either conclude to the negative (there already remains induced dissipation at second-order), or observe that at least the dominating order of dissipation vanishes (thus according to second-order adiabatic elimination formulas with finite $\mathcal{D}_a$). Note though that it does not seem true that the adiabatic elimination of $\mathcal{D}_b$ would always yield such structure.
\item In particular, in this last setting, if $\bar{\rho}_{\tilde{E}}$ has full rank, then induced dissipation cannot
vanish. If $\bar{\rho}_{\tilde{E}}$ has reduced rank, then the induced dissipation cannot vanish if the Hamiltonian
coupling acts inside the space supported by $\bar{\rho}_{\tilde{E}}$. The proof, worked out in appendix~\ref{sec:proof_D3_last},
follows similar steps as for proving that the dissipation matrix $X$ in Section~\ref{ssec:formulas} is always non-negative (see \cite{Azouit2017b} and related work).
\end{itemize}
\end{proof}

\subsection{Minimizing decoherence induced by a qubit environment (TLS)}

We now focus in more detail on the case of a two-level-system (TLS) environment. We consider Hermitian dissipation channels plus relaxation in a thermal environment:
\begin{eqnarray}
\label{eq:PsecTLS:totdiss}
\mathcal{D}_{\text{total}} &=& \kappa_x  \mathcal{D}_{\opus{\sigma_x}}  +  \kappa_y  \mathcal{D}_{\opus{\sigma_y}}  +  \kappa_z  \mathcal{D}_{\opus{\sigma_z}} \\
\nonumber & +& \kappa_{-}  \mathcal{D}_{\opus{\sigma_-}} + \kappa_+  \mathcal{D}_{\opus{\sigma_+}}  \;.
\end{eqnarray}
We assume that $\kappa_x,\kappa_y,\kappa_z$ are each tunable independently within a given interval e.g.~$[\,\underline{\kappa}_x,\; \overline{\kappa}_x\,]$, while $\kappa_{-} = \kappa_1 \, (1+n_{\text{th}})$, $\kappa_{+} = \kappa_1 \, n_{\text{th}}$,  with typically independent bounds on the coupling strength $\kappa_1$ and temperature characteristic $n_{\text{th}}$.

We will first consider two typical couplings between E and T --- almost-resonant and dispersive --- while assuming that E always has a strongly attractive unique steady state. This happens as soon as two of the $\kappa_x,\kappa_y,\kappa_z$ take significant nonzero values, or $\kappa_1$ takes a significant nonzero value. When this is not the case, we can still apply adiabatic elimination but on a modified state space splitting; we illustrate what this implies for induced dissipation in a third example.

\subsubsection{Dispersive coupling}\label{sssec:ex2:disp}

As a first case, we consider \eqref{eq:PsecTLS:totdiss} in conjunction with the coupling Hamiltonian:
$$ \opus{H}_{TE} = g\, \opus{T_z} \otimes \sigma_z \;.$$
This models the typical situation of dipolar coupling between the target system and a TLS which is far detuned (dispersive coupling limit).

Following the adiabatic elimination procedure, we first compute the steady state of the fast TLS relaxation alone:
$$\bar{\rho}_E = \frac{\mathbb{1}+\bar{z} \opus{\sigma_z}}{2} \;\; \text{ with } \;\; \bar{z} = \tfrac{-\kappa_1}{(1+2 n_{\text{th}}) \kappa_1 + 2(\kappa_x+\kappa_y)} \;.$$
From the coupling operator $\sigma_z$ in E, we then compute
$$\tilde{\sigma}_z \bar{\rho}_E  = \tfrac{1-\bar{z}^2}{2} \sigma_z \;.$$
Next we must solve
$$ \mathcal{D}_{\text{total}}(Q) = - \tilde{\sigma}_z \bar{\rho}_E \; ,$$
which fortunately reduces to a scalar equation on the coefficient of $\sigma_z$. Plugging the solution into the formula for the dissipation matrix gives:
$$X= \tfrac{1-\bar{z}^2}{(1+2 n_{\text{th}}) \kappa_1 + 2(\kappa_x+\kappa_y)} = \tfrac{4 c_+ c_-}{(c_+ + c_-)^3}$$
where $c_- = (1+n_{\text{th}}) \kappa_1 + \kappa_x+\kappa_y$ and $c_+ = n_{\text{th}} \kappa_1 + \kappa_x+\kappa_y$. This is the induced dissipation rate acting on T with the operator $T_z$. We notice that $\kappa_z$ plays no role here and we can make the following observations.
\begin{itemize}
\item One checks that, for any values of the other parameters, this induced dissipation rate decreases when $\kappa_1$ increases. Thus we should fix $\kappa_1$ at its maximal bound. For $\kappa_1$ dominating, the induced dissipation decreases as $1/\kappa_1$.
\item Once the value of $\kappa_1$ is fixed, we can write $\kappa_x+\kappa_y = \kappa_1 n_b$ such that the induced dissipation becomes a function of $n_{\text{eff}} = n_{\text{th}} + n_b$ only, namely
\begin{equation}\label{eq:vltm3}
X = \tfrac{4 n_{\text{eff}} (n_{\text{eff}+1})}{(2 n_{\text{eff}}+1)^3} \;.
\end{equation}
This function increases from $X=0$ at $n_{\text{eff}}=0$ towards a maximum at $n_{\text{eff}}= \tfrac{\sqrt{3}-1}{2} \approx 0.366$, then slowly decreases to $0$ as $n_{\text{eff}}$ tends to infinity. Note that the adiabatic elimination approximation remains well valid near $n_{th}=0$, as long as $\kappa_1$ itself is significantly larger than the coupling Hamiltonian.

Thus, the minimal induced dissipation will be obtained either at the lower or at the upper bound of $n_{\text{eff}}$, depending on their values. In other words, if a very low temperature can be achieved to keep the TLS close to its ground state then this is favorable, but otherwise it is better to just make it as mixing as possible. The judge about ``very low temperature'' is the formula \eqref{eq:vltm3}.

\item Comparing to Propositions: There is nothing special to say regarding Proposition~\ref{prop:D_1}.

Proposition~\ref{prop:D_2} applies rigorously when $\kappa_1=0$; taking $\kappa_1$ very low, we would indeed be in a regime where $n_{\text{eff}} \approx n_b = \tfrac{\kappa_x+\kappa_y}{\kappa_1}$ is large, and we have seen that as soon as $n_b>\tfrac{\sqrt{3}-1}{2}$ it is beneficial to increase it, be it through $\kappa_x$ or $\kappa_y$. On the contrary, if $\kappa_1$ is the dominating dissipation, then increasing $\kappa_x$ or $\kappa_y$ is not necessarily beneficial, as we may be in the regime $n_{\text{eff}} < \tfrac{\sqrt{3}-1}{2}$. This supports the condition that all dissipation operators must be Hermitian for Proposition~\ref{prop:D_2} to apply.

Regarding Proposition~\ref{prop:D_3}, as soon as $\kappa_1$ or two other dissipation channels can be increased indefinitely, we are in the situation where $\mathcal{D}_b$ has a unique steady state, and the induced dissipation goes to zero. There remains the case where only a single Hermitian channel can be increased indefinitely. 
\begin{itemize}
    \item Taking this channel to be $\kappa_z$, the elimination of $\mathcal{D}_b$ yields a reduced state space of the type $p_g \rho_g \otimes \ket{g}\bra{g} + (1-p_g) \rho_e \otimes \ket{e}\bra{e}$ with free parameters $p_g,\rho_g,\rho_e$. The remaining fast dynamics will stabilize the value $p_g = \bar{p}_g$ independently of the coupled target system T. The case $\kappa_x=\kappa_y=n_{th}=0$, thus with $\kappa_1$ stabilizing $\ket{g}\bra{g}$ as $\bar{p}_g=1$, would yield a rank-deficient $\bar{\rho}_E$ for which induced dissipation completely vanishes, even for finite $\kappa_z$; otherwise, induced dissipation will always be finite.
    \item Taking the possibly infinitely strong channel to be $\kappa_x$, the elimination of $\mathcal{D}_b$ yields a reduced space of similar form but with $\ket{+},\ket{-}$ replacing $\ket{g},\ket{e}$. At first-order adiabatic elimination of $\mathcal{D}_b$, the dispersive coupling Hamiltonian cancels and there only remains dissipation pushing $p_+$ towards $1/2$. Hence, driving $\kappa_x$ (or $\kappa_y$) towards infinity is sufficient to drive the induced dissipation on T towards 0. Since this stabilizes the most mixed environment state, this might not have been the most intuitive guess.
\end{itemize}
\end{itemize}

\subsubsection{Almost-resonant coupling (Jaynes-Cummings)}\label{sssec:ex2:res}

As a second case, we consider \eqref{eq:PsecTLS:totdiss} with adjustable $\kappa_k$, in conjunction with the fixed Hamiltonian:
\begin{eqnarray}\label{eq:PsecTLS:rescoup}
\opus{H}_{TE} &=& \tfrac{\Delta}{2} \mathbb{1} \otimes \opus{\sigma_z} + 2g ( \opus{T}_+ \otimes \opus{\sigma_-} + \opus{T}_- \otimes \opus{\sigma_+})  \\ \nonumber
&=&  \tfrac{\Delta}{2} \mathbb{1} \otimes \opus{\sigma_z} + g ( \opus{T_x} \otimes \opus{\sigma_x} + \opus{T_y} \otimes \opus{\sigma_y}) \; ,  
\end{eqnarray}
where $\opus{T_x} = \opus{T}_-+\opus{T}_+$ and $\opus{T_y} = -i(\opus{T}_- -\opus{T}_+)$.

In the adiabatic elimination formulas, the fast TLS dynamics now includes both $\opus{H_E} = \frac{\Delta}{2} \sigma_z$ and $\mathcal{D}_{\text{total}}$. Note that this remains valid when $\Delta$ is not dominating $g$, because we only need fast \emph{dissipation}. However, when $\Delta$ does take a large value, it enables to have a strongly attractive unique TLS steady state even if just $\kappa_x$ or $\kappa_y$ is nonzero. The steady state of the TLS alone is:
$$\bar{\rho}_E = \frac{\mathbb{1}+\bar{z} \opus{\sigma_z}}{2} \;\; \text{ with } \;\; \bar{z} = \tfrac{-\kappa_1}{(1+2 n_{\text{th}}) \kappa_1 + 2(\kappa_x+\kappa_y)} \;.$$
For the coupling operators $\sigma_x$ and $\sigma_y$, we then compute
$$\tilde{\sigma}_x \bar{\rho}_E = \tfrac{\opus{\sigma_x}-i \bar{z} \opus{\sigma_y}}{2}\;\; , \;\; \tilde{\sigma}_y \bar{\rho}_E = \tfrac{\opus{\sigma_y}+ i \bar{z} \opus{\sigma_x}}{2}\;.$$
The solution of
\begin{equation}\label{eq:PsecTLS:HEdyn}
-i[H_E, Q_{k} ] + \mathcal{D}_{\text{total}}(Q_k) = - \tilde{\sigma}_k \bar{\rho}_E
\end{equation}
for $k \in \{x,y\}$ is rather easy in Bloch coordinates, as the dynamics decouple the coefficients of $\sigma_x,\sigma_y$ from those of $\sigma_z,\mathbb{1}$. We can thus write $Q_k = q_{k,x} \sigma_x + q_{k,y} \sigma_y$ and the left side of \eqref{eq:PsecTLS:HEdyn} just becomes
\begin{eqnarray*}
&& \tfrac{\Delta}{2}  (q_{k,x} \sigma_y - q_{k,y} \sigma_x) \\
&& - 2 (\kappa_y+\kappa_z+ (1+2 n_{\text{th}}) \tfrac{\kappa_1}{4}) q_{k,y} \sigma_y \\
&& - 2 (\kappa_x+\kappa_z+ (1+2 n_{\text{th}}) \tfrac{\kappa_1}{4}) q_{k,x} \sigma_x \;.
\end{eqnarray*}
Equating the components in $\sigma_x$ and $\sigma_y$ gives the solutions, from which we construct the dissipation matrix:
$$ X= \frac{1}{\tfrac{\Delta^2}{4}+c_x c_y}\; \left( \begin{array}{cc} 
c_y & i \bar{z} \tfrac{c_y+c_x}{2} \\  -i \bar{z} \tfrac{c_y+c_x}{2} & c_x
\end{array}\right) \; ,
$$
with $c_{x,y} = \kappa_{x,y} + \kappa_z + \tfrac{(1+2 n_{\text{th}}) \kappa_1}{4}$.
The parameters now define not only the induced dissipation rate, but also the associated operators (unitary combinations of $T_x,T_y$). Considering any of them as equally bad for the target system, we typically look at the spectrum of $X$. We can make the following observations.
\begin{itemize}
\item $\Delta$: increasing the detuning between E and T always decreases induced dissipation, down to zero as $\Delta$ gets infinite.
\item Getting induced dissipation to zero at a finite value of $\Delta$, requires to increase \emph{both} $c_x$ and $c_y$ to infinity --- this will be impossible if only $\kappa_x$ or $\kappa_y$ can be made arbitrarily large.

\item $c_x,c_y$, sum of rates: The sum of induced dissipation rates (trace of $X$) as a function of $c_x,c_y$ looks like a saddle around the point $c_x=c_y=\Delta/2$, where induced dissipation is maximal as a function of $c_x+c_y$ and minimal as a function of $|c_x-c_y|$. Which side gives the minimum induced dissipation, will thus depend on the available range of $\kappa_k$.

In particular, for $\Delta=0$, induced dissipation will always decrease when we increase $\kappa_z, \kappa_1, n_{\text{th}}, \kappa_x, \kappa_y$. Thus even if we have the option $n_{\text{th}}=0$ to attract the TLS towards a pure state with only $\sigma_-$, it is better to not do so and rather increase the TLS temperature and other rates. This difference with respect to dispersive coupling of course stems from the fact that a ground state for E, although pure, has no particular advantage under resonant coupling. 

In particular, for large $\Delta$, it appears better to keep low dissipation on the TLS. This can be understood as keeping the TLS frequency well-defined, avoiding any leakage towards the frequencies to which the target system is sensitive. Be careful though that the formulas are only valid if the dissipation on the TLS remains significantly larger than its coupling with the target system. Otherwise, the correct viewpoint would rather be to first take the dispersive coupling limit and then analyze the system as in the previous Section~\ref{sssec:ex2:disp}.

\item $c_x,c_y$, individual rates: The difference between the two induced dissipation rates may be interesting to track when thinking e.g.~of the interest of having biased noise~\cite{Mirrahimi2014a}. At fixed value of the sum, the difference increases when $|c_x-c_y|$ gets larger (thus $\kappa_x$ up and $\kappa_y$ down), or when $\bar{z}^2$ gets larger (thus e.g.~$\kappa_1$ up and $n_{\text{th}}$ down).

In particular, for $\bar{z}=0$, increasing only e.g.~$\kappa_x$ and thus $c_x$, decreases one induced dissipation rate as $\frac{1}{c_x+\Delta^2/(4c_y)}$ (thus to $0$ as $\kappa_x$ gets infinite), but increases the other one as $\frac{1}{c_y+\Delta^2/(4c_x)}$ (or thus at best keeps it constant if $\Delta=0$, with finite limit $1/c_y$ as $\kappa_x$ gets infinite).

\item Comparing to Propositions: The two regimes of Proposition~\ref{prop:D_1} are well visible here. The one where $\Delta$ and all $\kappa_k$ are scaled by $\alpha$ is trivial. The case where $\Delta$ is fixed shows two things: if $\Delta$ is small, then Proposition~\ref{prop:D_1} says that it is better to increase the $\kappa_k$, as we see from the explicit formula here; however, if $\Delta$ is fixed and large, then Proposition~\ref{prop:D_1} does not apply and we see indeed with the present formula that the situation is not as clear. In other words, the saddle at  $c_x=c_y=\Delta/2$ is very consistent with the first case of Proposition~\ref{prop:D_1}.

Proposition~\ref{prop:D_2} applies at least when $\Delta=0$ (and $\kappa_1 = 0$). It predicts that in this setting, increasing any of $\kappa_x$  or $\kappa_y$ can only be beneficial. In the particular case $\bar{z}=0$ mentioned in the previous item, we see how a nonzero $\Delta$ moderates this conclusion.

Regarding Proposition~\ref{prop:D_3}, like for the dispersive coupling, the only nontrivial situation is when only a single Hermitian channel can be increased indefinitely. 
\begin{itemize}
\item Taking this channel to be $\kappa_z$, the elimination of $\mathcal{D}_b$ cancels the Hamiltonian coupling; thus, although convergence on E happens at a finite rate as we need $\mathcal{D}_a$ to finally converge to $\bar{\rho}_E$, the induced dissipation on T goes to $0$ as $\kappa_z$ gets infinite.
\item Taking this channel to be $\kappa_x$, the elimination of $\mathcal{D}_b$ yields a reduced state space of the type 
$$\hspace{1cm} p_+ \rho_+ \otimes \ket{+}\bra{+} \;+\; (1-p_+) \rho_- \otimes \ket{-}\bra{-},$$ 
with free parameters $p_+,\rho_+,\rho_-$. The remaining fast dynamics stabilizes $p_+ = 1/2$ independently of T, while the Hamiltonian coupling reduces to 
$$\hspace{1.5cm} -i g [T_x,\rho_+]\otimes \ket{+}\bra{+} + i g [T_x,\rho_-] \otimes \ket{-}\bra{-} \;.$$
Since $\bar{\rho}_E$ has full rank, the associated $T_x$ dissipation induced according to second-order adiabatic elimination is bound to stay finite, even when $\kappa_x$ tends to infinity.
\end{itemize}
\end{itemize}

\subsubsection{Partly dissipative environment}\label{sssec:ex3:nonuq}

We now address a setting where the fast decoherence of the TLS does not converge to a unique steady state $\bar{\rho}_E$. A typical example would be \eqref{eq:PsecTLS:totdiss} where only $\kappa_z$ is large. If this were the only dynamics on the environment qubit E, then implications for the target system T would depend on the environment's initial state. The intermediate case which we discuss hare, assumes that we also have the unavoidable $\kappa_-,\kappa_+$ dissipation, but with rates comparable to the coupling $g$ between E and T.

Since adiabatic elimination fundamentally works by splitting the fast and slow dynamics, it should thus eliminate only the fast decay of E  under $\kappa_z \mathcal{D}_{\sigma_z}$,
i.e.~the quickly vanishing coherences among $\ket{e}$ and $\ket{g}$ states of E.
The $\kappa_-,\kappa_+$ dissipation on E has to be taken with the slow dynamics, which thus cover both the target system and the populations on $\q{e}\qd{e}$ or $\q{g}\qd{g}$ of the environment E. To illustrate what this can imply for the target system, we again investigate the two typical coupling cases.\\

\paragraph*{Dispersive coupling:} First consider the case of a dispersive coupling:
\begin{equation}\label{eq:3rdexdc}
\tfrac{d}{dt}\rho = \kappa_z \mathcal{D}_{\sigma_z} + \kappa_- \mathcal{D}_{\sigma_-} + \kappa_+ \mathcal{D}_{\sigma_+}  - i g [T_z \otimes \sigma_z, \rho] \; ,
\end{equation}
where we recall that only $\kappa_z$ is supposed to be larger than the other rates.

The set of states of the form $\rho = \rho_g \otimes \q{g}\qd{g} + \rho_e \otimes \q{e}\qd{e}$, corresponding to the set where $\mathcal{D}_{\sigma_z}(\rho)=0$, is in fact exactly invariant under \eqref{eq:3rdexdc}.  The dynamics for the slow variables $\rho_g$ and $\rho_e$ (each positive Hermitian, but only sum of their traces must equal one) write as:
\begin{eqnarray*}
\tfrac{d}{dt} \rho_g  &=& \kappa_- \rho_e - \kappa_+ \rho_g + i g [T_z, \rho_g] \\
\tfrac{d}{dt} \rho_e  &=& \kappa_+ \rho_g - \kappa_- \rho_e - i g  [T_z, \rho_e] \;.
\end{eqnarray*}
Consider an initial separable state between T and E, thus $\rho = \rho_T \otimes (w \q{g}\qd{g} + (1-w) \q{e}\qd{e})$, where the environment populations are at steady-state value $w = \tfrac{\kappa_-}{\kappa_+ + \kappa_-}$. In the eigenbasis of $T_z$, the diagonal elements of $\rho_T$ do not change. However, as the environment jumps between $\q{e}$ and $\q{g}$ implying opposite rotations with $T_z$, the off-diagonal elements of $\rho_T$ will undergo induced decay. More precisely, for each pair of eigenvalues $\lambda_j,\lambda_k$ of $T_z$, the corresponding off-diagonal elements of $\rho_g$ and $\rho_e$ will decay according to the eigenvalues
\begin{eqnarray*}
r_{\pm} &=& -\kappa_1 (n_{\text{th}}+\tfrac{1}{2}) \pm \sqrt{ \kappa_1^2 (n_{\text{th}}+\tfrac{1}{2})^2 - L^2 + i \kappa_1 L} \quad \\
 && \text{with }\;  L = g  (\lambda_j-\lambda_k) \;.
\end{eqnarray*}
\begin{itemize}
\item When $L$ is small compared to $\kappa_1 (n_{\text{th}}+\tfrac{1}{2}) = (\kappa_- + \kappa_+)/2$, we would be in the regime where adiabatic elimination of E still holds. The slowest eigenvalue approximates as
\begin{eqnarray*}
r_- & \simeq & \;  i \tfrac{L}{(2n_{\text{th}}+1)} \\
& & - \tfrac{L^2}{\kappa_1 (2n_{\text{th}}+1)} (\;1-\tfrac{1}{(2 n_{\text{th}}+1)^2} \;) \;.
\end{eqnarray*}
In the second line we thus do find back the induced dissipation rate in $g^2/\kappa$, with an additional factor accounting for the fact that induced dissipation vanishes if the environment is exclusively in $\q{g}$. An optimization like in the previous examples applies, and larger dissipation on E implies lower induced dissipation on T.
\item When $L$ is large compared to $\kappa_1 (n_{\text{th}}+\tfrac{1}{2})$, the eigenvalues boil down to
$$r_{-} \simeq \kappa_- + i q \text{ and } r_{+} \simeq  \kappa_+  -i q$$
for some real parameter $q$. Thus the induced dissipation rates on T are equal to the ones of excitation and loss on E, irrespective of the value of $L$. Contrary to the previous case, it is thus better to keep environment dissipation low.
\end{itemize}
These two cases in fact illustrate the transition from the situation where highest environment dissipation is better (``surprising'' conclusion of adiabatic elimination) to the case where lowest dissipation is better (truly i.e.~not only according to the standard formula for adiabatic elimination of E, whose validity drops). According to both these limit cases, an intermediate rate of dissipation appears worst. Note that we are comparing the environment dissipation to $L = g  (\lambda_j-\lambda_k) $, thus in a single multi-level system the different cases can arise for different off-diagonal elements.\\

\paragraph*{Resonant coupling:} Consider the model
\begin{eqnarray}\label{eq:3rdexrc}
\tfrac{d}{dt}\rho &=& \kappa_z \mathcal{D}_{\sigma_z} + \kappa_- \mathcal{D}_{\sigma_-} + \kappa_+ \mathcal{D}_{\sigma_+}  \\
\nonumber && - i g [T_x \otimes \sigma_x + T_y \otimes \sigma_y,\; \rho] \; ,
\end{eqnarray}
where again only $\kappa_z$ is supposed to be larger than the other rates.

The Hamiltonian coupling makes it difficult to exactly identify the slow invariant subspace from intuition,
so we apply the mathematical adiabatic elimination procedure as recalled in Appendix~\ref{sec:computations_partly_dissipative}.
At order zero, the slow subspace is parametrized as $\rho_s = \rho_g \otimes \q{g}\qd{g} + \rho_e \otimes \q{e}\qd{e}$, with slow variables $\rho_g$ and $\rho_e$ (each positive Hermitian, but only sum of their traces must equal one).

At order one in $\varepsilon = (g,\kappa_{\pm}) / \kappa_z$, the slow dynamics  $\kappa_z \varepsilon \mathcal{L}_{s,1}$ correspond to:
$$\tfrac{d}{dt}\rho_g = \kappa_- \rho_e - \kappa_+ \rho_g \;\; , \;\; \tfrac{d}{dt}\rho_e = \kappa_+ \rho_g - \kappa_- \rho_e \;.$$
The coupling Hamiltonian thus vanishes and the state of T remains untouched in the sense that we have, at this order, $\tfrac{d}{dt} (\rho_g+\rho_e) = 0$.

At order two though, we get the dissipative dynamics:
\begin{eqnarray*}
\tfrac{d}{dt} \rho_s &=& \kappa_z \varepsilon \mathcal{L}_{s,1}(\rho_s) \\
&& + \tfrac{g^2}{\kappa_z}\mathcal{D}_{(T_x+iT_y) \otimes \q{e}\qd{g}}(\rho_s)\\
&& +  \tfrac{g^2}{\kappa_z}\mathcal{D}_{(T_x-iT_y) \otimes \q{g}\qd{e}}(\rho_s) \;.
\end{eqnarray*}
The second-order dissipation combines $(\q{e},\q{g})$-population mixing on E with induced decoherence on T. To get an idea of the latter, we can consider the (quite academic) special case where $T_x=T_y$ and again obtain autonomous dynamics for T, namely:
$$\tfrac{d}{dt} (\rho_g+\rho_e) = \tfrac{2 g^2}{\kappa_z} \mathcal{D}_{T_x}(\rho_g+\rho_e)\;.$$
Thus, unlike for dispersive coupling, the induced dissipation (up to second order included) appears to be independent of the values of $\kappa_-$, $\kappa_+$ as long as they remain small compared to $\kappa_z$.

\enlargethispage{-4.0\baselineskip}

\section{Conclusion}

Protecting a target quantum system from decoherence is a major objective towards quantum technology. Although quantum information loss on a target physical system is often expressed via Markovian decoherence channels, everyone acknowledges that this only approximates more intricate dynamics of a larger system. Adding dynamics at the fast timescales of this larger system may thus allow to change the induced decoherence on target, and ideally reduce it. This is essentially the idea behind $1/f$ noise mitigation methods, revised Floquet-Markov Lindbladians, and spin echo or quantum dynamical decoupling (QDD) techniques, among others.  

In the present paper, we express the not entirely Markovian dissipation on the target system T as a Hamiltonian coupling to a low-dimensional environment subsystem E, which itself undergoes Markovian / Lindbladian dissipation. This is in line with initial QDD settings \cite{DDofOpenQsystems}, which focus on the Hamiltonian part of T and E.

The specificity of our proposal is to mitigate decoherence of T by acting on the intermediate environment E instead of on the target system T. Such actions cannot be assumed as precise as on T, but they need not be. Indeed, we explicitly quantify how not only strong and imprecise coherent drives, but also adding pure decoherence channels on E (without introducing direct Markovian dissipation on T itself), effectively reduces the decoherence induced on T. Maybe surprisingly, we observe how only particular circumstances would favor a very pure environment as compared to a very mixing one.

The reduction of induced decoherence on T when increasing the decoherence on E should not be too unfamiliar to researchers used to adiabatic elimination and the ``$g^2/\kappa$'' formula. In light of the present paper, this is interpreted as a QDD effect, which can arise both through coherent or incoherent driving, and which can be quantified precisely in both cases. 

Indeed, having all fast dynamics on subsystem E, we can go beyond Hamiltonian decoupling arguments and develop an
adiabatic elimination procedure yielding explicit formulas for the decoherence of T induced by dissipation on E. The
resulting formulas are valid in the limit of strong dissipation on E, which is precisely the regime that is typically
favored. They allow to explicitly examine trade-offs and dependencies on parameters, as we illustrate on various typical settings when E is a two-level system.


\begin{acknowledgments}
The authors thank Pierre Rouchon, Mazyar Mirrahimi, Zaki Leghtas and Francesco Ticozzi for inspiring discussions. This work has been supported by the ANR grant JCJC-HAMROQS (French National Research Agency).
\end{acknowledgments}


\appendix

\section{Dissipation model of TLS under ultra-strong driving}\label{app:lindbladian_derivation}

In this section, we rederive a Lindbladian model for the dissipation of the environment TLS,
starting from a general model where this E subsystem is coupled to a large bath B through a Hamiltonian coupling.
The Lindbladian obtained will explicitly account for the possibly strong drive $H_c(t)$ on E, yielding a perturbative correction to the dissipators in the $(\sigma_-,\sigma_+$)-basis obtained in the undriven case. We will see that this correction becomes significant when the drive amplitude is non-negligible w.r.t.\ the bare frequency of the TLS. For the derivation we follow the standard approach of the Born-Markov approximation~\cite{Breuer2007}, followed by a secular approximation, averaging out over rapidly oscillating terms.

We consider a general bath B with bare Hamiltonian $H_B$, and denote the bare frequency of E by $\Omega_E =
\bestguess{\Omega}{E} + \delta\Omega_E$, with $\bestguess{\Omega}{E}$ the nominal user-known value.
In the lab frame of both systems, we assume an inter-system coupling $\gamma \sigma_x \otimes R$,
with $R$ some constant Hermitian operator acting on the bath Hilbert space, and $\gamma$ some small positive coupling rate with the dimension of a frequency.
For example, if the bath can be modeled as a thermal reservoir of harmonic oscillators, this leads to a Jaynes-Cummings-type interaction with each of the different modes in the bath.
Moving to the rotating frame of both systems, and introducing the drive as in~\eqref{eq:EAD_control}, we obtain the total Hamiltonian
\begin{equation}\label{eq:bath_model_zonder_T}
    H_E(t) \otimes \mathbb{1}_B +
     \gamma (\sigma_+ e^{i \bestguess{\Omega}{E} t} + \sigma_- e^{-i \bestguess{\Omega}{E} t}) \otimes \tilde{R}(t),
\end{equation}
with $\tilde{R}(t) = e^{i H_B t} R e^{- i H_B t}$. Performing the toggling frame transformation defined in~\eqref{eq:control_propagator} yields
\begin{equation}\label{eq:int_frame_bath}
    \gamma \tilde{E}(t) \otimes \tilde{R}(t),
\end{equation}
with $ \tilde{E}(t) = e^{i (\bestguess{\Omega}{E} + \bestguess{\omega}{1}) t} E_+(t) +e^{- i (\bestguess{\Omega}{E} + \bestguess{\omega}{1}) t} E_-(t)$, 
where we have defined
\begin{subequations}
    \label{eq:e_pm}
    \begin{align}
        E_+(t) &:= e^{i \frac{\Lambda}{2} \sigma_{\alpha x}} \sigma_+ e^{- i \frac{\Lambda}{2} \sigma_{\alpha x}}\nonumber\\
        &= \frac{\cos(\alpha)}{2} \sigma_{\alpha x} + i \frac{1 + \sin(\alpha)}{2} e^{i \Lambda t} \sigma_{\alpha+}  \nonumber\\
               &+ i \frac{1 - \sin(\alpha)}{2} e^{ - i \Lambda t} \sigma_{\alpha-}\label{eq:e_p}\\
        E_-(t) &:= E_+^\dag(t).
    \end{align}
\end{subequations}
At this point we introduce the modified bare E-frequency $\tilde{\Omega}_E = \bestguess{\Omega}{E} + \bestguess{\omega}{1}$.
In line with the conclusions in section~\ref{sec:coherent_decoupling}, we will consider $\bestguess{\omega}{1} \ll \Omega_E$, so $\tilde{\Omega}_E \simeq \Omega_E$.

In the interaction frame of~\eqref{eq:int_frame_bath}, the evolution equation of the joint density matrix $\rho_{EB}$ is thus
\begin{equation}\label{eq:startpoint_diff}
    \dot{\rho}_{EB}(t) = - i \gamma \comm{\tilde{E}(t) \otimes \tilde{R}(t)}{\rho_{EB}(t)}.
\end{equation}
We can write this as an integral equation,
\begin{equation*}
    \rho_{EB}(t) = \rho_{EB}(0) - i \gamma \int_0^t \comm{\tilde{E}(s) \otimes \tilde{R}(s)}{\rho_{EB}(s)} \dd s,
\end{equation*}
and reinjecting this into~\eqref{eq:startpoint_diff}, we obtain
\begin{align}
    \dot{\rho}_{EB}(t) &= - i \gamma \comm{\tilde{E}(t) \otimes \tilde{R}(t)}{\rho_{EB}(0)}\label{eq:startpoint_integrodiff}\\
                       &- \gamma^2 \int_0^t\comm{\tilde{E}(t) \otimes \tilde{R}(t)}{\comm{\tilde{E}(s) \otimes \tilde{R}(s)}{\rho_{EB}(s)}} \dd s.\nonumber
\end{align}

Up until here, no approximations have been made, so~\eqref{eq:startpoint_integrodiff} is exact.
At this point we follow the standard procedure of the Born-Markov approximation~\cite{cohenT_book2F,Breuer2007,Qnoise},
assuming the bath to be very large and unaffected by the weak coupling with the E system, so it remains in a steady state $\bar{\rho}_B$ that is invariant under $H_B$ ($\comm{H_B}{\bar{\rho}_B} = 0$).
Without loss of generality we can take $\Tr(R \; \bar{\rho}_B) = 0$, since otherwise this would just lead to a modification of the bare E-Hamiltonian.
Lastly, we assume the correlation time of the bath to be the shortest timescale present in the joint system.
Taking the partial trace of both sides in~\eqref{eq:startpoint_integrodiff} and performing these approximations
yields a Markovian equation for E:
\begin{align*}
    &\frac{\dot{\rho}_E(t)}{\gamma^2} =\\
    &\int_0^\infty \!\!\! \Tr(\comm{\comm{\tilde{E}(t \shortminus s) \otimes \tilde{R}(t \shortminus s)}{\rho_E(t) \otimes \bar{\rho}_B}}{\tilde{E}(t) \otimes \tilde{R}(t)}) \dd s
\end{align*}
The right-hand side can be further worked out by defining the two-point correlation function $g(z)$ of the bath as
\[g(z) := \Tr(\tilde{R}(t) \tilde{R}(t-z) \, \bar{\rho}_B), z, t \in \mathbb{R},\]
yielding
\begin{align}
\frac{\dot{\rho}_E(t)}{\gamma^2}  =  \int_{-\infty}^\infty g(z) \Big(&\comm{\tilde{E}(t-z) \rho_E(t)}{\tilde{E}(t)}\nonumber\\
    + &\comm{\tilde{E}(t)}{\rho_E(t) \tilde{E}(t+z)}\Big) \dd z.\label{eq:born_markov}
\end{align}
Plugging in the expression of $\tilde{E}(t)$ as in~\eqref{eq:e_pm}, terms oscillating at frequencies
$\pm 2 \tilde{\Omega}_E, \pm 2 \Lambda, 2 \tilde{\Omega}_E \pm 2 \Lambda, - 2 \tilde{\Omega}_E \pm 2 \Lambda$ appear.
Regarding oscillations as a function of $z$, we define the spectral density of the bath $G$ as
\begin{equation}\label{eq:defn_spectral_density}
    G(\nu) := \int_{-\infty}^\infty e^{i \nu z} g(z) \, \dd z, \forall \nu \in \mathbb{R} \; .
\end{equation}
There remains to treat the terms oscillating as a function of $t$. The bare TLS frequency $\tilde{\Omega}_E$ can always be assumed very large w.r.t.\ $\dot{\rho}_E$ in~\eqref{eq:born_markov},
justifying to average over terms oscillating at frequencies $\pm \tilde{\Omega}_E$. The case of ultra-strong driving
precisely assumes that we can similarly average over frequencies $\pm \Lambda$ and, avoiding parametric resonance, over the frequencies $2 \tilde{\Omega}_E \pm 2 \Lambda$ and $- 2 \tilde{\Omega}_E \pm 2 \Lambda$. Therefore, the final Lindbladian model is just obtained by performing a last secular approximation (i.e.\ RWA) as mentioned in the first paragraph, yielding:
\begin{equation}\label{eq:dissipator_bis}
    \dot{\rho}_E(t) = \kappa_{\alpha x}  \mathcal{D}_{\sigma_{\alpha x}} + \kappa_{\alpha-} \mathcal{D}_{\sigma_{\alpha-}} + \kappa_{\alpha+} \mathcal{D}_{\sigma_{\alpha+}},
\end{equation}
where
\begin{subequations}\label{eq:zeghiou}
    \begin{align}
        \kappa_{\alpha x} &:= \frac{\gamma^2}{2} \qty(G(\tilde{\Omega}_E) + G(\shortminus \tilde{\Omega}_E)) \cos^2(\alpha), \\
        \kappa_{\alpha-} &:= \frac{\gamma^2}{2} G(\tilde{\Omega}_E + \Lambda) {(1 + \sin(\alpha))}^2\nonumber\\
                        &\hphantom{:}+\frac{\gamma^2}{2} G(\shortminus \tilde{\Omega}_E + \Lambda) {(1 \shortminus \sin(\alpha))}^2, \\
        \kappa_{\alpha+} &:= \frac{\gamma^2}{2} G(\shortminus \tilde{\Omega}_E \shortminus \Lambda) {(1 + \sin(\alpha))}^2 \nonumber \\
                        &\hphantom{:}+\frac{\gamma^2}{2} G(\tilde{\Omega}_E \shortminus \Lambda) {(1 \shortminus \sin(\alpha))}^2
    \end{align} \;
\end{subequations}
while $\alpha$ and the associated operators are defined in the main text. The model~\eqref{eq:dissipator_bis} is used in the analysis of section~\ref{sec:dissipation_model}. In the final secular approximation, this model neglects 2nd-order RWA-terms of order $\frac{\kappa_{\alpha x,\alpha+,\alpha-}^2}{\tilde{\Omega}_E}$ and of order $\frac{\kappa_{\alpha x,\alpha+,\alpha-}^2}{\Lambda}$.

\subsubsection{Interpretation}

We can briefly comment on how to consider the dissipation rates \eqref{eq:zeghiou} as a function of our QDD parameters.
\begin{itemize}
\item The effect of $\frac{\bestguess{\omega}{1}}{2} \sigma_z$ just adds up to $\bestguess{\Omega}{E}$,
so for $\omega_2=0$ the bath noise spectrum $G$ is probed at altered frequencies $\pm(\bestguess{\Omega}{E} + \bestguess{\omega}{1})$ to evaluate the excitation and loss rates.
Knowing $\Omega_E$ up to $\delta\Omega_E$ anyways, if we want to use these equations we have to assume $\bestguess{\omega}{1}, \delta\Omega_E \ll \Omega_E$, and $G$ sufficiently flat for the induced frequency shift to have negligible effect on the induced Lindbladian.
\item The stronger drive of amplitude $\omega_2$ introduces the periodic time-dependence in the TLS Hamiltonian \eqref{eq:EAD_control}. According to the general Floquet-Markov theory~\cite{GRIFONI1998229}, the eigenbasis in which the TLS decoheres is then given by the Floquet Hamiltonian associated to E, in a frame given by a periodic change of variables (often called the micromotion), and Lindbladian dissipation rates are found by evaluating the bath noise spectrum at the Floquet quasi-energies. In our case, the periodic change of variables just corresponds to going to the rotating frame w.r.t.\ $\frac{\bestguess{\omega}{1}}{2} \sigma_z$, where we obtain a constant Hamiltonian $\frac{\Lambda}{2} \sigma_{\alpha x}$ on E. This special situation implies that the Floquet decomposition trivializes to the more standard rotating-frame and averaging approach, but thus still with correspondingly modified dissipation channels on E.
\end{itemize}

\subsubsection{Deriving the other Lindblad model}\label{App.errordisc}

We can briefly review the derivation using the first possible condition mentioned in Section
\ref{sec:dissipation_model}, namely for obtaining a Lindblad model whose dissipators do not depend on the drive when
$\Lambda \ll \Omega_E$ (strong, yet not ultra-strong driving).

The steps up to \eqref{eq:defn_spectral_density} remain the same. From there, with $\Lambda \ll \Omega_E$, we can still perform a final secular approximation over frequencies $\pm  \tilde{\Omega}_E$ as well as $2 \tilde{\Omega}_E \pm 2 \Lambda$ and $- 2 \tilde{\Omega}_E \pm 2 \Lambda$. However, averaging over $\pm \Lambda$ may not be justified and another standard type of approximation is applied to obtain a stationary Lindbladian. This consists in assuming that the bath spectral density $G$ is sufficiently flat to be considered invariant w.r.t.\ frequency shifts of $\pm \Lambda$ around $\Omega_E \gg \Lambda$. 

Thus concretely, averaging~\eqref{eq:born_markov} over $t$ with only the frequencies $\pm 2 \tilde{\Omega}_E$, yields
    \begin{align*}
        \int_{-\infty}^\infty\!\!\!\!\! g(z) e^{i \tilde{\Omega}_E z} \big(&\comm{E_-(t\!\shortminus\!z) \rho_E}{E_+(t)} + \comm{E_-(t) \rho_E}{E_+(t\!+\!z)}\big)\\
        +\!\!\int_{-\infty}^\infty \!\!\!\!\! g(z) e^{- i \tilde{\Omega}_E z} \big(&\comm{E_+(t\!\shortminus\!z) \rho_E}{E_-(t)} + \comm{E_+(t)}{\rho_E E_-(t\!+\!z)}\big).
    \end{align*}
We next shift the $z$ dependency of $E_-$ and $E_+$ towards $g(z)$ and assume $G(\tilde{\Omega}_E \pm \Lambda) \simeq G(\tilde{\Omega}_E) \simeq G(\Omega_E)$ when integrating over $z$. Finally, moving back to the lab frame by undoing~\eqref{eq:control_propagator}, we then readily obtain~\eqref{eq:pm_dissipation} where the drive has no impact on the Lindbladian dissipation.

As explained in the main text, the conclusions obtained with these two approaches do coincide (at least at leading orders) when both conditions --- averaging over $\pm \Lambda$, and assuming a locally flat noise spectrum $G(.)$ --- are satisfied.

\section{Adiabatic elimination method}\label{app:ae_formulas}

\subsection{Summary of the formalism}\label{app:adelgenform}

Consider dynamics with the following timescale separation
\begin{equation}\label{eq:ae_start}
    \dot{\rho} = \mathcal{L}_0(\rho) +  \varepsilon \mathcal{L}_1(\rho).
\end{equation}
Here, $\rho$ is a density operator acting on a Hilbert space $\mathcal{H}$,
$\mathcal{L}_0$ a stationary Lindbladian of order 1, and $\mathcal{L}_1$ an order-one
Lindbladian providing a perturbation, since $\varepsilon \ll 1$ is a small positive constant.
We use the term Lindbladian in the broad sense, as we assume any Hamiltonian parts of the dynamics to be included in $\mathcal{L}_0$ or $\mathcal{L}_1$. The starting point is that the fast dynamics are degenerate, i.e. the linear superoperator  $\mathcal{L}_0$, acting on the set of linear operators on $\mathcal{H}$,  has a nontrivial kernel $\mathcal{M}_0$ associated to eigenvalue 0. Furthermore, this kernel is strongly attractive, in other words all the non-zero eigenvalues of $\mathcal{L}_0$ have a strictly negative real part (spectral gap).

The goal of adiabatic elimination, as described in~\cite{Azouit2017b}, is then to obtain a reduced model describing the perturbation of this degenerate kernel under the full Lindblad dynamics $\mathcal{L}_0 + \varepsilon \mathcal{L}_1$, for $\varepsilon$ small. This reduced model involves an invariant space $\mathcal{M}_r$ --- dubbed the slow or reduced subspace --- of the same dimension as the kernel $\mathcal{M}_0$ of $\mathcal{L}_0$, and on which the perturbation of 0-eigenvalues now imply some slow dynamics associated to eigenvalues of order $\varepsilon$ of the superoperator.

Both the invariant space $\mathcal{M}_r$ and the associated slow dynamics $\mathcal{L}_{s}$ can be determined as a power series in $\varepsilon$.
For this we parameterize the reduced model with a variable $\rho_s \in \mathcal{M}_s \simeq \mathcal{M}_r$ undergoing the dynamics
\begin{equation}\label{eq:anszats_AELs}
    \dot{\rho}_s(t) = \mathcal{L}_{s,\varepsilon}(\rho_s(t)) = \sum_{k = 1}^\infty \varepsilon^k \mathcal{L}_{s,k}(\rho_s) \; ;
\end{equation}
and we express how this variable is embedded in the full system, thus mapping the parameterization space $\mathcal{M}_s$
to the actual invariant eigenspace $\mathcal{M}_r$, via the linear map:
\begin{equation}\label{eq:anszats_AE}
    \rho(t) = \mathcal{K}_\varepsilon(\rho_s(t)) = \sum_{k = 0}^\infty \varepsilon^k \mathcal{K}_k(\rho_s(t)).
\end{equation}
Ideally, we want $\mathcal{L}_{s,\varepsilon}$ to have the typical Lindblad structure of positivity-preserving dynamics, and $\mathcal{K}_\varepsilon$ to be a Kraus map, so density matrices in $\mathcal{M}_s$ are mapped to density operators in the total space. General expressions satisfying this structure have been obtained when truncating the series after 2nd order; we hence keep following the procedure of~\cite{Azouit2017b}.

Demanding that the equations \eqref{eq:anszats_AELs},\eqref{eq:anszats_AE} be solution of~\eqref{eq:ae_start}, the $\mathcal{L}_{s,k}$ and $\mathcal{K}_k$ can be progressively identified by matching terms of equal order in $\varepsilon$.
Explicitly, one obtains
\begin{align*}
    \mathcal{L}_0 (\mathcal{K}_0(\rho_s)) &= 0,\\
    \mathcal{K}_0 ( \mathcal{L}_{s,1}(\rho_s)) &= \mathcal{L}_0 (\mathcal{K}_{1}(\rho_s)) + \mathcal{L}_1(\mathcal{K}_0(\rho_s)),\\
    \mathcal{K}_0 ( \mathcal{L}_{s,2}(\rho_s)) + \mathcal{K}_1 ( \mathcal{L}_{s,1}(\rho_s)) &= \mathcal{L}_0 (\mathcal{K}_{2}(\rho_s)) + \mathcal{L}_1(\mathcal{K}_1(\rho_s)),\\
    & \hphantom{a} \vdots
\end{align*}
Since these equations should hold for any $\rho_s \in \mathcal{M}_s$, we write (with a slight abuse of notation, since all operators are linear):
\begin{align*}
    \mathcal{L}_0 \mathcal{K}_0 &= 0,\\
    \mathcal{K}_0  \mathcal{L}_{s,1} &= \mathcal{L}_0 \mathcal{K}_{1} + \mathcal{L}_1\mathcal{K}_0,\\
    \mathcal{K}_0  \mathcal{L}_{s,2} + \mathcal{K}_1  \mathcal{L}_{s,1} &= \mathcal{L}_0 \mathcal{K}_{2} + \mathcal{L}_1\mathcal{K}_1,\\
    & \hphantom{a} \vdots
\end{align*}
where a product of superoperators stands for their composition.

The solution is not unique since we have a choice in the parameterization of $\mathcal{M}_r$ via $\mathcal{M}_s$, but it has been proved that solutions exist~\cite{Azouit2017b}.
At each order, we can first solve for $\mathcal{L}_{s,k}$ by projecting the corresponding equation with
\[\mathcal{R} := \lim_{t\rightarrow + \infty}\exp(\mathcal{L}_0\, t)\]
onto the subspace corresponding to the zero eigenvalues of $\mathcal{L}_0$,
i.e.~the subspace whose perturbation we want to compute. Mathematically, this decouples the unknowns thanks to
$\mathcal{R} \mathcal{L}_0 =  \mathcal{L}_0 \mathcal{R} = 0$ and choosing $\mathcal{R}(\mathcal{K}_0) = \mathcal{K}_0$.
This choice for $\mathcal{K}_0$ is natural since $\mathcal{M}_s$ is isomorphic to $\mathcal{M}_0$.
Then in a second step, one can project the equations with $1 - \mathcal{R}$ to determine $\mathcal{K}_k$.

The results recalled at the beginning of Section~\ref{ssec:formulas} have been obtained by applying this procedure to $\mathcal{L}_0$
stabilizing one subsystem of a composite quantum system towards a unique steady state,
and $\mathcal{L}_1$ expressing Hamiltonian coupling between this subsystem and another one. The leading order adiabatic elimination results for this case have been explicitly computed in~\cite{Azouit2017b}.
The present paper has encountered situations where the set of steady states of $\mathcal{L}_0$ has a different structure.
We then resort to the general procedure outlined in this section.
This explains how we have treated the elimination of $\mathcal{D}_b$ in applications
of Proposition~\ref{prop:D_3} and how we have addressed Section~\ref{sssec:ex3:nonuq}. A more detailed discussion of these two cases is included in Sections~\ref{sec:proof_D3_last} and~\ref{sec:computations_partly_dissipative} respectively.

\subsection{Time-periodic extension}\label{sec:ae_floquet}

This section aims to develop an extension of the approach of adiabatic elimination in systems with strongly dissipative degrees of freedom, to the case where the perturbation displays a periodic time-dependence with a frequency comparable in magnitude to the fast dissipation rate.
We consider dynamics with a similar timescale separation as before:
\begin{equation}\label{eq:floquet_start}
    \dot{\rho} = \mathcal{L}_0(\rho) +  \varepsilon \mathcal{L}_1(\rho,t),
\end{equation}
where we have the same assumptions on $\mathcal{L}_0$ as before, but  now $\mathcal{L}_1(t)$ is a periodic Lindbladian perturbation of period $\frac{2 \pi}{\omega}$. Furthermore, this perturbation should be rapidly oscillating, i.e.~$\omega \gg \Vert \varepsilon \mathcal{L}_1 \Vert$. Thanks to Floquet theory, we can expect $\mathcal{M}_0$ to be perturbed into a slightly altered attractive subspace
which now moves periodically in time, and on which also some slow dynamics is present.

We again parametrize the slow dynamics using a variable $\rho_s$ living in a space $\mathcal{M}_s$ isomorphic to $\mathcal{M}_0$,
and propose 
\begin{subequations}
\label{eq:anszats_floquet}
\begin{eqnarray}
    \rho(t) &=& \mathcal{K}_\varepsilon(\rho_s(t),t),\\
    \dot{\rho}_s(t) &=& \mathcal{L}_{s,\varepsilon}(\rho_s(t)).
\end{eqnarray}
\end{subequations}
as a solution staying in the ``slow'' invariant subspace of~\eqref{eq:floquet_start}. Here, $\mathcal{K}_\varepsilon(\cdot,t)$
is a $\frac{2 \pi}{\omega}$-periodic map characterizing the embedding of the perturbed slow subspace in the total space
and $\mathcal{L}_{s,\varepsilon}: \mathcal{M}_s \rightarrow \mathcal{M}_s$
is a stationary superoperator parametrizing the slow dynamics.
Plugging~\eqref{eq:anszats_floquet} into~\eqref{eq:floquet_start}, we obtain an invariance equation:
\begin{equation}\label{eq:geom_inv_floquet}
    \pdv{\mathcal{K}_\varepsilon}{t}\!\,(t) + \mathcal{K}_\varepsilon(t) \mathcal{L}_{s,\varepsilon} = \mathcal{L}_0 \mathcal{K}_\varepsilon(t) + \varepsilon \mathcal{L}_1(t) \mathcal{K}_\varepsilon(t),
\end{equation}
where the domain of all terms is $\mathcal{M}_s$.
We again expand both the stationary superoperator $\mathcal{L}_{s,\varepsilon}$ and the periodic map $\mathcal{K}_{\varepsilon}(t)$ in powers of $\varepsilon$.
\begin{align}
    \mathcal{K}_{\varepsilon}(t) &= \mathcal{K}_0(t) + \varepsilon \mathcal{K}_1(t) + \varepsilon^2 \mathcal{K}_2(t) + \cdots,\\
    \mathcal{L}_{s,\varepsilon} &= \mathcal{L}_{s,0} + \varepsilon \mathcal{L}_{s,1} + \varepsilon^2 \mathcal{L}_{s,2} + \cdots,
\end{align}
Collecting~\eqref{eq:geom_inv_floquet} into powers in $\varepsilon$ yields the set of recursive equations
\begin{align}
    \pdv{\mathcal{K}_0}{t}\!\,(t) + \mathcal{K}_0(t) \mathcal{L}_{s,0} &= \mathcal{L}_0 \mathcal{K}_0(t), \text{ and for }k\geq 1:\\
    \pdv{\mathcal{K}_k}{t}\!\,(t) + \sum_{j=0}^k \mathcal{K}_j(t) \mathcal{L}_{s,k-j} &= \mathcal{L}_0 \mathcal{K}_k(t) + \mathcal{L}_1(t) \mathcal{K}_{k-1}(t)\label{eq:recursive_floquet} .
\end{align}
We can choose $\mathcal{L}_{s,0} = 0$ and $\mathcal{K}_0$ time-independent and injective such that $\mathcal{R} \mathcal{K}_{0} = \mathcal{K}_0$, since for $\varepsilon=0$, the solutions in the slow subspace are stationary.
For $k=1$ we then obtain the following equation, to be satisfied by $\mathcal{L}_{s,1}$ and $\mathcal{K}_1$:
\begin{equation}\label{eq:first_order_floquet}
    \pdv{\mathcal{K}_1}{t}\!\,(t) + \mathcal{K}_0 \mathcal{L}_{s,1} = \mathcal{L}_0 \mathcal{K}_1(t) + \mathcal{L}_1(t) \mathcal{K}_0 \; .
\end{equation}
We split this equation up into four parts, by projecting either with $\mathcal{R}$ or $1 - \mathcal{R}$ on the one hand,
and by considering the time-average ($\bar{\cdot}$) and ripple ($\tilde{\cdot}$) of the equation separately on the other hand.
Since in this way it will be clear which terms depend on time, we drop the $t$ argument in what follows.

Applying $\mathcal{R}$ and the time-average to \eqref{eq:first_order_floquet}, we obtain
\begin{equation}\label{eq:first_order_red}
    \mathcal{K}_0 \mathcal{L}_{s,1} = \mathcal{R} \bar{\mathcal{L}}_1 \mathcal{K}_0 \; .
\end{equation}
This equation determines $\mathcal{L}_{s,1}$ uniquely, since $\mathcal{K}_0$ can be inverted on its image.
The application of $\mathcal{R}$ to the perturbation (its average part here) corresponds to the Zeno-effect that is well-known for stationary systems.

Applying $\mathcal{R}$ to \eqref{eq:first_order_floquet} and taking the ripple of the resulting equation, we get
\begin{equation}
    \mathcal{R} \dot{\mathcal{K}}_1 = \mathcal{R} \tilde{\mathcal{L}}_1 \mathcal{K}_0,
\end{equation}
of which a solution can be obtained via taking the zero-average time primitive:
\begin{equation}\label{eq:primitivation}
    \mathcal{R} \mathcal{K}_1 = \mathcal{R} \partial_t^{-1} \tilde{\mathcal{L}}_1 \mathcal{K}_0 + \mathcal{R} \bar{\mathcal{G}}_1.
\end{equation}
Here $\bar{\mathcal{G}}_1$ is an integration constant, playing the role of a gauge choice.
Equation~\eqref{eq:primitivation} is reminiscent of an averaging procedure, where oscillating terms are transformed away using a coordinate change $\varepsilon$-close to identity (here thus $\mathcal{K}_0+ \varepsilon \mathcal{K}_1$) and generated by the integral of the oscillating terms; see~\cite{guckenheimer2002nonlinear} for a canonical treatment of this averaging procedure. This is not surprising, since within the slow subspace, the effect of $\mathcal{L}_0$ reduces to zero, and we retain a small oscillating perturbation, which is exactly the setting where averaging procedures work well.

Applying $1 - \mathcal{R}$ to Eq.~\eqref{eq:first_order_floquet} by $1 - \mathcal{R}$ and taking the average of the resulting equation, we get
\begin{equation}
    0 = \mathcal{L}_{0} (1 - \mathcal{R}) \bar{\mathcal{K}}_1 + (1 - \mathcal{R}) \bar{\mathcal{L}}_1 \mathcal{K}_0,
\end{equation}
which has the formal solution
\begin{equation}\label{eq:first_order_pseudo_inverse}
    (1 - \mathcal{R}) \bar{\mathcal{K}}_1 = - \mathcal{L}_0^{-1} (1 - \mathcal{R}) \bar{\mathcal{L}}_1 \mathcal{K}_0.
\end{equation}
    Since $\mathcal{L}_0$ has a spectral gap, its restriction to the image of  $(1 - \mathcal{R})$ can rigorously be inverted,
because it has no eigenvalue zero there.
This pseudo-inverse of $\mathcal{L}_0$ is equally present in stationary adiabatic elimination and it expresses how the stationary part of the perturbation perturbs the slow subspace up to first order.

Lastly, applying $1 - \mathcal{R}$ to \eqref{eq:first_order_floquet} and taking the ripple of the resulting equation, we get
\begin{equation}\label{eq:new_pseudo_inverse}
    (1 - \mathcal{R}) \dot{\mathcal{K}}_1 = \mathcal{L}_0 (1 - \mathcal{R}) \tilde{\mathcal{K}}_1 + (1 - \mathcal{R}) \tilde{\mathcal{L}}_1 \mathcal{K}_0.
\end{equation}
To determine $(1 - \mathcal{R}) \tilde{\mathcal{K}}_1$ from this equation, we  introduce a decomposition into Fourier modes.
We can write
\[(1 - \mathcal{R}) \tilde{\mathcal{L}}_1(t) = \sum_{n \in \mathbb{Z}, n \neq 0} e^{i n t} (1 - \mathcal{R}) \tilde{\mathcal{L}}_{1,n} ,\]
for some superoperators $\tilde{\mathcal{L}}_{1,n}$, since $\tilde{\mathcal{L}}_1$ has zero average.
Decomposing in the same way the tentative solution
\[(1 - \mathcal{R}) \tilde{\mathcal{K}}_1(t) = \sum_{n \in \mathbb{Z}, n \neq 0} e^{i n t} (1 - \mathcal{R}) \mathcal{K}_{1,n},\]
and plugging this into~\eqref{eq:new_pseudo_inverse}, we see that for every $n \neq 0$, we are looking for the stationary superoperator $(1 - \mathcal{R}) \mathcal{K}_{1,n}$ such that
\begin{equation}\label{eq:K_1_3rd_contr}
    \qty(\mathcal{L}_{0} - i n)(1 - \mathcal{R})\mathcal{K}_{1,n} = - (1 - \mathcal{R})\tilde{\mathcal{L}}_{1,n} \mathcal{K}_0.
\end{equation}
Here we can really see that, since the time-dependence of $\mathcal{L}_1$ is as fast as the dissipation $\mathcal{L}_0$, the combined effect of the two has to be inverted to obtain the oscillating part of the correction to the slow subspace in the $(1 - \mathcal{R})$ subspace. We thus need $\mathcal{L}_{0} - i n$ to be invertible on the image of $1 - \mathcal{R}$.
Because $\mathcal{L}_0$ restricted to the image of $1 - \mathcal{R}$ only has eigenvalues with strictly negative real part,
a shift in its spectrum by $- i n, n\in \mathbb{Z}$ can never move an eigenvalue to the origin, and hence $\mathcal{L}_{0} - i n$ can formally be inverted in the above equation. 
This can be done for every fixed $n$ separately, or if available a spectral decomposition of $\mathcal{L}_0$ could allow to define all inverses at once.

Equation~\eqref{eq:recursive_floquet} for $k \geq 2$ can be treated in an analogous way, and the general higher-order solution goes as follows:
\begin{subequations}
\label{eq:general_red_dyn}
\begin{align}
    \hspace*{-0.16cm}\mathcal{K}_0 \mathcal{L}_{s,k} &= \mathcal{R} \bar{\mathcal{A}}_k,\\
    \hspace*{-0.16cm}\mathcal{R} \mathcal{K}_k &= \mathcal{R} \partial_t^{- 1} \qty(\tilde{\mathcal{A}}_k - \tilde{\mathcal{B}}_k) + \mathcal{R} \bar{\mathcal{G}}_k,\\
    \hspace*{-0.16cm}(1  -  \mathcal{R}) \bar{\mathcal{K}}_k &= -  \mathcal{L}_0^{- 1}  (1 - \mathcal{R}) \qty(\bar{\mathcal{A}}_k - \bar{\mathcal{B}}_k),\\
    \hspace*{-0.16cm}(1  -  \mathcal{R}) \tilde{\mathcal{K}}_k &= -  \qty(\mathcal{L}_0 - \partial_t)^{- 1}   (1 - \mathcal{R}) \qty(\tilde{\mathcal{A}}_k - \tilde{\mathcal{B}}_k),\hphantom{::}
\end{align}
\end{subequations}
with
\begin{align*}
    \mathcal{A}_k &= \mathcal{L}_1 \mathcal{K}_{k \shortminus 1},\\
    \mathcal{B}_k &= \sum_{j=1}^{k - 1} \mathcal{K}_j \mathcal{L}_{s,k \shortminus j}.
\end{align*}
Here, $\mathcal{R} \bar{\mathcal{G}}_k$ is a general gauge choice that can be made at every order.
All inverses are well-defined for the same reasons as before, and it is easy to check that the above recursive relation provides a solution,
by plugging it into~\eqref{eq:recursive_floquet}.

In appendix~\ref{reduced_formulas}, we apply the method of this section to the model~\eqref{eq:model_strong} (resp.~\eqref{eq:model_ultrastrong} under ultra-strong driving) and we show that $\mathcal{K}_0 + \varepsilon \mathcal{K}_1$ can be written as a Kraus map up to $\mathcal{O}(\varepsilon^2)$ terms, choosing $\bar{\mathcal{G}}_1 = 0$, so $\mathcal{R} \tilde{\mathcal{K}}_1 = \mathcal{R} \partial_t^{-1} \tilde{\mathcal{L}}_1 \mathcal{K}_0$, and $\mathcal{R} \bar{\mathcal{K}}_1 = 0$.
Furthermore, $\mathcal{L}_{s,1}$ is a Hamiltonian on the target T, and $\mathcal{L}_{s,2}$ is the sum of a Hamiltonian and
a Lindbladian on T. Thus the proposed perturbative series preserves the quantum structure of Lindbladian reduced dynamics and CPTP mappings up to second order. In fact, one can prove that this remains the case for a general bipartite scenario.
\begin{thm}\label{thm:quantum_structure}
    Consider the model~\eqref{eq:floquet_start} where the Lindbladian $\mathcal{L}_0$ acts only on one subsystem (F) of
    a bipartite quantum system, and exponentially stabilizes F towards a unique steady state at a rate $\kappa$;
    and, $\mathcal{L}_1(t)$ expresses a $\frac{2 \pi}{\omega}$-periodic Hamiltonian coupling between the F-subsystem and the second one (S).
    Assume $\varepsilon \norm{\mathcal{L}_1} \ll \kappa$ and $\varepsilon \norm{\mathcal{L}_1} \ll \omega$.
    When choosing $\bar{\mathcal{G}}_1 = 0$ in~\eqref{eq:primitivation},
    $\mathcal{L}_{s,1}$ takes the form of a Hamiltonian, $\mathcal{L}_{s,2}$ is the sum of Hamiltonian and a
    Lindbladian term, and $\mathcal{K}_0 + \varepsilon \mathcal{K}_1$ can be written as a CPTP-map up to terms of order
    $\varepsilon^2$.
\end{thm}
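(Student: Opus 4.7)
The plan is to mirror the strategy of Azouit et al.~for stationary adiabatic elimination, adapted to the periodic setting via Fourier decomposition. Since $\mathcal{L}_0$ acts only on F with unique steady state $\bar{\rho}_F$, the kernel $\mathcal{M}_0$ consists of states $\rho_s \otimes \bar{\rho}_F$, so I would set $\mathcal{K}_0(\rho_s) = \rho_s \otimes \bar{\rho}_F$, with projector $\mathcal{R}(\cdot) = \Tr_F(\cdot) \otimes \bar{\rho}_F$. Decomposing the coupling Hamiltonian as $H_{SF}(t) = \sum_k A_k \otimes B_k(t)$ with $A_k$ Hermitian on S and each $B_k(t) = \sum_n B_{k,n} e^{i n \omega t}$ Hermitian on F, the first-order equation \eqref{eq:first_order_red} immediately gives $\mathcal{L}_{s,1}(\rho_s) = -i [H_{s,1}, \rho_s]$ with the Hermitian effective Hamiltonian $H_{s,1} = \sum_k \Tr(B_{k,0}\,\bar{\rho}_F)\, A_k$, proving the first claim.

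For $\mathcal{K}_1$ with the gauge $\bar{\mathcal{G}}_1 = 0$, I would assemble the four projected contributions from Appendix~\ref{sec:ae_floquet}: $\mathcal{R}\bar{\mathcal{K}}_1 = 0$, $(1-\mathcal{R})\bar{\mathcal{K}}_1 = -\mathcal{L}_0^{-1}(1-\mathcal{R})\bar{\mathcal{L}}_1 \mathcal{K}_0$, $\mathcal{R}\tilde{\mathcal{K}}_1 = \mathcal{R}\partial_t^{-1}\tilde{\mathcal{L}}_1 \mathcal{K}_0$, and $(1-\mathcal{R})\mathcal{K}_{1,n} = -(\mathcal{L}_0 - i n \omega)^{-1}(1-\mathcal{R})\tilde{\mathcal{L}}_{1,n}\mathcal{K}_0$ mode by mode. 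Trace preservation of $\mathcal{K}_0 + \varepsilon \mathcal{K}_1$ follows immediately because every contribution to $\mathcal{K}_1(\rho_s)$ is built from a commutator with $\mathcal{L}_1$ followed by pseudo-inverses acting on the $(1-\mathcal{R})$-image. For complete positivity I would rewrite $\mathcal{K}_0 + \varepsilon \mathcal{K}_1$ as a single sandwich $(\mathbb{1} + \varepsilon M(t))\,(\rho_s \otimes \bar{\rho}_F)\,(\mathbb{1} + \varepsilon M^\dagger(t)) + \mathcal{O}(\varepsilon^2)$, with $M(t)$ chosen so that its Hermitian part reproduces the $\mathcal{R}$-components of $\mathcal{K}_1$ and its anti-Hermitian part reproduces the commutator contributions; this is precisely the periodic analogue of the Kraus construction of Azouit et al.

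The Hamiltonian-plus-Lindbladian form of $\mathcal{L}_{s,2}$ emerges from substituting the above $\mathcal{K}_1$ into the second-order recursion \eqref{eq:general_red_dyn} and tracing over F. The output decomposes into a commutator piece, yielding a Hamiltonian $H_{s,2}$ on S from the anti-Hermitian parts of the pseudo-inverse actions, and a dissipative piece whose coefficient matrix can be written, mode by mode, as a Gram matrix of the operators obtained by applying $(\mathcal{L}_0 - i n \omega)^{-1}$ to $\tilde{\mathcal{L}}_{1,n}\mathcal{K}_0(\cdot)$. Distinct Fourier modes decouple under the time-average, and each block turns out to be positive semi-definite by essentially the same polarization-style argument that makes the matrix $X$ of Section~\ref{ssec:formulas} non-negative; summing over $n$ preserves positivity, giving the desired Lindbladian structure.

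The main obstacle is the positivity argument for $\mathcal{L}_{s,2}$: in the periodic case one must execute the polarization identity of the stationary proof separately on each Fourier mode and then verify that cross-mode contributions either annihilate under time averaging or combine into non-negative blocks. A related subtlety is confirming that the gauge choice $\bar{\mathcal{G}}_1 = 0$ is genuinely the one that yields the clean sandwich form for $\mathcal{K}_0 + \varepsilon \mathcal{K}_1$; alternative gauges would still produce equivalent reduced dynamics but would redistribute mass between $\mathcal{K}_1$ and the Hamiltonian term, potentially spoiling the explicit CPTP structure at intermediate steps. Once these algebraic identities are in place, the bipartite structure carries the remaining arguments through as in the stationary case.
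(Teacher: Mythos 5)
Your proposal is correct and takes essentially the same route as the paper: the paper's proof consists of noting that $\mathcal{L}_0 = \mathrm{identity}\otimes\mathcal{L}_F$ and then adapting Lemmas 4 and 5 of~\cite{Azouit2017c} with the shifted pseudo-inverses $(\mathcal{L}_F - i n\omega)^{-1}$, $n\in\mathbb{Z}$, in place of $\mathcal{L}_F^{-1}$ — precisely the Fourier-mode-resolved adaptation you outline, with the gauge $\bar{\mathcal{G}}_1=0$ and the sandwich/Kraus form of $\mathcal{K}_0+\varepsilon\mathcal{K}_1$. The per-mode positivity you flag as the main obstacle is exactly what that adaptation of the cited lemmas supplies (via the integral representation of the shifted pseudo-inverse), so your plan matches the paper's argument rather than diverging from it.
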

\begin{proof}
    Since $\mathcal{L}_0$ only acts on F, it trivially corresponds to a Lindbladian $\mathcal{L}_F$ acting on F such that
    $\mathcal{L}_0 = \mathrm{identity} \otimes \mathcal{L}_F$.
    The proof then consists of a straightforward adaptation of Lemma 4 and 5 in appendix A of~\cite{Azouit2017c} to a general
    pseudo-inverse ${\qty(\mathcal{L}_F - i n)}^{-1}, n \in \mathbb{Z}$ instead of only ${\mathcal{L}_F}^{-1}$ in the
    original work.
\end{proof}

\subsection{Proof of Proposition~\ref{prop:D_3}, last item}\label{sec:proof_D3_last}

In Section \ref{sec:noisy_decoupling}, we also go back to the general procedure for adiabatic elimination, namely when eliminating fast degrees of freedom which do not necessarily coincide with a subsystem. A first point where this appears is Proposition~\ref{prop:D_3}, where we consider the possibility to first eliminate part of E, namely the one corresponding to strictly negative eigenvalues of $\mathcal{D}_b$, and reconsider the system from there. Our claims in Proposition~\ref{prop:D_3} involve nothing special and can only be further worked out on examples, except for the claim in the last item. We next provide its proof.

We consider the system obtained after first-order adiabatic elimination of $\mathcal{D}_b$, according to the procedure of Appendix~\ref{app:adelgenform}, as being the new target-environment model, and we denote things as if this was the starting situation (e.g.~writing $\bar{\rho}_E$ for the unique steady state of the environment after already having reduced it with $\mathcal{D}_b$). Without loss of generality, we assume that the $E_k$ have been redefined such that $\Tr(E_k\, \bar{\rho}) = 0$, and also that each $E_k$ is Hermitian. We denote by $\mathcal{D}_E$ the remaining Lindbladian dissipation on this reduced environment. The proof ideas are similar to those for proving positivity of $X$ in the adiabatic elimination theory paper~\cite{Azouit2017b}.

The goal is thus to investigate when the induced dissipation matrix $X$ in Section~\ref{ssec:formulas} might vanish. Since $X$ is nonnegative, we can
focus on its diagonal. This means, we want each diagonal element $x_k := \Tr(E_k (Q_k + Q_k^\dagger)) = 0$. Here $Q_k$ is the solution of
$\mathcal{D}_{E}(Q_k) = - E_k \bar{\rho}_E$. Using the integral formula for the inverse of a negative operator, we can write
$Q_k = \int_0^{\infty}  \exp[\mathcal{D}_{E} t] (E_k \bar{\rho}_E) dt$ and thus 
\begin{eqnarray*}
x_k &=& \Tr\left(E_k \; \int_0^{\infty}  \exp[\mathcal{D}_{E} t] (E_k \bar{\rho}_E) dt \right) \\
&=&  \Tr\left( \int_0^{\infty}  \exp[\mathcal{D}^*_{E} t](E_k) dt \; (E_k \bar{\rho}_E) \right) \\
&=& \Tr( M_k \;  (E_k \bar{\rho}_E))
\end{eqnarray*}
where $\mathcal{D}^*$ denotes the dual superoperator of $\mathcal{D}$, and  $M_k$ must satisfy $\mathcal{D}^*_{E}(M_k) = -E_k$. Replacing $E_k$ in this way in the expression of $x_k$ and using that $\mathcal{D}_{E}(\bar{\rho}_E) = 0$, we get after a few computations:
$$x_k = \sum_j \; \Tr([M_k,D_j] \; \bar{\rho}_E \; [M_k,D_j]^\dagger) \; ,$$
with $D_j$ the dissipation channel operators of $\mathcal{D}_{E}$.

Now, when $\bar{\rho}_E$ has full rank, the only way to get $x_k=0$ is to have $[M_k,D_j]=0$ for all $D_j$. But this would imply $\mathcal{D}^*_{E}(M_k) = 0$, contradicting how $M_k$ must be computed. When $\bar{\rho}_E$ has reduced rank we apply the same argument to the block-diagonal part corresponding to the support of $\bar{\rho}_E$.

\subsection{Adiabatic elimination computations for Section~\ref{sssec:ex3:nonuq}:}\label{sec:computations_partly_dissipative}

There is a second point in Section \ref{sec:noisy_decoupling} where we go back to the general formalism of Appendix~\ref{app:adelgenform}, because we eliminate degrees of freedom which do not necessarily coincide with a subsystem. Indeed, in Section~\ref{sssec:ex3:nonuq}, we consider how to treat a case where the environment E is allowed to keep slow degrees of freedom. We provide formulas showing how on the considered example, the ``$g^2 / \kappa$ '' scaling breaks down for dispersive coupling, while it appears to keep holding under resonant coupling. We here give some details on the computations behind those formulas, focusing on the case of resonant interaction.

The fast dynamics happens at timescale $\kappa_z$, while the slow one involves $\kappa_-, \kappa_+, g$ thus $(\kappa_-,
\kappa_+, g) / \kappa_z$ are all of order $\varepsilon$ in the notation of Appendix~\ref{app:adelgenform}.

Following the general structure explained there, we parameterize the slow dynamics with $\rho_s = (\rho_g,\; \rho_e)$ where both are nonnegative
operators and $\Tr(\rho_g+\rho_e) = 1$. Indeed, the linear superoperator 
$$K_0(\rho_s) = \rho_g \otimes \q{g}\qd{g}+\rho_e \otimes \q{e}\qd{e}$$
maps this reduced state onto the steady states of the fast dynamics $\mathcal{D}_{\sigma_z}$. Furthermore, the convergence under the fast dynamics happens according to 
$$\mathcal{R}(\rho) = (\qd{g}\rho\q{g}\;,\;\; \qd{e}\rho\q{e}) \;.$$

Applying $\mathcal{R}$ to the equation associated to $\varepsilon^1$ (see general expression above), we get
\begin{align*}
 \kappa_z \varepsilon\,L_{g,1} &=   \kappa_-  \qd{g} \mathcal{D}_{\sigma_-}( \rho_g \otimes \q{g}\qd{g}+\rho_e \otimes \q{e}\qd{e})  \q{g}\\
                                              &+   \kappa_+  \qd{g} \mathcal{D}_{\sigma_+}( \rho_g \otimes \q{g}\qd{g}+\rho_e \otimes \q{e}\qd{e})  \q{g} ,\\
\kappa_z \varepsilon\, L_{e,1} &=  \kappa_- \qd{e}  \mathcal{D}_{\sigma_-} ( \rho_g \otimes \q{g}\qd{g}+\rho_e \otimes \q{e}\qd{e})  \q{e} ,\\
                               &+  \kappa_+ \qd{e}  \mathcal{D}_{\sigma_+} ( \rho_g \otimes \q{g}\qd{g}+\rho_e \otimes \q{e}\qd{e})  \q{e} ,
\end{align*}
while the Hamiltonian moves $\q{g}\qd{g}$ and $\q{e}\qd{e}$ onto $\q{g}\qd{e}$ and $\q{e}\qd{g}$ which get canceled by $\mathcal{R}$. Working out the above yields the reported equation for $\kappa_z \varepsilon \mathcal{L}_{s,1}$.

Next, we go back to the $\varepsilon^1$ equation, without applying $\mathcal{R}$, and parameterize $\mathcal{K}_1(\rho_s) = \sum_{j,k \in \{g,e \}} \mathcal{K}_{j,k}(\rho_s) \otimes \q{j}\qd{k}$. We observe that $\mathcal{K}_{e,e} \otimes \q{e}\qd{e}$ and $\mathcal{K}_{g,g} \otimes \q{g}\qd{g}$ cancel under application of $\mathcal{L}_0 = \kappa_z \mathcal{D}_z$, and therefore these are gauge degrees of freedom associated to non-uniqueness of the parameterization; we can take them as $\mathcal{K}_{e,e} = \mathcal{K}_{g,g} = 0$ for simplicity. The remaining equations impose:
\begin{eqnarray*}
\varepsilon \, \mathcal{K}_{e,g}(\rho_s) &=& \tfrac{-g\, i}{2 \kappa_z} (T_x \rho_g - \rho_e T_x + i T_y \rho_g - i \rho_e T_y) \\
\varepsilon \, \mathcal{K}_{g,e}(\rho_s) &=& \tfrac{-g\, i}{2 \kappa_z} (T_x \rho_e - \rho_g T_x + i \rho_g T_y  - i T_y \rho_e) \;.
\end{eqnarray*}

This can be plugged into the equation associated to $\varepsilon^2$, to which again we apply $\mathcal{R}$ in order to obtain $\mathcal{L}_{s,2}$. In the term $\mathcal{L}_1(\mathcal{K}_1)$ from the abstract expression, now only the Hamiltonian contribution remains as it can map terms of the form $\q{g}\qd{e}$, $\q{e}\qd{g}$ in $\mathcal{K}_1$ towards terms in $\q{e}\qd{e}$, $\q{g}\qd{g}$ which are conserved by $\mathcal{R}$. Simple algebraic computations then yield the dynamics announced in the main text.

\begin{widetext}

\section{Derivation of the reduced model of section~\ref{sec:reduced_model}}\label{reduced_formulas}

In this section, we apply the general formulas derived in Appendix~\ref{app:ae_formulas} to derive the reduced model of Section~\ref{sec:reduced_model}.
We are dealing with the particular case of a bipartite system, so in the notation of app.~\ref{app:ae_formulas}, $\mathcal{H} = \mathcal{H}_T \otimes \mathcal{H}_E$.
For both the model of~\eqref{eq:model_strong} and~\eqref{eq:model_ultrastrong},
the fast dynamics acts only on E, and quickly drives it to a unique steady state $\bar{\rho}_E$ were it not for the T-E coupling, which is considered the perturbation with $g \ll \bestguess{\omega}{1}, \Lambda, \kappa_{\pm}, \kappa_{\alpha \pm}, \kappa_{\alpha x}$.
We will calculate $\bar{\rho}_E$ explicitly below for both cases, but it is clear that the unperturbed slow subspace $\mathcal{M}_0$ is given by the set of linear operators
\[X_T \otimes \bar{\rho}_E,\]
where $X_T$ acts on $\mathcal{H}_T$.
Hence, in the notation of app.~\ref{app:ae_formulas}, it is natural to choose $\mathcal{M}_s$ as the space of operators acting on $\mathcal{H}_T$,
and \[\mathcal{K}_0(\rho_s) = \rho_s \otimes \bar{\rho}_E.\]
In this way, $\mathcal{L}_{s,1}$, $\mathcal{L}_{s,2}$ are superoperators corresponding to the target Hilbert space $\mathcal{H}_T$ alone,
and the reduced model obtained can truly be seen as describing the induced decoherence on the target system.
How the target becomes entangled with the environment will be described by the map $\mathcal{K}_1$ up to first order in $g$.

\subsection{Case of strong driving}\label{sec:reduc_strong}

We recapitulate the full model here:
\begin{align}
    \dot{\rho} &= - i \frac{\Lambda}{2} \comm{\mathbb{1}_T \otimes \sigma_{\alpha x}}{\rho}
    + \kappa_- \mathcal{D}_{\mathbb{1}_T \otimes \sigma_-}(\rho) + \kappa_+ \mathcal{D}_{\mathbb{1}_T \otimes \sigma_+}(\rho)\nonumber\\
    &- ig \comm{T_z \otimes \sigma_z + e^{i \bestguess{\omega}{1} t} T_- \otimes \sigma_+ + e^{- i \bestguess{\omega}{1} t} T_+ \otimes \sigma_-}{\rho}.\nonumber
\end{align}

    In the notation of appendix~\ref{sec:ae_floquet}, assuming $g \ll \bestguess{\omega}{1}, \kappa_-$ at least, we can thus define $\varepsilon = \frac{g}{\bestguess{\omega}{1}}$,
\begin{equation}
    \mathcal{L}_0 = - i \frac{\Lambda}{2} \comm{\mathbb{1}_T \otimes \sigma_{\alpha x}}{\cdot}
    + \kappa_- \mathcal{D}_{\mathbb{1}_T \otimes \sigma_-} + \kappa_+ \mathcal{D}_{\mathbb{1}_T \otimes \sigma_+},
\end{equation}
and 
\begin{align*}
    \mathcal{L}_1(t) = - i \bestguess{\omega}{1} \comm{T_z \otimes \sigma_z + e^{i \bestguess{\omega}{1} t} T_- \otimes \sigma_+ + e^{- i \bestguess{\omega}{1} t} T_+ \otimes \sigma_-}{\cdot}.
\end{align*}

It is straightforward to verify that the fast dynamics $\mathcal{L}_0$ drives the environment to a unique steady state
\begin{equation}
    \bar{\rho}_E = \frac{\mathbb{1}_E + \xi_\infty \sigma_+ + \xi_\infty^* \sigma_- + z_\infty \sigma_z }{2},
\end{equation}
with
\begin{align}
    \xi_\infty &= -2 \Lambda \cos(\alpha) \frac{\kappa_\Delta}{\kappa_\Sigma} \frac{2 \Lambda \sin(\alpha) + i \kappa_\Sigma}{\kappa_\Sigma^2 + 2 \Lambda^2\qty(1 + \sin^2(\alpha))},\\
    z_\infty &= - \frac{\kappa_\Delta}{\kappa_\Sigma} \frac{4 \Lambda^2 \sin^2(\alpha) + \kappa_\Sigma^2}{\kappa_\Sigma^2 + 2 \Lambda^2\qty(1 + \sin^2(\alpha))},
\end{align}
where we have defined 
\begin{align}
    \kappa_\Sigma &:= \kappa_- + \kappa_+,\\
    \kappa_\Delta &:= \kappa_- - \kappa_+.
\end{align}

Since we are interested in the regime of strong driving where $\omega_2 \gg \kappa_\Sigma$, we also compute the leading order in $\frac{1}{\omega_2}$ of all quantities in this section.
For this, $\cos(\alpha)$ should be put to $1$ since $\alpha$ goes to zero with $\omega_2 \rightarrow \infty$ , and it
should be remembered that $\Lambda \sin(\alpha) = \NDelta$. Thus
\begin{align}
    \xi_\infty &= - \frac{\kappa_{\Delta}}{\kappa_{\Sigma}} \frac{\qty( i \kappa_{\Sigma} + 2 \NDelta)}{\omega_2} + \mathcal{O}\qty(\frac{1}{\omega_2^2}),\\
    z_\infty &= - \frac{\kappa_{\Delta}}{\kappa_{\Sigma}} \frac{\left(\kappa_{\Sigma}^{2} + 2 \NDelta^{2}\right)}{\omega_2^{2}}+ \mathcal{O}\qty(\frac{1}{\omega_2^3}) \; .
\end{align}
The steady state thus converges to the maximally mixed state in the limit of strong driving.

For the projector $\mathcal{R}$ we have \[\mathcal{R}(X_\textrm{TE}) = \Tr_E(X_\textrm{TE}) \otimes \bar{\rho}_E \qq*{,} \forall X_\textrm{TE}.\]
Equation~\eqref{eq:first_order_red} yields the following expression for the first-order reduced dynamics:
\begin{align*}
    \varepsilon \mathcal{L}_{s,1}(\rho_s) \otimes \bar{\rho}_E &= \varepsilon \mathcal{R}(\overline{\mathcal{L}_1(\rho_s \otimes \bar{\rho}_E)}) \nonumber\\
                                                   &= - i g \Tr_E(\comm{T_z \otimes \sigma_z}{\rho_s \otimes \bar{\rho}_E}) \otimes \bar{\rho}_E \nonumber\\
                                                   &= - i g z_\infty \comm{T_z}{\rho_s} \otimes \bar{\rho}_E,
\end{align*}
readily yielding
\begin{equation}
    \varepsilon \mathcal{L}_{s,1}(\rho_s) = - i g z_\infty \comm{T_z}{\rho_s}.
\end{equation}

Equation~\eqref{eq:primitivation} in turn yields
\begin{align}
    \varepsilon \mathcal{R} \mathcal{K}_1(\rho_s) &= \varepsilon \mathcal{R} \partial_t^{-1} \tilde{\mathcal{L}}_1 \mathcal{K}_0(\rho_s)\nonumber\\
                              &= - \frac{g}{\bestguess{\omega}{1}} \Tr_E(\comm{e^{i \bestguess{\omega}{1} t} T_- \otimes \sigma_+ - e^{- i \bestguess{\omega}{1} t} T_+ \otimes \sigma_-}{\rho_s \otimes \bar{\rho}_E}) \otimes \bar{\rho}_E\nonumber\\
                              &= - \frac{i g}{2 \bestguess{\omega}{1}} \comm{i \xi_\infty^* e^{i \bestguess{\omega}{1}
                              t} T_- - i \xi_\infty e^{- i \bestguess{\omega}{1} t} T_+}{\rho_s} \otimes
                              \bar{\rho}_E,\label{eq:K_1_a}
\end{align}
where we have put the integration constant to zero as a gauge choice.
Equation~\eqref{eq:first_order_pseudo_inverse} yields a second part of $\mathcal{K}_1$:
\begin{align}
    \varepsilon \mathcal{L}_0(1 - \mathcal{R}) \bar{\mathcal{K}}_1(\rho_s) &= - \varepsilon (1 - \mathcal{R}) \bar{\mathcal{L}}_1 \mathcal{K}_0(\rho_s)\nonumber\\
                                                                &= i g \comm{T_z \otimes \sigma_z}{\rho_s \otimes \bar{\rho}_E}\nonumber\\
                                                                &- i g \Tr_E(\comm{T_z \otimes \sigma_z}{\rho_s \otimes \bar{\rho}_E}) \otimes \bar\rho_E \nonumber\\
                                                                &= i g \qty(T_z \rho_s \otimes \bar{\sigma}_z
                                                                \bar{\rho}_E - \rho_s T_z \otimes \bar{\rho}_E
                                                                \bar{\sigma}_z),
\end{align}
with $\bar{\sigma}_z = \sigma_z - \Tr(\sigma_z \bar{\rho}_E)\mathbb{1}_E = \sigma_z - z_\infty \mathbb{1}_E$.
Note that taking the partial trace over E of the right-hand side gives zero, since $\Tr(\bar{\sigma}_z \bar{\rho}_E) = 0$.
Hence $\mathcal{L}_0$ can be inverted to obtain, formally,
\begin{equation}\label{eq:K_1_b}
    \varepsilon (1 - \mathcal{R}) \bar{\mathcal{K}}_1(\rho_s) = i g \qty(T_z \rho_s \otimes \mathcal{L}_0^{-1}(\bar{\sigma}_z \bar{\rho}_E) - \rho_s T_z \otimes \mathcal{L}_0^{-1}(\bar{\rho}_E \bar{\sigma}_z)).
\end{equation}
To carry out the inversion we use matrix representations in the Pauli basis. In the standard Pauli basis ($\sigma_x, \sigma_y, \sigma_z$), we obtain the following matrix representation for $\mathcal{L}_0$:
\begin{equation}
    \qty[\mathcal{L}_0] = \mqty(- \frac{\kappa_\Sigma}{2} & - \Lambda \sin(\alpha) & 0  \\ \Lambda \sin(\alpha) & - \frac{\kappa_\Sigma}{2} & - \Lambda \cos(\alpha) \\
    0 & \Lambda \cos(\alpha) & - \kappa_\Sigma ),
\end{equation}
with $\mathrm{det}\qty[\mathcal{L}_0] = - \frac{\kappa_\Sigma}{4} \qty(\kappa_\Sigma^2 + 2 \Lambda^2 (1 + \sin^2(\alpha)))$.
For its inverse $\qty[\mathcal{L}_0^{-1}]$ we hence obtain
\begin{equation*}
    \frac{1}{\mathrm{det}\qty[\mathcal{L}_0]}\mqty(\frac{\kappa_{\Sigma}^{2}}{2} + \Lambda^{2} \cos^{2}{\left(\alpha \right)} & - \kappa_{\Sigma} \Lambda \sin{\left(\alpha \right)} & \frac{\Lambda^{2} \sin{\left(2 \alpha \right)}}{2}\\\kappa_{\Sigma} \Lambda \sin{\left(\alpha \right)} & \frac{\kappa_{\Sigma}^{2}}{2} & - \frac{\kappa_{\Sigma} \Lambda \cos{\left(\alpha \right)}}{2}\\\frac{\Lambda^{2} \sin{\left(2 \alpha \right)}}{2} & \frac{\kappa_{\Sigma} \Lambda \cos{\left(\alpha \right)}}{2} & \frac{\kappa_{\Sigma}^{2}}{4} + \Lambda^{2} \sin^{2}{\left(\alpha \right)}).
\end{equation*}
In turn, $\bar{\sigma}_z \bar{\rho}_E$ takes the following vector representation in the Pauli basis:
\begin{equation}
    \qty[\bar{\sigma}_z \bar{\rho}_E] = \frac{1}{2} \mqty(- i y_\infty - x_\infty z_\infty \\ i x_\infty - y_\infty z_\infty \\ 1 - z_\infty^{2}).
\end{equation}
Straightforward but tedious calculations then give
\begin{equation}\label{eq:exact_A_z_bar}
    \qty[\mathcal{L}_0^{-1}(\bar{\sigma}_z \bar{\rho}_E)] = \frac{1}{8 \, \mathrm{det}\qty[\mathcal{L}_E]}
    \mqty(- 4 \kappa_{\Sigma} \omega_2 \left(i x_{\infty} - y_{\infty} z_{\infty}\right) \sin{\left(\alpha \right)} - 2 \omega_2^{2} \left(z_{\infty}^{2} - 1\right) \sin{\left(2 \alpha \right)} - 2 \left(\kappa_{\Sigma}^{2} + 2 \omega_2^{2} \cos^{2}{\left(\alpha \right)}\right) \left(x_{\infty} z_{\infty} + i y_{\infty}\right)\\2 \kappa_{\Sigma} \left(\kappa_{\Sigma} \left(i x_{\infty} - y_{\infty} z_{\infty}\right) + \omega_2 \left(z_{\infty}^{2} - 1\right) \cos{\left(\alpha \right)} - 2 \omega_2 \left(x_{\infty} z_{\infty} + i y_{\infty}\right) \sin{\left(\alpha \right)}\right)\\2 \kappa_{\Sigma} \omega_2 \left(i x_{\infty} - y_{\infty} z_{\infty}\right) \cos{\left(\alpha \right)} - 2 \omega_2^{2} \left(x_{\infty} z_{\infty} + i y_{\infty}\right) \sin{\left(2 \alpha \right)} + \left(1 - z_{\infty}^{2}\right) \left(\kappa_{\Sigma}^{2} + 4 \omega_2^{2} \sin^{2}{\left(\alpha \right)}\right)).
\end{equation}
Focussing on the leading-order in $\frac{1}{\omega_2}$ yields the following:
\begin{equation}
    \qty[\mathcal{L}_0^{-1}(\bar{\sigma}_z \bar{\rho}_E)] = 
    \mqty(\frac{- \NDelta + i \kappa_{\Delta}}{\kappa_{\Sigma} \omega_2}\\\frac{1}{2 \omega_2}\\\frac{\NDelta \left(- \NDelta + i \kappa_{\Delta}\right) - \frac{\kappa_{\Sigma} \left(2 i \kappa_{\Delta} + \kappa_{\Sigma}\right)}{4}}{\kappa_{\Sigma} \omega_2^{2}}),
\end{equation}
and further
\begin{equation*}
    \varepsilon (1 - \mathcal{R}) \bar{\mathcal{K}}_1(\rho_s) = \frac{g}{\Lambda} (i T_z \otimes \bar{M}_z) (\rho_s \otimes \bar{\rho}_E) +  \frac{g}{\Lambda} (\rho_s \otimes \bar{\rho}_E) {\qty(i T_z \otimes \bar{M}_z)}^\dag,
\end{equation*}
with
\begin{equation}
    \qty[\bar{M}_z] = \mqty(- \frac{2 \NDelta}{\kappa_{\Sigma}}\\1\\\frac{- 4 \NDelta^{2} + 4 \kappa_{\Delta}^{2} - \kappa_{\Sigma}^{2}}{2 \kappa_{\Sigma} \omega_2}\\\frac{\kappa_{\Delta} \left(2 \NDelta - \kappa_{\Sigma}\right)}{\kappa_{\Sigma} \omega_2})
              + i \mqty(\frac{2 \kappa_{\Delta}}{\kappa_{\Sigma}}\\0\\\frac{4 \NDelta \kappa_{\Delta}}{\kappa_{\Sigma} \omega_2}\\- \frac{2 \kappa_{\Delta}^{2}}{\kappa_{\Sigma} \omega_2}).
\end{equation}

For the last part of $\mathcal{K}_1$, consider~\eqref{eq:new_pseudo_inverse}:
\begin{equation}\label{eq:laatste_K}
     \varepsilon (\mathcal{L}_0 - \partial_t) (1 - \mathcal{R}) \tilde{\mathcal{K}}_1(\rho_s) =  i g (1 - \mathcal{R})
    \qty(\comm{e^{i \bestguess{\omega}{1} t} T_- \otimes \sigma_+ + e^{- i \bestguess{\omega}{1} t} T_+ \otimes \sigma_-}{\rho_s \otimes \bar{\rho}_E}).
\end{equation}
Introducing
\begin{align}
    \bar{\sigma}_+ &:= \sigma_+ - \Tr(\sigma_+ \bar{\rho}_E) \mathbb{1}_E = \sigma_+ - \frac{\xi_\infty^*}{2} \mathbb{1}_E,\nonumber \\
    \bar{\sigma}_- &:= \sigma_- - \Tr(\sigma_- \bar{\rho}_E) \mathbb{1}_E = \sigma_- - \frac{\xi_\infty}{2} \mathbb{1}_E,\nonumber
\end{align}
we can write the right-hand side of~\eqref{eq:laatste_K} as
\begin{equation*}
     i g \comm{e^{i \bestguess{\omega}{1} t} T_- \otimes \bar{\sigma}_+ + e^{- i \bestguess{\omega}{1} t} T_+ \otimes \bar{\sigma}_-}{\rho_s \otimes \bar{\rho}_E}
    =i g e^{i \bestguess{\omega}{1} t} T_- \rho_s \otimes \bar{\sigma}_+ \bar{\rho}_E + i g e^{- i \bestguess{\omega}{1} t} T_+ \rho_s \otimes \bar{\sigma}_- \bar{\rho}_E + \textrm{h.c.}
\end{equation*}
At this point we can split $(1 - \mathcal{R}) \tilde{\mathcal{K}}_1$ up into two parts:
\begin{align}
    \varepsilon (1 - \mathcal{R}) \tilde{\mathcal{K}}_1(\rho_s) &= i g e^{i \bestguess{\omega}{1} t} {\qty(\mathcal{L}_0 - i \bestguess{\omega}{1})}^{-1}\qty(T_- \rho_s \otimes \bar{\sigma}_+ \bar{\rho}_E)
                  + i g e^{-i \bestguess{\omega}{1} t} {\qty(\mathcal{L}_0 + i \bestguess{\omega}{1})}^{-1}\qty(T_+ \rho_s \otimes \bar{\sigma}_- \bar{\rho}_E) + \textrm{h.c.}  \nonumber\\
                  &= i g e^{i \bestguess{\omega}{1} t} T_- \rho_s \otimes {\qty(\mathcal{L}_0 - i \bestguess{\omega}{1})}^{-1}\qty(\bar{\sigma}_+ \bar{\rho}_E)
                  + i g e^{-i \bestguess{\omega}{1} t} T_+ \rho_s \otimes {\qty(\mathcal{L}_0 + i
                  \bestguess{\omega}{1})}^{-1}\qty(\bar{\sigma}_- \bar{\rho}_E) + \textrm{h.c.} \label{eq:K_1_c}
\end{align}
We obtain the following matrix representations:
\begin{equation*}
    \qty[\bar{\sigma}_+ \bar{\rho}_E] = \frac{1}{8} \mqty(- 2 z_{\infty} - \left(\xi_{\infty} + \xi^*_{\infty}\right) \xi^*_{\infty} + 2\\i \left(- 2 z_{\infty} - 1 \left(\xi_{\infty} - \xi^*_{\infty}\right) \xi^*_{\infty} + 2\right)\\\left(2 - 2 z_{\infty}\right) \xi^*_{\infty}),
\end{equation*}
and

\begin{align*}
    &\hspace{5cm}{\mathrm{det}\qty[\mathcal{L}_0 \mp i \bestguess{\omega}{1}]} \qty[{\qty(\mathcal{L}_0 \mp i \bestguess{\omega}{1})}^{-1}] = \\
        &\frac{1}{2} \mqty(
\kappa_{\Sigma}^{2} + 3 i \kappa_{\Sigma} \bestguess{\omega}{1} + 2 \omega_2^{2} \cos^{2}{\left(\alpha \right)} - 2 \bestguess{\omega}{1}^{2} & - 2 \omega_2 \left(\kappa_{\Sigma} + i \bestguess{\omega}{1}\right) \sin{\left(\alpha \right)} & \omega_2^{2} \sin{\left(2 \alpha \right)}\\2 \omega_2 \left(\kappa_{\Sigma} + i \bestguess{\omega}{1}\right) \sin{\left(\alpha \right)} & \kappa_{\Sigma}^{2} + 3 i \kappa_{\Sigma} \bestguess{\omega}{1} - 2 \bestguess{\omega}{1}^{2} & - \omega_2 \left(\kappa_{\Sigma} + 2 i \bestguess{\omega}{1}\right) \cos{\left(\alpha \right)}\\\omega_2^{2} \sin{\left(2 \alpha \right)} & \omega_2 \left(\kappa_{\Sigma} + 2 i \bestguess{\omega}{1}\right) \cos{\left(\alpha \right)} & \frac{\kappa_{\Sigma}^{2}}{2} + 2 i \kappa_{\Sigma} \bestguess{\omega}{1} + 2 \omega_2^{2} \sin^{2}{\left(\alpha \right)} - 2 \bestguess{\omega}{1}^{2}
        ),
\end{align*}
with
\begin{equation}
    \mathrm{det}\qty[\mathcal{L}_0 \mp i \bestguess{\omega}{1}] = - \frac{\kappa_{\Sigma}^{3}}{4} + \frac{\kappa_{\Sigma} \omega_2^{2} \cos^{2}{\left(\alpha \right)}}{2} - \kappa_{\Sigma} \omega_2^{2} + 2 \kappa_{\Sigma} \bestguess{\omega}{1}^{2}
     \pm i \qty(\frac{5 \kappa_{\Sigma}^{2} \bestguess{\omega}{1}}{4} - \omega_2^{2} \bestguess{\omega}{1} + \bestguess{\omega}{1}^{3}).
\end{equation}
Tedious calculations then show that
\begin{subequations}
\label{eq:M_pm_bar_exact}
    \begin{align}
        {\qty(\mathcal{L}_0 - i \bestguess{\omega}{1})}^{-1}\qty(\bar{\sigma}_+ \bar{\rho}_E) = \frac{1}{\kappa_\Sigma + i \bestguess{\omega}{1}} \bar{M}_+ \bar{\rho}_E,\\
        {\qty(\mathcal{L}_0 + i \bestguess{\omega}{1})}^{-1}\qty(\bar{\sigma}_- \bar{\rho}_E) = \frac{1}{\kappa_\Sigma - i \bestguess{\omega}{1}} \bar{M}_- \bar{\rho}_E,
    \end{align}
\end{subequations}
with $\bar{M}_\pm$ operators such that $\Tr(\bar{M}_\pm \bar{\rho}_E) = 0$, and that, up to
leading-order in $\frac{1}{\omega_2}$ take the form
\begin{align}
    \qty[\bar{M}_+] = \frac{1}{2}
    \mqty(-1 + \mathcal{O}\qty(\frac{1}{\omega_2}) \\
    - \frac{\left(\kappa_{\Sigma} + 2 i \bestguess{\omega}{1}\right) \left(4 \NDelta^{2} \kappa_{\Delta} + \kappa_{\Sigma} \left(2 \NDelta \kappa_{\Delta} + 2 i \NDelta \kappa_{\Sigma} - 2 \NDelta \bestguess{\omega}{1} + 2 \kappa_{\Delta} \kappa_{\Sigma} + 2 i \kappa_{\Delta} \bestguess{\omega}{1} - \kappa_{\Sigma}^{2} - 3 i \kappa_{\Sigma} \bestguess{\omega}{1} + 2 \bestguess{\omega}{1}^{2}\right)\right)}{2 \kappa_{\Sigma} \omega_2^{2} \left(i \kappa_{\Sigma} - 2 \bestguess{\omega}{1}\right)} + \mathcal{O}\qty(\frac{1}{\omega_2^3})\\
    \frac{- \NDelta + i \kappa_{\Delta} - \frac{i \kappa_{\Sigma}}{2} + \bestguess{\omega}{1}}{\omega_2} + \mathcal{O}\qty(\frac{1}{\omega_2^2})\\
    \frac{\kappa_{\Delta}}{\omega_2} + \mathcal{O}\qty(\frac{1}{\omega_2^2})),\\
    \qty[\bar{M}_-] = \frac{1}{2}
    \mqty(-1 + \mathcal{O}\qty(\frac{1}{\omega_2})\\
    - \frac{\left(\kappa_{\Sigma} - 2 i \bestguess{\omega}{1}\right) \left(4 \NDelta^{2} \kappa_{\Delta} + \kappa_{\Sigma} \left(2 \NDelta \kappa_{\Delta} + 2 i \NDelta \kappa_{\Sigma} + 2 \NDelta \bestguess{\omega}{1} + 2 \kappa_{\Delta} \kappa_{\Sigma} - 2 i \kappa_{\Delta} \bestguess{\omega}{1} + \kappa_{\Sigma}^{2} - 3 i \kappa_{\Sigma} \bestguess{\omega}{1} - 2 \bestguess{\omega}{1}^{2}\right)\right)}{2 \kappa_{\Sigma} \omega_2^{2} \left(i \kappa_{\Sigma} + 2 \bestguess{\omega}{1}\right)} + \mathcal{O}\qty(\frac{1}{\omega_2^3})\\
    \frac{- \NDelta + i \kappa_{\Delta} + \frac{i \kappa_{\Sigma}}{2} + \bestguess{\omega}{1}}{\omega_2} + \mathcal{O}\qty(\frac{1}{\omega_2^2})\\
    \frac{\kappa_{\Delta}}{\omega_2} + \mathcal{O}\qty(\frac{1}{\omega_2^2})).
\end{align}
Hence we can write 
\begin{equation}
    \bar{M}_+ = \bar{M}_- = - \frac{\sigma_x}{2} + \mathcal{O}\qty(\frac{1}{\omega_2}).
\end{equation}

Putting all this together, we can write
\begin{align*}
    \varepsilon (1 - \mathcal{R}) \tilde{\mathcal{K}}_1(\rho_s)
    =& \qty(\frac{i g}{\kappa_\Sigma + i \bestguess{\omega}{1}} e^{i \bestguess{\omega}{1} t} T_- \otimes \bar{M}_+) (\rho_s \otimes \bar{\rho}_E) + (\rho_s \otimes \bar{\rho}_E) {\qty(\frac{i g}{\kappa_\Sigma + i \bestguess{\omega}{1}} e^{i \bestguess{\omega}{1} t} T_- \otimes \bar{M}_+)}^\dag\\
    +& \qty(\frac{i g}{\kappa_\Sigma - i \bestguess{\omega}{1}} e^{- i \bestguess{\omega}{1} t} T_+ \otimes \bar{M}_-) (\rho_s \otimes \bar{\rho}_E) + (\rho_s \otimes \bar{\rho}_E) {\qty(\frac{i g}{\kappa_\Sigma - i \bestguess{\omega}{1}} e^{- i \bestguess{\omega}{1} t} T_+ \otimes \bar{M}_-)}^\dag.
\end{align*}

For the second-order reduced dynamics,~\eqref{eq:general_red_dyn} for $k = 2$ gives
\begin{align}
    &\mathcal{K}_0 \mathcal{L}_{s,2}(\rho_s) = \mathcal{L}_{s,2}(\rho_s) \otimes \bar{\rho}_E\\
    =& \mathcal{R} \overline{\mathcal{L}_1 \mathcal{K}_1}(\rho_s) = \Tr_E(\overline{\mathcal{L}_1 \mathcal{K}_1}(\rho_s)) \otimes \bar{\rho}_E,
\end{align}
so 
\begin{equation}
    \mathcal{L}_{s,2}(\rho_s) = \Tr_E(\bar{\mathcal{L}}_1 (1 - \mathcal{R}) \bar{\mathcal{K}}_1(\rho_s))
                              +\Tr_E\qty(\overline{\tilde{\mathcal{L}}_1 \mathcal{R} \tilde{\mathcal{K}}_1(\rho_s)})
                              +\Tr_E\qty(\overline{\tilde{\mathcal{L}}_1 (1 - \mathcal{R}) \tilde{\mathcal{K}}_1(\rho_s)})\label{eq:ls_2_terms}.
\end{equation}

It is straightforward to verify that
\begin{equation}
    \Tr_E(\bar{\mathcal{L}}_1 (1 - \mathcal{R}) \bar{\mathcal{K}}_1(\rho_s)) = \frac{\bestguess{\omega}{1}^2}{\omega_2} \Tr(\sigma_z \bar{M}_z \bar{\rho}_E) \qty(T_z^2 \rho_s - T_z \rho_s T_z)
                                                                             + \frac{\bestguess{\omega}{1}^2}{\omega_2} \Tr(\sigma_z \bar{\rho}_E \bar{M}_z^\dag) \qty(\rho_s T_z^2 - T_z \rho_s T_z),
\end{equation}
and using \[\Tr(\sigma_z \bar{M}_z \bar{\rho}_E) = - \frac{4 \NDelta^{2} + \kappa_{\Sigma}^{2}}{2 \kappa_{\Sigma} \omega_2}
+ i \frac{\kappa_{\Delta} \left(2 \NDelta - \kappa_{\Sigma}\right)}{\kappa_{\Sigma} \omega_2},\]
we obtain that
\begin{equation}
    \Tr_E(\bar{\mathcal{L}}_1 (1 - \mathcal{R}) \bar{\mathcal{K}}_1(\rho_s)) = \bestguess{\omega}{1}^2 \frac{\left(4 \NDelta^{2} + \kappa_{\Sigma}^{2}\right)}{\kappa_{\Sigma} \omega_2^{2}} \mathcal{D}_{T_z}(\rho_s)
                - i \bestguess{\omega}{1}^2 \frac{\kappa_{\Delta} \left(- 2 \NDelta + \kappa_{\Sigma}\right)}{\kappa_{\Sigma} \omega_2^{2}} \comm{T_z^2}{\rho_s}.
\end{equation}

For the second term in equation~\eqref{eq:ls_2_terms} we obtain
\begin{equation}
    \Tr_E\qty(\overline{\tilde{\mathcal{L}}_1 \mathcal{R} \tilde{\mathcal{K}}_1(\rho_s)}) = 
    i \frac{\xi_\infty}{2} \bestguess{\omega}{1} \Tr(\sigma_+ \bar{\rho}_E) \comm{T_-}{\comm{T_+}{\rho_s}} - i \frac{\xi_\infty^*}{2} \bestguess{\omega}{1}\Tr(\sigma_- \bar{\rho}_E) \comm{T_+}{\comm{T_-}{\rho_s}},
\end{equation}
and using $\Tr(\sigma_+ \bar{\rho}_E) = \frac{\xi_\infty^*}{2}$ we obtain that
\begin{equation}
    \Tr_E\qty(\overline{\tilde{\mathcal{L}}_1 \mathcal{R} \tilde{\mathcal{K}}_1(\rho_s)}) = 
    - i \frac{\xi_\infty^* \xi_\infty}{4} \bestguess{\omega}{1} \comm{\comm{T_+}{T_-}}{\rho_s}.
\end{equation}

For the third term in equation~\eqref{eq:ls_2_terms} we obtain
\begin{align}
    \Tr_E\qty(\overline{\tilde{\mathcal{L}}_1 (1 - \mathcal{R}) \tilde{\mathcal{K}}_1(\rho_s)})
     &= a_+ \qty(T_- T_+ \rho_s - T_+ \rho_s T_-) - a_-^* \qty(T_- \rho_s T_+ - \rho_s T_+ T_-)\\
     &+ a_- \qty(T_+ T_- \rho_s - T_- \rho_s T_+) - a_+^* \qty(T_+ \rho_s T_- - \rho_s T_- T_+),
\end{align}
with 
\begin{align}
    a_+ &= \frac{\Tr(\sigma_+ \bar{M}_- \bar{\rho}_E)}{\kappa_\Sigma - i \bestguess{\omega}{1}} \bestguess{\omega}{1}^2,\\
    a_- &= \frac{\Tr(\sigma_- \bar{M}_+ \bar{\rho}_E)}{\kappa_\Sigma + i \bestguess{\omega}{1}} \bestguess{\omega}{1}^2.
\end{align}
Retaining the leading-order terms in $\frac{1}{\omega_2}$ for $a_+$ and $a_-$, we readily obtain
\begin{equation}
    \Tr_E\qty(\overline{\tilde{\mathcal{L}}_1 (1 - \mathcal{R}) \tilde{\mathcal{K}}_1(\rho_s)})
    = \frac{\kappa_{\Sigma} \bestguess{\omega}{1}^{2}}{\kappa_{\Sigma}^{2} + 4 \bestguess{\omega}{1}^{2}}
    \qty(\mathcal{D}_{T_-} + \mathcal{D}_{T_-}) + i \frac{\bestguess{\omega}{1}^3}{\kappa_\Sigma^2 + 4 \bestguess{\omega}{1}^2} \comm{\comm{T_+}{T_-}}{\rho_s}.
\end{equation}

Putting all of the calculations of this section together, we obtain the following second-order reduced model.
For the slow dynamics we obtain an explicit Lindbladian model
\begin{align}
    \mathcal{L}_{s,g}(\rho_s) &=  - i  \comm{\omega_{s,z,1} T_z + \omega_{s,z,2} T_z^2 + \omega_{s,c} \comm{T_+}{T_-} + \omega_{s,a} (T_+ T_- + T_- T_+)}{\rho_s}\nonumber \\
    &+ \kappa_{s,z} \mathcal{D}_{T_z}(\rho_s) + \kappa_{s,\pm} \qty(\mathcal{D}_{T_-} + \mathcal{D}_{T_+})
    + \mathcal{O}\qty(g \varepsilon^2),\label{eq:lindbladian_strong}
\end{align}
with, up to leading order in $\frac{1}{\omega_2}$,
\begin{align}
    \omega_{s,z,1}   &= - \frac{\kappa_{\Delta} g \left(4 \NDelta^{2} + \kappa_{\Sigma}^{2}\right)}{2 \kappa_{\Sigma} \omega_2^{2}},\\
    \omega_{s,z,2}   &= \frac{\kappa_{\Delta} g^{2} \left(- 2 \NDelta + \kappa_{\Sigma}\right)}{2 \kappa_{\Sigma} \omega_2^{2}},\\
    \omega_{s,c} &= \frac{g^{2} \bestguess{\omega}{1}}{\kappa_{\Sigma}^{2} + 4 \bestguess{\omega}{1}^{2}},\\
    \kappa_{s,z}   &= \frac{g^{2} \left(4 \NDelta^{2} + \kappa_{\Sigma}^{2}\right)}{\kappa_{\Sigma} \omega_2^{2}},\\
    \kappa_{s,\pm} &= \frac{\kappa_{\Sigma} g^{2}}{\kappa_{\Sigma}^{2} + 4 \bestguess{\omega}{1}^{2}} \, .
\end{align}
For the embedding of the slow subspace we obtain a completely positive map up to second order terms:
\begin{equation}\label{eq:kraus_map_strong}
    \mathcal{K}_{s,g}(\rho_s) = K_g (\rho_s \otimes \bar{\rho}_E) K_g^\dag + \mathcal{O}(\varepsilon^2),
\end{equation}
with, up to leading-order in $\frac{1}{\omega_2}$ for every term,
\begin{align}
    K_g :=  1 - i \frac{\kappa_\Delta}{\kappa_\Sigma} \frac{g}{\omega_2} H_s \otimes \mathbb{1}_E
    + i \frac{g}{\omega_2} T_z \otimes \sigma_y - i \frac{2 g \NDelta}{\kappa_\Sigma \omega_2} T_z \otimes \sigma_x
    - i \frac{g}{\sqrt{\kappa_\Sigma^2 + 4 \bestguess{\omega}{1}^2}} H_{s,\pm} \otimes \sigma_x - 2 \frac{\kappa_\Delta}{\kappa_\Sigma} \frac{g}{\omega_2} T_z \otimes \sigma_x,
\end{align}
and we have defined
\begin{align}
    H_s       &= - \frac{\kappa_\Sigma + 2 i \NDelta}{2 \bestguess{\omega}{1}} e^{i \bestguess{\omega}{1} t} T_- -
    \frac{\kappa_\Sigma - 2 i \NDelta}{2 \bestguess{\omega}{1}} e^{- i \bestguess{\omega}{1} t} T_+,\\
    H_{s,\pm} &= \frac{(\kappa_\Sigma - 2 i \bestguess{\omega}{1}) e^{i \bestguess{\omega}{1} t} T_- + (\kappa_\Sigma + 2 i \bestguess{\omega}{1}) e^{- i \bestguess{\omega}{1} t} T_+}{\sqrt{\kappa_\Sigma^2 + 4 \bestguess{\omega}{1}^2}}.
\end{align}

We here reported the leading-order of all different terms in $\frac{1}{\omega_2}$, hence approximating the exact expression of
$\mathcal{K}_1$ as defined in~\eqref{eq:K_1_a},~\eqref{eq:K_1_b} and~\eqref{eq:K_1_c} in the limit of large $\omega_2$.
When using the exact expressions, it is straightforward to show that $\Tr(\mathcal{K}_1(\rho_s)) = 0$,
since~\eqref{eq:K_1_a},~\eqref{eq:exact_A_z_bar} and~\eqref{eq:M_pm_bar_exact}, are traceless expressions.
We then obtain that $\Tr_E(\mathcal{K}_{s,g}(\rho_s)) =   \Tr(\mathcal{K}_0(\rho_s)) + \varepsilon
\Tr(\mathcal{K}_1(\rho_s)) + \mathcal{O}(\varepsilon^2) = \Tr(\rho_s) + \mathcal{O}(\varepsilon^2) $,
and thus up to order $\varepsilon^2$, $\mathcal{K}_{s,g}$ is also trace-preserving, and hence CPTP.

\subsubsection*{Discussion of Hamiltonian terms}

The exact first-order slow dynamics $\mathcal{L}_{s,1}$ is given by the Hamiltonian
$\omega_{s,z,1} T_z$, with
$$\omega_{s,z,1} = - \frac{\kappa_- - \kappa_+}{\kappa_- + \kappa_+} \frac{4 \NDelta^{2} + {(\kappa_- + \kappa_+)}^{2}}{4 \NDelta^{2} + {(\kappa_- + \kappa_+)}^{2} + 2 \omega_{2}^{2}}.$$
Regarding the system parameters, we can see that this contribution is largest for a TLS coupled to a cold bath, and disappears in the limit of a hot bath, where $\kappa_- = \kappa_+$.
Since the imperfect detuning $\NDelta$ appears, we cannot expect to have exact knowledge of $\omega_{s,z,1}$.
However, if $\NDelta$ can be assumed constant, then the term can be calibrated experimentally and corrected for.
Remark that such a Lamb-shift type Hamiltonian is present in the absence of driving as well, and only the frequency is altered through the driving.
Regarding the QDD control, the term goes like $\sim \frac{1}{\omega_2^2}$ for large $\omega_2$, and hence it is suppressed for strong driving, although this was not explicitly part of our goal.

For the first Hamiltonian term at second order, we obtain
$$\omega_{s,z,2} = - \frac{16 \NDelta \kappa_{\Delta} \omega_{2}^{2} g^{2}}{\kappa_{\Sigma}
\left(4 \NDelta^{2} + \kappa_{\Sigma}^{2} + 2 \omega_{2}^{2}\right)^{2}}.$$
We obtain the same conclusion as for $\omega_{s,z,1}$, namely $\omega_{s,z,2}$ is minimal for a hot bath,
and decreases like $\frac{1}{\omega_2^2}$ under the QDD controls.
The full expressions for the remaining two Hamiltonian terms are more involved.
Directly focussing in the regime for large $\omega_2$, we obtain a Hamiltonian $\omega_{s,c} \comm{T_+}{T_-}$ with
\begin{equation}
    \omega_{s,c} = - \frac{\bestguess{\omega}{1}}{\kappa_{\Sigma}^{2} + 4 \bestguess{\omega}{1}^{2}} + \mathcal{O}\qty(\frac{1}{\omega_2^2}),
\end{equation}
as above, but also an additional Hamiltonian $\omega_{s,a} \qty(T_+ T_- + T_- T_+)$, with
\begin{equation}
    \omega_{s,a} = - \frac{\kappa_{\Delta} \bestguess{\omega}{1} \left(4 \NDelta \bestguess{\omega}{1} + \kappa_{\Sigma}^{2}\right)}{2 \kappa_{\Sigma} \omega_{2}^{2} \left(\kappa_{\Sigma}^{2} + 4 \bestguess{\omega}{1}^{2}\right)}
    + \mathcal{O}\qty(\frac{1}{\omega_2^4}).
\end{equation}
We can again see that the QDD controls suppress these Hamiltonian contributions asymptotically for large $\bestguess{\omega}{1}$ and $\omega_2$.

\subsection{Case of ultra-strong driving}\label{sec:calculations_ultra_strong}

We recapitulate the full model here, the dissipation model being given in~\eqref{eq:dissipator_bis}:
\begin{align}
    \dot{\rho} &= - i \frac{\Lambda}{2} \comm{\mathbb{1}_T \otimes \sigma_{\alpha x}}{\rho}
    + \kappa_{\alpha x} \mathcal{D}_{\mathbb{1}_T \otimes \sigma_{\alpha x}}(\rho)
    + \kappa_{\alpha-} \mathcal{D}_{\mathbb{1}_T \otimes \sigma_{\alpha-}}(\rho) + \kappa_{\alpha+} \mathcal{D}_{\mathbb{1}_T \otimes
    \sigma_{\alpha+}}(\rho)\nonumber\\
    &- ig \comm{T_z \otimes \sigma_z + e^{i \bestguess{\omega}{1} t} T_- \otimes \sigma_+ + e^{- i \bestguess{\omega}{1} t} T_+ \otimes \sigma_-}{\rho}.\nonumber
\end{align}

In the notation of appendix~\ref{sec:ae_floquet}, we can similarly write $\varepsilon = \frac{g}{\bestguess{\omega}{1}}$,
\begin{equation}
    \mathcal{L}_0 = - i \frac{\Lambda}{2} \comm{\mathbb{1}_T \otimes \sigma_{\alpha x}}{\cdot}
    + \kappa_{\alpha x} \mathcal{D}_{\mathbb{1}_T \otimes \sigma_{\alpha x}}
    + \kappa_{\alpha-} \mathcal{D}_{\mathbb{1}_T \otimes \sigma_{\alpha-}} + \kappa_{\alpha+} \mathcal{D}_{\mathbb{1}_T \otimes \sigma_{\alpha+}},
\end{equation}
and we still have
\begin{align*}
    \mathcal{L}_1(t) = - i \bestguess{\omega}{1} \comm{T_z \otimes \sigma_z + e^{i \bestguess{\omega}{1} t} T_- \otimes \sigma_+ + e^{- i \bestguess{\omega}{1} t} T_+ \otimes \sigma_-}{\cdot}.
\end{align*}

It is straightforward to verify that the fast dynamics $\mathcal{L}_0$ drives the environment to a unique steady state
\begin{equation}
    \bar{\rho}_E = \frac{\mathbb{1}_E +  x_{\alpha, \infty} \sigma_{\alpha x} }{2},
\end{equation}
with
\begin{align}
    x_{\alpha, \infty} &= \frac{\kappa_{\alpha_+} - \kappa_{\alpha_-}}{\kappa_{\alpha_+} + \kappa_{\alpha_-}}.
\end{align}
For the following it is instructive to define
\begin{align}
    \kappa_{\alpha_\Sigma} &:= \kappa_- + \kappa_+,\\
    \kappa_{\alpha_\Delta} &:= \kappa_- - \kappa_+,
\end{align}
so $x_{\alpha, \infty} = - \frac{\kappa_{\alpha_\Delta}}{\kappa_{\alpha_\Sigma}}$.
Remark that the steady-state is independent of the driving amplitude $\Lambda$.

For the projector $\mathcal{R}$ we have \[\mathcal{R}(X_\textrm{TE}) = \Tr_E(X_\textrm{TE}) \otimes \bar{\rho}_E \qq*{,} \forall X_\textrm{TE}.\]
Equation~\eqref{eq:first_order_red} yields the following expression for the first-order reduced dynamics:
\begin{align*}
    \varepsilon \mathcal{L}_{s,1}(\rho_s) \otimes \bar{\rho}_E &= \varepsilon \mathcal{R}(\overline{\mathcal{L}_1(\rho_s \otimes \bar{\rho}_E)}) \nonumber\\
                                                   &= - i g \Tr_E(\comm{T_z \otimes \sigma_z}{\rho_s \otimes \bar{\rho}_E}) \otimes \bar{\rho}_E \nonumber\\
                                                   &= - i g x_{\alpha, \infty} \sin(\alpha) \comm{T_z}{\rho_s} \otimes \bar{\rho}_E,
\end{align*}
readily yielding
\begin{equation}
    \varepsilon \mathcal{L}_{s,1}(\rho_s) = - i g x_{\alpha, \infty} \sin(\alpha) \comm{T_z}{\rho_s}.
\end{equation}

Equation~\eqref{eq:primitivation} in turn yields
\begin{align}
    \varepsilon \mathcal{R} \mathcal{K}_1(\rho_s) &= \varepsilon \mathcal{R} \partial_t^{-1} \tilde{\mathcal{L}}_1 \mathcal{K}_0(\rho_s)\nonumber\\
                              &= - \frac{g}{\bestguess{\omega}{1}} \Tr_E(\comm{e^{i \bestguess{\omega}{1} t} T_- \otimes \sigma_+ - e^{- i \bestguess{\omega}{1} t} T_+ \otimes \sigma_-}{\rho_s \otimes \bar{\rho}_E}) \otimes \bar{\rho}_E\nonumber\\
                              &= - i x_{\alpha, \infty} \cos(\alpha) \frac{g}{2 \bestguess{\omega}{1}}\comm{i e^{i \bestguess{\omega}{1} t} T_- - i e^{- i \bestguess{\omega}{1} t} T_+}{\rho_s} \otimes \bar{\rho}_E,
\end{align}
where we have put the integration constant to zero as a gauge choice.
Equation \eqref{eq:first_order_pseudo_inverse} yields a second part of $\mathcal{K}_1$:
\begin{align}
    \varepsilon \mathcal{L}_0(1 - \mathcal{R}) \bar{\mathcal{K}}_1(\rho_s) &= - \varepsilon (1 - \mathcal{R}) \bar{\mathcal{L}}_1 \mathcal{K}_0(\rho_s)\nonumber\\
                                                                &= i g \comm{T_z \otimes \sigma_z}{\rho_s \otimes \bar{\rho}_E}
                                                                - i g \Tr_E(\comm{T_z \otimes \sigma_z}{\rho_s \otimes \bar{\rho}_E}) \otimes \bar\rho_E \nonumber\\
                                                                &= i g \qty(T_z \rho_s \otimes \bar{\sigma}_z \bar{\rho}_E - \rho_s T_z \otimes \bar{\rho}_E \bar{\sigma}_z),
\end{align}
with $\bar{\sigma}_z = \sigma_z - \Tr(\sigma_z \bar{\rho}_E)\mathbb{1}_E = \sigma_z - x_{\alpha, \infty} \sin{\left(\alpha \right)} \mathbb{1}_E$.
Remark that taking the partial trace over E of the right-hand side gives zero, since $\Tr(\bar{\sigma}_z \bar{\rho}_E) = 0$.
Hence $\mathcal{L}_0$ can be inverted to obtain, formally, 
\begin{equation*}
    \varepsilon (1 - \mathcal{R}) \bar{\mathcal{K}}_1(\rho_s) = i g \qty(T_z \rho_s \otimes \mathcal{L}_0^{-1}(\bar{\sigma}_z \bar{\rho}_E) - \rho_s T_z \otimes \mathcal{L}_0^{-1}(\bar{\rho}_E \bar{\sigma}_z)).
\end{equation*}
For this inversion we again use matrix representations in the Pauli basis.\\
In a rotated Pauli basis ($ \cos(\alpha) \sigma_z - \sin(\alpha) \sigma_x, \; \sigma_y,  \;\sigma_{\alpha x}$),
we obtain the following matrix representation for $\mathcal{L}_0$:
\begin{equation}
    \qty[\mathcal{L}_0] = \mqty(- \frac{\kappa_{\alpha_\Sigma}}{2} - 2 \kappa_{\alpha x} & - \Lambda & 0\\\Lambda & - \frac{\kappa_{\alpha_\Sigma}}{2} - 2 \kappa_{\alpha x} & 0\\0 & 0 & - \kappa_{\alpha_\Sigma}),
\end{equation}
with
\begin{equation}
    \mathrm{det}\qty[\mathcal{L}_0] = - \frac{\kappa_{\alpha_\Sigma} \left(4 \Lambda^{2} + \kappa_{\alpha_\Sigma}^{2} + 8 \kappa_{\alpha_\Sigma} \kappa_{\alpha x} + 16 \kappa_{\alpha x}^{2}\right)}{4}.
\end{equation}

For its inverse we obtain
\begin{equation*}
    \qty[\mathcal{L}_0^{-1}] = \frac{1}{\mathrm{det}\qty[\mathcal{L}_0]}\mqty(\frac{\kappa_{\alpha_\Sigma} \left(\kappa_{\alpha_\Sigma} + 4 \kappa_{\alpha x}\right)}{2} & - \Lambda \kappa_{\alpha_\Sigma} & 0\\\Lambda \kappa_{\alpha_\Sigma} & \frac{\kappa_{\alpha_\Sigma} \left(\kappa_{\alpha_\Sigma} + 4 \kappa_{\alpha x}\right)}{2} & 0\\0 & 0 & \Lambda^{2} + \frac{\kappa_{\alpha_\Sigma}^{2}}{4} + 2 \kappa_{\alpha_\Sigma} \kappa_{\alpha x} + 4 \kappa_{\alpha x}^{2}).
\end{equation*}
In turn, $\bar{\sigma}_z \bar{\rho}_E$ takes the following vector representation in the Pauli basis:
\begin{equation}
    \qty[\bar{\sigma}_z \bar{\rho}_E] = \mqty(- \frac{i \kappa_{\alpha_\Delta} \cos{\left(\alpha \right)}}{2 \kappa_{\alpha_\Sigma}}\\\frac{\cos{\left(\alpha \right)}}{2}\\\frac{\left(- \kappa_{\alpha_\Delta}^{2} + \kappa_{\alpha_\Sigma}^{2}\right) \sin{\left(\alpha \right)}}{2 \kappa_{\alpha_\Sigma}^{2}}\\0).
\end{equation}
Straightforward calculations then give
\begin{equation}
    \qty[\mathcal{L}_0^{-1}(\bar{\sigma}_z \bar{\rho}_E)] =
    \mqty(\frac{\left(2 \Lambda \kappa_{\alpha_\Sigma} + i \kappa_{\alpha_\Delta} \left(\kappa_{\alpha_\Sigma} + 4 \kappa_{\alpha x}\right)\right) \cos{\left(\alpha \right)}}{\kappa_{\alpha_\Sigma} \left(4 \Lambda^{2} + \kappa_{\alpha_\Sigma}^{2} + 8 \kappa_{\alpha_\Sigma} \kappa_{\alpha x} + 16 \kappa_{\alpha x}^{2}\right)}\\\frac{\left(2 i \Lambda \kappa_{\alpha_\Delta} - \kappa_{\alpha_\Sigma} \left(\kappa_{\alpha_\Sigma} + 4 \kappa_{\alpha x}\right)\right) \cos{\left(\alpha \right)}}{\kappa_{\alpha_\Sigma} \left(4 \Lambda^{2} + \kappa_{\alpha_\Sigma}^{2} + 8 \kappa_{\alpha_\Sigma} \kappa_{\alpha x} + 16 \kappa_{\alpha x}^{2}\right)}\\\frac{\left(\kappa_{\alpha_\Delta} - \kappa_{\alpha_\Sigma}\right) \left(\kappa_{\alpha_\Delta} + \kappa_{\alpha_\Sigma}\right) \sin{\left(\alpha \right)}}{2 \kappa_{\alpha_\Sigma}^{3}}).
\end{equation}

Focussing on the leading-order in $\frac{1}{\omega_2}$ yields the following:
\begin{equation}
    \qty[\mathcal{L}_0^{-1}(\bar{\sigma}_z \bar{\rho}_E)] = 
    \frac{1}{2 \omega_2} \mqty(1 \\ \frac{i \kappa_{\alpha_\Delta}}{\kappa_{\alpha_\Sigma}}\\\frac{\NDelta \left(\kappa_{\alpha_\Delta}^{2} - \kappa_{\alpha_\Sigma}^{2}\right)}{\kappa_{\alpha_\Sigma}^{3}}) + \mathcal{O}\qty(\frac{1}{\omega_2^2}),
\end{equation}
and further
\begin{equation*}
    \varepsilon (1 - \mathcal{R}) \bar{\mathcal{K}}_1(\rho_s) = i \frac{g}{\omega_2} \comm{T_z \otimes \bar{M}_z}{\rho_s \otimes \bar{\rho}_E},
\end{equation*}
with
\begin{equation}
    \qty[\bar{M}_z] = \mqty(1\\0\\- \frac{\NDelta}{\kappa_{\alpha_\Sigma}}\\- \frac{\NDelta \kappa_{\alpha_\Delta}}{\kappa_{\alpha_\Sigma}^{2}})
    + \mathcal{O}\qty(\frac{1}{\omega_2}).
\end{equation}

For the last part of $\mathcal{K}_1$, consider~\eqref{eq:new_pseudo_inverse}:
\begin{equation}\label{eq:laatste_K_ultra}
    \varepsilon (\mathcal{L}_0 - \partial_t) (1 - \mathcal{R}) \tilde{\mathcal{K}}_1(\rho_s) = i g (1 - \mathcal{R})
    \qty(\comm{e^{i \bestguess{\omega}{1} t} T_- \otimes \sigma_+ + e^{- i \bestguess{\omega}{1} t} T_+ \otimes \sigma_-}{\rho_s \otimes \bar{\rho}_E}).
\end{equation}
Introducing
\begin{align}
    \bar{\sigma}_+ &:= \sigma_+ - \Tr(\sigma_+ \bar{\rho}_E) \mathbb{1}_E = \sigma_+ + \frac{\kappa_{\alpha_\Delta} \cos{\left(\alpha \right)}}{2 \kappa_{\alpha_\Sigma}} \mathbb{1}_E,\nonumber \\
    \bar{\sigma}_- &:= \sigma_- - \Tr(\sigma_- \bar{\rho}_E) \mathbb{1}_E = \sigma_- + \frac{\kappa_{\alpha_\Delta} \cos{\left(\alpha \right)}}{2 \kappa_{\alpha_\Sigma}} \mathbb{1}_E,\nonumber
\end{align}
we can write the right-hand side of~\eqref{eq:laatste_K_ultra} as
\begin{equation*}
    i g \comm{e^{i \bestguess{\omega}{1} t} T_- \otimes \bar{\sigma}_+ + e^{- i \bestguess{\omega}{1} t} T_+ \otimes \bar{\sigma}_-}{\rho_s \otimes \bar{\rho}_E}
    =i g e^{i \bestguess{\omega}{1} t} T_- \rho_s \otimes \bar{\sigma}_+ \bar{\rho}_E + i g e^{- i \bestguess{\omega}{1} t} T_+ \rho_s \otimes \bar{\sigma}_- \bar{\rho}_E + \textrm{h.c.}
\end{equation*}
At this point we can split $(1 - \mathcal{R}) \tilde{\mathcal{K}}_1$ up into two parts:
\begin{align}
    \varepsilon (1 - \mathcal{R}) \tilde{\mathcal{K}}_1(\rho_s) &= i g e^{i \bestguess{\omega}{1} t} {\qty(\mathcal{L}_0 - i \bestguess{\omega}{1})}^{-1}\qty(T_- \rho_s \otimes \bar{\sigma}_+ \bar{\rho}_E)
                  + i g e^{-i \bestguess{\omega}{1} t} {\qty(\mathcal{L}_0 + i \bestguess{\omega}{1})}^{-1}\qty(T_+ \rho_s \otimes \bar{\sigma}_- \bar{\rho}_E) + \textrm{h.c.}  \nonumber\\
                  &= i g e^{i \bestguess{\omega}{1} t} T_- \rho_s \otimes {\qty(\mathcal{L}_0 - i \bestguess{\omega}{1})}^{-1}\qty(\bar{\sigma}_+ \bar{\rho}_E)
                  + i g e^{-i \bestguess{\omega}{1} t} T_+ \rho_s \otimes {\qty(\mathcal{L}_0 + i \bestguess{\omega}{1})}^{-1}\qty(\bar{\sigma}_- \bar{\rho}_E) + \textrm{h.c.}  \nonumber
\end{align}
We obtain the following matrix representations:
\begin{align*}
    \qty[\bar{\sigma}_+ \bar{\rho}_E] = \mqty(\frac{i \left(\kappa_{\alpha_\Delta} \sin{\left(\alpha \right)} + \kappa_{\alpha_\Sigma}\right)}{4 \kappa_{\alpha_\Sigma}}\\- \frac{\kappa_{\alpha_\Delta} + \kappa_{\alpha_\Sigma} \sin{\left(\alpha \right)}}{4 \kappa_{\alpha_\Sigma}}\\\frac{\left(- \kappa_{\alpha_\Delta}^{2} + \kappa_{\alpha_\Sigma}^{2}\right) \cos{\left(\alpha \right)}}{4 \kappa_{\alpha_\Sigma}^{2}}),\\
    \qty[\bar{\sigma}_- \bar{\rho}_E] = \mqty(\frac{i \left(\kappa_{\alpha_\Delta} \sin{\left(\alpha \right)} - \kappa_{\alpha_\Sigma}\right)}{4 \kappa_{\alpha_\Sigma}}\\\frac{\kappa_{\alpha_\Delta} - \kappa_{\alpha_\Sigma} \sin{\left(\alpha \right)}}{4 \kappa_{\alpha_\Sigma}}\\\frac{\left(- \kappa_{\alpha_\Delta}^{2} + \kappa_{\alpha_\Sigma}^{2}\right) \cos{\left(\alpha \right)}}{4 \kappa_{\alpha_\Sigma}^{2}}),
\end{align*}
and

    \begin{equation*}
        \hspace{-0.7cm} \qty[{\qty(\mathcal{L}_0 \mp i \bestguess{\omega}{1})}^{-1}] =
        \mqty(\frac{- 2 \kappa_{\alpha_\Sigma} - 8 \kappa_{\alpha x} \mp 4 i \bestguess{\omega}{1}}{4 \Lambda^{2} + \kappa_{\alpha_\Sigma}^{2} + 8 \kappa_{\alpha_\Sigma} \kappa_{\alpha x} \pm 4 i \kappa_{\alpha_\Sigma} \bestguess{\omega}{1} + 16 \kappa_{\alpha x}^{2} \pm 16 i \kappa_{\alpha x} \bestguess{\omega}{1} - 4 \bestguess{\omega}{1}^{2}} & \frac{4 \Lambda}{4 \Lambda^{2} + \kappa_{\alpha_\Sigma}^{2} + 8 \kappa_{\alpha_\Sigma} \kappa_{\alpha x} \pm 4 i \kappa_{\alpha_\Sigma} \bestguess{\omega}{1} + 16 \kappa_{\alpha x}^{2} \pm 16 i \kappa_{\alpha x} \bestguess{\omega}{1} - 4 \bestguess{\omega}{1}^{2}} & 0\\- \frac{4 \Lambda}{4 \Lambda^{2} + \kappa_{\alpha_\Sigma}^{2} + 8 \kappa_{\alpha_\Sigma} \kappa_{\alpha x} \pm 4 i \kappa_{\alpha_\Sigma} \bestguess{\omega}{1} + 16 \kappa_{\alpha x}^{2} \pm 16 i \kappa_{\alpha x} \bestguess{\omega}{1} - 4 \bestguess{\omega}{1}^{2}} & \frac{- 2 \kappa_{\alpha_\Sigma} - 8 \kappa_{\alpha x} \mp 4 i \bestguess{\omega}{1}}{4 \Lambda^{2} + \kappa_{\alpha_\Sigma}^{2} + 8 \kappa_{\alpha_\Sigma} \kappa_{\alpha x} \pm 4 i \kappa_{\alpha_\Sigma} \bestguess{\omega}{1} + 16 \kappa_{\alpha x}^{2} \pm 16 i \kappa_{\alpha x} \bestguess{\omega}{1} - 4 \bestguess{\omega}{1}^{2}} & 0\\0 & 0 & \frac{1}{- \kappa_{\alpha_\Sigma} \mp i \bestguess{\omega}{1}}).
    \end{equation*}

Again focussing on the leading-order in $\frac{1}{\omega_2}$, putting $\cos(\alpha)$  to $1$, and using $\Lambda \sin(\alpha) = \NDelta$, we obtain
\begin{align}
    {\qty(\mathcal{L}_0 - i \bestguess{\omega}{1})}^{-1}\qty(\bar{\sigma}_+ \bar{\rho}_E) = - \qty( \frac{1}{2 \omega_2}
    B^\dag + \frac{\cos(\alpha)}{2\qty(\kappa_{\alpha_\Sigma} + i \bestguess{\omega}{1})}
    \qty(\frac{\kappa_{\alpha_\Delta}}{\kappa_{\alpha_\Sigma}} \mathbb{1}_E + \sigma_{\alpha x})) \bar{\rho}_E,\\
    {\qty(\mathcal{L}_0 + i \bestguess{\omega}{1})}^{-1}\qty(\bar{\sigma}_- \bar{\rho}_E) = - \qty( \frac{1}{2 \omega_2}
    B + \frac{\cos(\alpha)}{2\qty(\kappa_{\alpha_\Sigma} - i \bestguess{\omega}{1})}
    \qty(\frac{\kappa_{\alpha_\Delta}}{\kappa_{\alpha_\Sigma}} \mathbb{1}_E + \sigma_{\alpha x})) \bar{\rho}_E,
\end{align}
with 
\begin{align}
    \qty[B] = \mqty(\frac{\NDelta - \frac{i \kappa_{\alpha_\Sigma}}{2} - 2 i \kappa_{\alpha x} - \bestguess{\omega}{1}}{\omega_2} + \mathcal{O}\qty(\frac{1}{\omega_2^2})\\- i + \mathcal{O}\qty(\frac{1}{\omega_2})\\0\\0).
\end{align}
Putting all this together, we can write
\begin{align*}
    \varepsilon (1 - \mathcal{R}) \tilde{\mathcal{K}}_1(\rho_s)
                    =& - i \frac{g}{2 \Lambda} \comm{e^{i \bestguess{\omega}{1} t} T_- \otimes B^\dag + e^{- i \bestguess{\omega}{1} t} T_+ \otimes B}{\rho_s \otimes \bar{\rho}_E}\\
                    &- i \frac{g \cos(\alpha)}{2\sqrt{\kappa_\Sigma^2 +
                    \bestguess{\omega}{1}^2}}\comm{\qty(\frac{\kappa_\Sigma - i
                    \bestguess{\omega}{1}}{\sqrt{\kappa_\Sigma^2 + \bestguess{\omega}{1}^2}} e^{i \bestguess{\omega}{1}
                    t} T_- + \frac{\kappa_\Sigma + i \bestguess{\omega}{1}}{\sqrt{\kappa_\Sigma^2 +
                    \bestguess{\omega}{1}^2}} e^{- i \bestguess{\omega}{1} t} T_+) \otimes
                    \qty(\frac{\kappa_{\alpha_\Delta}}{\kappa_{\alpha_\Sigma}} \mathbb{1}_E + \sigma_{\alpha x})}{\rho_s \otimes \bar{\rho}_E}.
\end{align*}

For the second-order reduced dynamics,~\eqref{eq:general_red_dyn} for $k = 2$ gives
\begin{equation}
    \mathcal{K}_0 \mathcal{L}_{s,2}(\rho_s) = \mathcal{L}_{s,2}(\rho_s) \otimes \bar{\rho}_E
    = \mathcal{R} \overline{\mathcal{L}_1 \mathcal{K}_1}(\rho_s) = \Tr_E(\overline{\mathcal{L}_1 \mathcal{K}_1}(\rho_s)) \otimes \bar{\rho}_E
\end{equation}
so 
\begin{equation}
    \mathcal{L}_{s,2}(\rho_s) = \Tr_E(\bar{\mathcal{L}}_1 (1 - \mathcal{R}) \bar{\mathcal{K}}_1(\rho_s))
                              +\Tr_E\qty(\overline{\tilde{\mathcal{L}}_1 \mathcal{R} \tilde{\mathcal{K}}_1(\rho_s)})
                              +\Tr_E\qty(\overline{\tilde{\mathcal{L}}_1 (1 - \mathcal{R})
                              \tilde{\mathcal{K}}_1(\rho_s)})\label{eq:ls_2_terms_ultra}.
\end{equation}

It is straightforward to verify that
\begin{equation}
    \Tr_E(\bar{\mathcal{L}}_1 (1 - \mathcal{R}) \bar{\mathcal{K}}_1(\rho_s)) = \frac{\bestguess{\omega}{1}^2}{\omega_2} \Tr(\sigma_z \bar{M}_z \bar{\rho}_E) \qty(T_z^2 \rho_s - T_z \rho_s T_z)
                                                                             + \frac{\bestguess{\omega}{1}^2}{\omega_2} \Tr(\sigma_z \bar{\rho}_E \bar{M}_z) \qty(\rho_s T_z^2 - T_z \rho_s T_z),
\end{equation}
and using \[\Tr(\sigma_z \bar{M}_z \bar{\rho}_E) =  \frac{i \kappa_{\alpha_\Delta}}{\kappa_{\alpha_\Sigma}} + \frac{\NDelta^{2} \left(\frac{\kappa_{\alpha_\Delta}^{2}}{\kappa_{\alpha_\Sigma}^{2}} - 1\right)}{\kappa_{\alpha_\Sigma} \omega_2}- \frac{\kappa_{\alpha_\Sigma}}{2 \omega_2} - \frac{2 \kappa_{\alpha x}}{\omega_2}+ \mathcal{O}\qty(\frac{1}{\omega_2^2}),\]
we obtain that
\begin{equation}
    \Tr_E(\bar{\mathcal{L}}_1 (1 - \mathcal{R}) \bar{\mathcal{K}}_1(\rho_s)) = \qty( 2 \frac{\NDelta^{2}
    \left(1 - \frac{\kappa_{\alpha_\Delta}^{2}}{\kappa_{\alpha_\Sigma}^{2}}\right)}{\kappa_{\alpha_\Sigma}} +
    \kappa_{\alpha_\Sigma} + 4 \kappa_{\alpha x})
    \frac{\bestguess{\omega}{1}^2}{\omega_2^2} \mathcal{D}_{T_z}(\rho_s)
                - i \frac{\kappa_{\alpha_\Delta} \bestguess{\omega}{1}^2}{\kappa_{\alpha_\Sigma} \omega_2} \comm{T_z^2}{\rho_s} + \mathcal{O}\qty(\frac{1}{\omega_2^3}).
\end{equation}

For the second term in equation~\eqref{eq:ls_2_terms_ultra} we obtain
\begin{equation}
    \Tr_E\qty(\overline{\tilde{\mathcal{L}}_1 \mathcal{R} \tilde{\mathcal{K}}_1(\rho_s)}) = 
    - i \cos(\alpha) \frac{\kappa_{\alpha_\Delta}}{2 \kappa_{\alpha_\Sigma}} \bestguess{\omega}{1} \Tr(\sigma_+ \bar{\rho}_E) \comm{T_-}{\comm{T_+}{\rho_s}} + i \cos(\alpha) \frac{\kappa_{\alpha_\Delta}}{2 \kappa_{\alpha_\Sigma} }\bestguess{\omega}{1} \Tr(\sigma_- \bar{\rho}_E) \comm{T_+}{\comm{T_-}{\rho_s}},
\end{equation}
and using $\Tr(\sigma_+ \bar{\rho}_E) = \Tr(\sigma_- \bar{\rho}_E) = - \frac{\kappa_{\alpha_\Delta} \cos{\left(\alpha \right)}}{2 \kappa_{\alpha_\Sigma}}$ we obtain that
\begin{equation}
    \Tr_E\qty(\overline{\tilde{\mathcal{L}}_1 \mathcal{R} \tilde{\mathcal{K}}_1(\rho_s)}) = 
    - i {\qty(\frac{\kappa_{\alpha_\Delta}}{\kappa_{\alpha_\Sigma}})}^2 \frac{\cos^2(\alpha)}{4} \bestguess{\omega}{1} \comm{\comm{T_+}{T_-}}{\rho_s}.
\end{equation}

For the third term in equation~\eqref{eq:ls_2_terms_ultra} we obtain
\begin{align}
    \Tr_E\qty(\overline{\tilde{\mathcal{L}}_1 (1 - \mathcal{R}) \tilde{\mathcal{K}}_1(\rho_s)})
     &= \frac{\bestguess{\omega}{1}^2}{2 \Lambda} \qty(c_+ \qty( T_+ \rho_s T_- - T_- T_+ \rho_s) + c_-^* \qty(T_- \rho_s T_+ - \rho_s T_+ T_-))\\
     &+ \frac{\bestguess{\omega}{1}^2}{2 \Lambda} \qty(c_- \qty(T_- \rho_s T_+ - T_+ T_- \rho_s) + c_+^* \qty(T_+ \rho_s T_- - \rho_s T_- T_+))\\
     &+ \frac{d \; \bestguess{\omega}{1}^2}{\kappa_\Sigma^2 + \bestguess{\omega}{1}^2} (\kappa_\Sigma + i \bestguess{\omega}{1}) \qty(T_- \rho_s T_+ - T_- T_+ \rho_s + T_+ \rho_s T_- - \rho_s T_+ T_-)\\
     &+ \frac{d \; \bestguess{\omega}{1}^2}{\kappa_\Sigma^2 + \bestguess{\omega}{1}^2} (\kappa_\Sigma - i \bestguess{\omega}{1}) \qty(T_+ \rho_s T_- - T_+ T_- \rho_s + T_- \rho_s T_+ - \rho_s T_- T_+),
\end{align}
with 
\begin{align}
    c_+ &= \Tr(\sigma_+ B \bar{\rho}_E),\\
    c_- &= \Tr(\sigma_- B^\dag \bar{\rho}_E),\\
    d   &= \Tr(\sigma_+ (\mathbb{1}_E + \sigma_{\alpha x}) \bar{\rho}_E) = \Tr(\sigma_- (\mathbb{1}_E + \sigma_{\alpha x}) \bar{\rho}_E)
         = \frac{1}{2} \qty(1 - \frac{\kappa_{\alpha_\Delta}^2}{\kappa_{\alpha_\Sigma}^2})\cos^2(\alpha).
\end{align}
Using
\begin{align}
    c_+ &= - \frac{i \kappa_{\alpha_\Delta}}{2 \kappa_{\alpha_\Sigma}} + \frac{ 4 i \NDelta + \kappa_{\alpha_\Sigma} + 4 \kappa_{\alpha x} - 2 i \bestguess{\omega}{1}}{4 \omega_2}
    + \mathcal{O}\qty(\frac{1}{\omega_2^2}),\\
    c_- &= - \frac{i \kappa_{\alpha_\Delta}}{2 \kappa_{\alpha_\Sigma}} + \frac{- 4 i \NDelta + \kappa_{\alpha_\Sigma} + 4 \kappa_{\alpha x} + 2 i \bestguess{\omega}{1}}{4 \omega_2}
    + \mathcal{O}\qty(\frac{1}{\omega_2^2}),
\end{align}
we readily obtain
\begin{align}
    \Tr_E\qty(\overline{\tilde{\mathcal{L}}_1 (1 - \mathcal{R}) \tilde{\mathcal{K}}_1(\rho_s)})
    &= \qty(\frac{\kappa_{\alpha_\Sigma} d \; \bestguess{\omega}{1}^2}{\kappa_{\alpha_\Sigma}^{2} +
    \bestguess{\omega}{1}^{2}} + \frac{\kappa_{\alpha_\Sigma} + 4 \kappa_{\alpha x} }{4
    \omega_2^2}\bestguess{\omega}{1}^2
            + \mathcal{O}\qty(\frac{1}{\omega_2^3}))
            \qty(\mathcal{D}_{T_-}(\rho_s) + \mathcal{D}_{T_+}(\rho_s))\\
    &+ i \qty(\frac{\bestguess{\omega}{1}^3 \, d}{2 \qty(\kappa_{\alpha_\Sigma}^{2} + \bestguess{\omega}{1}^{2})} + \mathcal{O}\qty(\frac{1}{\omega_2^2})) \comm{\comm{T_+}{T_-}}{\rho_s}\\
    &+ i \qty(\frac{\bestguess{\omega}{1}^2}{4 \omega_2} \frac{\kappa_{\alpha_\Delta}}{\kappa_{\alpha_\Sigma}} + \mathcal{O}\qty(\frac{1}{\omega_2^3})) \comm{T_+ T_- + T_- T_+}{\rho_s}.
\end{align}

Putting all of the calculations of this section together, we obtain the following second-order reduced model.
For the slow dynamics we obtain an explicit Lindbladian model, where we have kept the leading-order in
$\frac{1}{\omega_2}$ for every different type of term:
\begin{align}
    \mathcal{L}_{s,g}(\rho_s) &=  - i  \comm{\omega_{s,z,1} T_z +  \omega_{s,z,2} T_z^2 + \omega_{s,\textrm{c}} \comm{T_+}{T_-} + \omega_{s,\textrm{a}} (T_+ T_- + T_- T_+)}{\rho_s}\nonumber\\
    &+ \kappa_{s,z} \mathcal{D}_{T_z}(\rho_s) + \kappa_{s,\pm} \qty(\mathcal{D}_{T_-} + \mathcal{D}_{T_+}) +
    \mathcal{O}\qty(g \varepsilon^2) \label{eq:lindbladian_ultrastrong}
\end{align}
with 
\begin{align}
    \omega_{s,z,1} &= - \frac{\kappa_{\alpha_\Delta}}{\kappa_{\alpha_\Sigma}} \frac{g \NDelta}{\omega_2},\\
    \omega_{s,z,2} &= \frac{\kappa_{\alpha_\Delta} g ^2}{\kappa_{\alpha_\Sigma} \omega_2},\\
    \omega_{s,\textrm{c}} &= {\qty(\frac{\kappa_{\alpha_\Delta}}{\kappa_{\alpha_\Sigma}})}^2 \frac{g^2}{4 \bestguess{\omega}{1}}
                    - \frac{1}{4} \qty(1 - \frac{\kappa_{\alpha_\Delta}^2}{\kappa_{\alpha_\Sigma}^2}) \frac{\bestguess{\omega}{1} g^2}{\kappa_{\alpha_\Sigma}^{2} + \bestguess{\omega}{1}^{2}},\\
    \omega_{s,\textrm{a}} &= - \frac{\kappa_{\alpha_\Delta}}{\kappa_{\alpha_\Sigma}} \frac{g^2}{4 \omega_2},\\
    \kappa_{s,z} &= \qty(\frac{\NDelta^{2} \left(1 - \frac{\kappa_{\alpha_\Delta}^{2}}{\kappa_{\alpha_\Sigma}^{2}} \right)}{\kappa_{\alpha_\Sigma} \omega_2}
        + \frac{\kappa_{\alpha_\Sigma}}{2 \omega_2} + \frac{2 \kappa_{\alpha x}}{\omega_2}) \frac{2 g^2}{\omega_2},\\
    \kappa_{s,\pm} &= \frac{1}{2} \qty(1 - \frac{\kappa_{\alpha_\Delta}^2}{\kappa_{\alpha_\Sigma}^2}) \frac{\kappa_{\alpha_\Sigma} g^2}{\kappa_{\alpha_\Sigma}^{2} + \bestguess{\omega}{1}^{2}}
        + \frac{g^{2} \left(\kappa_{\alpha_\Sigma} + 4 \kappa_{\alpha x}\right)}{4 \omega_2^2} \, .
\end{align}
For the embedding of the slow subspace we obtain, up to second-order terms:
\begin{equation}\label{eq:kraus_map_ultra_strong}
    \mathcal{K}_{s,g}(\rho_s) = e^{- i H_g} (\rho_s \otimes \bar{\rho}_E) e^{i H_g} + \mathcal{O}(\varepsilon^2),
\end{equation}
with
\begin{align}
    H_g := \frac{\kappa_{\alpha_\Delta}}{\kappa_{\alpha_\Sigma}} \frac{g}{2 \bestguess{\omega}{1}} H_s \otimes \mathbb{1}_E
    + \frac{g}{2 \sqrt{\kappa_{\alpha_\Sigma}^{2} + \bestguess{\omega}{1}^{2}}} H_{s,\pm} \otimes \qty(\mathbb{1}_E + \sigma_{\alpha,x})
    - \frac{g}{\omega_2} T_z \otimes \bar{M}_z
    - \frac{g}{2 \omega_2} H_s \otimes \sigma_{y}.
\end{align}
Here we have defined
\begin{align}
    H_s &= i e^{- i \bestguess{\omega}{1} t} T_+ - i e^{i \bestguess{\omega}{1} t} T_-,\\
    H_{s,\pm} &= \frac{(\kappa_{\alpha_\Sigma} - i \bestguess{\omega}{1}) e^{i \bestguess{\omega}{1} t} T_- + (\kappa_{\alpha_\Sigma} + i \bestguess{\omega}{1}) e^{- i \bestguess{\omega}{1} t} T_+}{\sqrt{\kappa_{\alpha_\Sigma}^2 + \bestguess{\omega}{1}^2}},\\
    \bar{M}_z &= \sigma_{\alpha x} - \frac{\NDelta}{\kappa_{\alpha_\Sigma}} \qty(\frac{\kappa_{\alpha_\Delta}}{\kappa_{\alpha_\Sigma}} \mathbb{1}_E + \sigma_{\alpha x}).
\end{align}

It is easy to verify that $H_g$ is Hermitian, since $H_s$, $H_{s,\pm}$ and $\bar{M}_z$ are Hermitian, and hence $\mathcal{K}_g$ can be written as an entangling unitary up to $\mathcal{O}(\varepsilon^2)$. In particular, $\mathcal{K}_g$ is therefore a CPTP map up to $\mathcal{O}(\varepsilon^2)$ terms.

\subsubsection*{Discussion of Hamiltonian terms}

All the Hamiltonian terms are suppressed asymptotically for large
$\bestguess{\omega}{1}$ and $\omega_2$, although this was not explicitly part of our goal.
Note that in contrast to the case of strong driving where $\kappa_-$ dominates $\kappa_+$ for a cold bath, with ultrastrong driving we typically keep $\kappa_{\alpha -}$ and $\kappa_{\alpha +}$ of the same order. The resulting conclusions are consistent with the induced dissipations and Hamiltonians derived when both viewpoints hold, i.e. taking the limit of large $\omega_2$ in the expressions obtained with the dissipation model of strong driving on E.

\end{widetext}


\bibliographystyle{plain}
\bibliography{refs.bib}

\end{document}